\numberwithin{equation}{section}
\pgfplotsset{compat=1.16}
\newcommand{\setsymbol}[1]{\ensuremath{\mathbb{#1}}}%
\newcommand{\R}{\setsymbol{R}}%
\newcommand{\A}{\textsc{a}}
\newcommand{\B}{\textsc{b}}
\newcommand{\C}{\textsc{c}}
\newcommand{\D}{\textsc{d}}
\newtheorem{definition}{Definition}[section]
\newtheorem{lemma}[definition]{Lemma}
\newtheorem{theorem}[definition]{Theorem}
\newtheorem{proposition}[definition]{Proposition}
\newtheorem{corollary}[definition]{Corollary}
\newtheorem{example}[definition]{Example}
\newtheorem{remark}[definition]{Remark}
\DeclareMathOperator{\Ric}{Ric}
\title{Initial data sets with vanishing mass are contained in pp-wave spacetimes}
\author{Sven Hirsch}
\address{Columbia University, 2990 Broadway, New York NY 10027, USA}
\email{sven.hirsch@columbia.edu }
\author{Yiyue Zhang}
\address{Beijing Institute of Mathematical Sciences and Applications, Beijing, 101408, China}
\email{zhangyiyue@bimsa.cn}
\begin{document}

\begin{abstract}
In 1981, Schoen-Yau and Witten showed that in General Relativity both the total energy $E$ and the total mass $m$ of an initial data set modeling an isolated gravitational system are non-negative. 
Moreover, if $E=0$, the initial data set must be contained in Minkowski space.
In this paper, we show that if $m=0$, i.e. if $E$ equals the total momentum $|P|$, the initial data set must be contained in a pp-wave spacetime.
Our proof combines spinorial methods with spacetime harmonic functions and works in all dimensions.
Additionally, we find the decay rate threshold where the embedding has to be within Minkowski space and construct non-vacuum initial data sets with $m=0$ in the borderline case.
    As a consequence, this completely settles the rigidity of the spacetime positive mass theorem for spin manifolds. 
\end{abstract}

\maketitle

\section{Introduction}

In Special Relativity, it is well-known that a particle satisfies $E v= P$ where $E$ is the energy, $ v$ the velocity vector, and $ P$ the momentum vector of the particle.
Since nothing moves faster than the speed of light, that is $| v|\le1$, we immediately obtain $E\ge| P|$.
Moreover, we have $E=| P|\ne0$ if and only if $| v|=1$, i.e. when the particle moves at the speed of light.
In this case the particle is a photon (or possibly a gluon or graviton) and corresponds to radiation.
We show that an analogous result holds in General Relativity:

\begin{theorem}\label{T:main}
Let $(M^n,g,k)$, $n\ge3$, be a $C^{2,\alpha}$-asymptotically flat initial data set with decay rate $q\in (\tfrac {n-2}2,n-2]$ satisfying the dominant energy condition.
Suppose that $M^n$ is spin and that $E=|P|\ne0$.
Then $(M^n,g)$ isometrically embeds into a pp-wave spacetime $(\overline M^{n+1},\overline g)$ with second fundamental form $k$.
\end{theorem}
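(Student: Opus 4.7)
The plan is to combine Witten's spinorial positive-mass argument with the spacetime harmonic function machinery, and then read off the pp-wave form of the ambient metric. The first step is to apply the standard Witten identity for the Dirac--Witten operator acting on spacetime spinors restricted to $M$. With the usual hypersurface spin connection $\hat\nabla$ (coupling $k$ to the timelike normal via Clifford multiplication), for an asymptotically constant spinor $\psi_\infty$ one obtains the schematic identity
\[
(E - P(\psi_\infty))\,|\psi_\infty|^2 \;=\; \int_M \bigl(|\hat\nabla \psi|^2 + \mathcal{R}(\mu,J)(\psi,\psi)\bigr)\, dV_g,
\]
in which the bulk integrand is pointwise non-negative under the dominant energy condition. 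Picking $\psi_\infty$ so that $P(\psi_\infty) = |P|\cdot|\psi_\infty|^2$, i.e.\ a null spinor at infinity, and invoking $E=|P|$ forces $\hat\nabla\psi \equiv 0$ on $M$ together with pointwise saturation of DEC against $\psi$.

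Next I would convert this infinitesimal rigidity into a parallel null vector on the ambient. The bispinor $V^\alpha := \langle \psi, \gamma^\alpha \psi\rangle$, considered on $(\overline M, \overline g)$ along $M$, is null precisely because $\psi_\infty$ is a pure spinor of null type, and it is covariantly constant along $M$ as a consequence of $\hat\nabla\psi=0$. By the classical correspondence between parallel spinors and parallel forms on Lorentzian manifolds, $V$ extends (at least to a neighbourhood of $M$) to a covariantly constant null vector on the ambient spacetime, which is Brinkmann's defining condition for a pp-wave. To upgrade this local structure to a global embedding I would invoke a spacetime harmonic function $f$ on $M$ (in the sense of Hirsch--Kazaras--Khuri) with asymptotics tuned so that $\nabla f$ matches the spatial part of $V$ at infinity; the decay range $q \in ((n-2)/2, n-2]$ is exactly what allows such an $f$ to be constructed with the quantitative remainder needed to integrate the pp-wave structure out from infinity. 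Combining the level sets of $f$ with the $\hat\nabla$-parallel transport of a null $v$-coordinate should yield Brinkmann coordinates on $\overline M$ in which
\[
\overline g \;=\; 2\,du\,dv + H(u,x)\,du^2 + h_{ij}(x)\,dx^i dx^j,
\]
and the induced data on the $\{t=0\}$-slice are precisely $(g,k)$.

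The main obstacle is the global integration in the final step: one must show that the infinitesimal pp-wave structure globalises over all of $M$ without holonomy obstructions, and that the transverse metric $h_{ij}$ agrees with the standard Euclidean metric (or at worst depends only on the transverse coordinates). The vanishing of $\hat\nabla\psi$ in every tangent direction, together with the pointwise DEC saturation $(\mu - J(\cdot))|\psi|^2 \equiv 0$, should supply the integrability conditions that close the argument, while the lower bound $q > (n-2)/2$ ensures that the embedding is quantitatively controlled at infinity and that the wave profile $H(u,x)$ has decay consistent with $m=0$. A secondary difficulty is matching the second fundamental form of the constructed slice to the prescribed $k$; this should follow by uniqueness of the ambient pp-wave determined by $(g, k, \psi)$ once the null frame has been fixed at infinity.
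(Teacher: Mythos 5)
Your proposal has a circularity problem that the paper's proof is specifically designed to avoid. You write that the bispinor $V^\alpha = \langle\psi,\gamma^\alpha\psi\rangle$, "considered on $(\overline M,\overline g)$ along $M$", extends by the parallel-spinor/parallel-form correspondence to a covariantly constant null vector on the ambient spacetime, giving the Brinkmann condition. But the ambient $(\overline M,\overline g)$ is precisely what Theorem \ref{T:main} is asserting exists; the data $(M^n,g,k)$ are given abstractly, and there is no Lorentzian manifold to carry $V$ or to which the Brinkmann correspondence could be applied. The hard work of the paper is to \emph{build} $\overline g$: one first shows the level sets $\Sigma$ of the spacetime harmonic function $u$ are flat (not merely that parallel spinors exist in a schematic sense — the paper constructs explicit $\Sigma$-parallel spinors $|\nabla u|^{-1/2}\psi_1$ and derives parallel vector fields $Z^\beta$ trivializing $T\Sigma$), then computes exactly which Gauss--Codazzi obstructions survive (only $A_{\nu\alpha\beta}$ does), then defines the Killing development $\overline g = 2\,d\tau\,du+g$ and proves by direct computation that $\mu=-\tfrac12 f^{-2}\Delta_\Sigma F$, so the DEC becomes superharmonicity of $F$. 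Your plan lists the flatness of the transverse metric $h_{ij}$ and the matching of $k$ as open "obstacles"; in the paper these are not loose ends but the substantive content — the flatness is established via the spinorial argument, and the matching of $k$ is what the Gauss--Codazzi verification plus the Killing development achieve.

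A secondary gap: you assert that the range $q\in(\tfrac{n-2}{2},n-2]$ is "exactly what allows such an $f$ to be constructed with the quantitative remainder needed to integrate the pp-wave structure out from infinity," but this is not actually where that hypothesis does its work. That $q$-range is the standard asymptotic-flatness requirement for Witten's formula and the existence of the spacetime harmonic function; the construction of $\overline g$ does not proceed by integrating "out from infinity." What does depend delicately on the decay rate is whether the resulting pp-wave is trivial (Minkowski), and the paper treats that sharply in Theorem \ref{Cor:main} and Section \ref{S:PDE}, which is a separate argument. So while your overall intuition that rigidity should produce a parallel null structure is aligned with the end result, the chain of deductions you propose does not produce a proof: it presupposes the spacetime, omits the flatness-of-level-sets argument, and leaves the construction of Brinkmann coordinates and the verification of $\mathrm{II}=k$ unaddressed.
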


\begin{definition}\label{def: pp-wave spacetimes}
    We say a Lorentzian manifold $(\overline M^{n+1},\overline g)$ is a pp-wave spacetime, or $pp$-wave for short, if $\overline M^{n+1}=\mathbb R^{n+1}$ and
    \begin{align*}
    \overline g=-2du dt+Fdu^2+g_{\mathbb R^{n-1}}
\end{align*}
where $g_{\mathbb R^{n-1}}$ is the flat metric on $\mathbb R^{n-1}$, and $F$ is a function on $\mathbb R^n=\mathbb R^{n-1}\times {{\mathbb{R}_u}}$, independent of $t$, which is superharmonic on $\R^{n-1}\times\{u\}$ for all $u\in \R$, i.e. $\Delta_{\mathbb R^{n-1}} F\le0$, where we use the convention $\Delta=g^{ij}\nabla_{ij}$.
\end{definition}

Such pp-waves are explicit solutions to the Einstein equations and form the simplest model of a gravitational wave.
We provide a detailed overview in Section \ref{S:pp waves overview}.

Since the function $F$ describing a pp-wave is superharmonic, there is a conflict with the notion of asymptotic flatness which forces $F$ to decay sufficiently fast at infinity.
In our next result we show that if certain criteria are met, $F$ must be a constant function, in which case the corresponding pp-wave spacetime is Minkowski space.

\begin{theorem}\label{Cor:main}
Given $\alpha\in (0,1)$, $n\ge3$, and $s\in\mathbb N$ with $s\ge2$, let $(M^n,g,k)$ be a $C^{s,\alpha}$-asymptotically flat spin initial data set with decay rate $q$ which satisfying the dominant energy condition.
      Assume that either 
      \begin{align*}
   E=|P| \quad\quad \text{and}\quad\quad{q>n-1-s-\alpha},
      \end{align*}
      or 
      \begin{align*}
          E=0 \quad\quad\text{and}\quad\quad q>\tfrac{n-2}2.
      \end{align*}
         Then $(M^n,g,k)$ isometrically embeds into Minkowski spacetime with second fundamental form $k$.
\end{theorem}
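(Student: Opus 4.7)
The plan is to apply Theorem \ref{T:main} to embed $(M^n,g,k)$ into a pp-wave spacetime, and then use the refined decay/regularity hypotheses of Theorem \ref{Cor:main} to force the profile function $F$ in Definition \ref{def: pp-wave spacetimes} to be identically equal to $1$; since this makes the pp-wave flat, it is isometric to Minkowski.

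Consider first the case $E=|P|$ with $q>n-1-s-a$. Theorem \ref{T:main} provides an isometric embedding into a pp-wave with $F$ superharmonic on each slice $\R^{n-1}\times\{u\}$ and $F\to 1$ at infinity; the minimum principle then forces $F\ge 1$, so $G:=F-1$ is non-negative, superharmonic, and vanishes at infinity. My first step would be to convert the $C^{s,a}$-decay of $(g,k)$ at rate $q$ into a quantitative asymptotic expansion for $G$ in the transverse radial coordinate $\rho$ of the form
\begin{equation*}
G(\rho,\omega,u)=\sum_{j=0}^{s-1}\frac{f_j(\omega,u)}{\rho^{q+j}}+O(\rho^{-(q+s+a)}),
\end{equation*}
by applying weighted Schauder estimates to the elliptic equation $\Delta_\Sigma F=-2|\nabla u|^{-2}\mu$ together with control of the embedding $M\hookrightarrow\overline M$ via the level-set geometry of $u$. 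The condition $q>n-1-s-a$, rewritten as $q+s+a>n-1$, is exactly what is required so that the remainder decays strictly faster than the Newtonian fundamental solution $\rho^{-(n-3)}$ on $\R^{n-1}$ (for $n\ge 4$). Plugging the expansion into the superharmonicity inequality $-\Delta G\ge 0$ combined with $G\ge 0$ constrains the coefficients: each $f_j$ with $q+j\le n-3$ must vanish, since otherwise it would produce a leading term at infinity incompatible with the decay of $G$. Hence $G=o(\rho^{-(n-3)})$, and a Liouville step closes the argument: integrating $-\Delta G$ over $B_R\subset\R^{n-1}$ and letting $R\to\infty$ yields $\int-\Delta G=0$, so $G$ is harmonic; a non-negative harmonic function on $\R^{n-1}$ tending to $0$ at infinity is identically zero. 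For $n=3$ the same conclusion follows from the fact that a bounded non-negative superharmonic function on $\R^2$ which tends to $0$ at infinity must vanish identically. Thus $F\equiv 1$ and $\overline M$ is Minkowski.

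For the case $E=0$ with $q>(n-2)/2$, the inequalities $0\le|P|\le E=0$ give $|P|=0$, so Theorem \ref{T:main} still embeds the data into a pp-wave. The slower decay here rules out the expansion argument above, but vanishing energy affords much stronger rigidity: applying the Witten--Sen spinor integral identity underlying Theorem \ref{T:main} to each of $n$ linearly independent asymptotic spinor values $\psi_\infty$ produces, because of $E=0$, $n$ pointwise linearly independent parallel spinors on $(M,g)$. The classical construction of Beig--Chru\'sciel and Huang--Lee then reads off an isometric embedding of $(M,g,k)$ into Minkowski with second fundamental form $k$, so in particular $F\equiv 1$ in the pp-wave embedding.

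The main technical obstacle is the first step of the $E=|P|$ case: producing the quantitative asymptotic expansion of $F$ in pp-wave coordinates at the precise order prescribed by $s$, $a$, and $q$. This requires simultaneously controlling the level-set geometry of $u$ (whose gradient has asymptotic magnitude $1$ in the saturation case $E=|P|$) and deriving weighted Schauder bounds for $F$ via the embedding, at an order fine enough to read off the subleading terms that superharmonicity and non-negativity must kill. Once this expansion is in hand, the sign and Liouville steps are essentially routine.
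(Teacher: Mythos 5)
The $E=0$ branch of your proposal is fine: producing spacetime-parallel spinors asymptotic to a full basis of constant spinors and invoking the Beig--Chru\'sciel / Huang--Lee construction is a legitimate route. (The paper instead runs Theorem \ref{T:spacetime harmonic} for each coordinate direction at infinity and reads off the vanishing of all Gauss--Codazzi terms; both are fine, and yours is arguably the more classical route.)

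The $E=|P|$ branch has a genuine gap, and it is exactly at the step you flag as the ``main technical obstacle.'' First, the claimed expansion
$G=\sum_{j=0}^{s-1}f_j(\omega,u)\rho^{-(q+j)}+O(\rho^{-(q+s+a)})$
is not something weighted Schauder estimates can deliver. The source $\Delta_\Sigma F=-2|\nabla u|^{-2}\mu$ is only $L^1$, with no pointwise decay or angular structure assumed beyond what asymptotic flatness gives; Schauder gives you decay bounds, not an expansion in fractional powers $\rho^{-(q+j)}$. Indeed the paper's Example \ref{Example} at the borderline $q=n-3-a$ has leading behaviour $c(u)\rho^{3-n}$, and $3-n$ is not of the form $-(q+j)$ for integer $j$ when $a\notin\mathbb{Z}$, so the proposed expansion is simply false there.

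Second, and more importantly, even granting some expansion, your argument never rules out the Newtonian-order term. What superharmonicity, nonnegativity, $F\to1$, and $\Delta_\Sigma F\in L^1$ actually force (after spherical averaging, which the paper uses) is
$F-1=c(u)\,\rho^{3-n}+o(\rho^{3-n})$, $c(u)\ge 0$, $c\in L^1(\R)$.
A nonvanishing $c(u)$ at precisely this order is compatible with $G\ge 0$, with $\Delta_\Sigma G\le 0$, with $\Delta_\Sigma G\in L^1$, and with $G\in C^{s-2,a}_{-q}$ as long as $q\le n-3$; this is exactly what Example \ref{Example} exhibits. Your sentence ``each $f_j$ with $q+j\le n-3$ must vanish, since otherwise it would produce a leading term at infinity incompatible with the decay of $G$'' is circular: the decay of $G$ is only $O(\rho^{-q})$, which does tolerate such a term, and the $o(\rho^{3-n})$ decay is what you are trying to prove, not a given. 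What actually kills $c(u)$ in the paper is a finite-difference/Lagrange-interpolation argument in the $u$-variable: since $c\in L^1$ is nonnegative and somewhere positive, one can pick $s+1$ sample points $u_0,\dots,u_s$ with suitable spacing and size relations so that the $(s-2)$-th divided difference plus a H\"older quotient of $F(\cdot,\rho)-1$ in $u$ stays of order $c(u_0)\rho^{3-n}$; meanwhile $(F-1)\in C^{s-2,a}_{-q}(M^n)$ forces this quantity to be $O(\rho^{-q-(s-2)-a})$, and $q>n-1-s-a$ is precisely $q+(s-2)+a>n-3$, giving the contradiction. Your proposal never uses the $u$-direction H\"older regularity in this essential way, which is why the threshold $n-1-s-a$ never actually enters your argument (you only use the much weaker $q+s+a>n-3$ for the remainder), and why it cannot close the case $\tfrac{n-2}{2}<q\le n-3$, $n\ge 5$, which is the entire content beyond the easy $q>n-3$ maximum-principle case.
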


This has been previously established under various additional assumptions by P.~F.~Yip \cite{Yip}, R.~Beig and P.~Chrusciel \cite{BeigChrusciel}, P.~Chrusciel and D.~Maerten \cite{ChruscielMaerten}, L.-H.~Huang and D.~Lee \cite{HuangLee, HuangLee2}, D.~Kazaras, M.~Khuri and the first named author \cite{HKK}, as well as the authors \cite{HirschZhang}, cf. Remark \ref{rem:previous results}.
We emphasize that in our result no additional decay or regularity assumptions on $\mu,J,g,k$ are made, and that it holds in all dimensions.

The conditions on the decay rate parameter $q$ in Theorem \ref{Cor:main} are optimal.
The inequality $q>\tfrac{n-2}2$ ensures that $E$ and $P$ are well-defined, and $q>n-1-s-\alpha$ is required to rule out non-trivial pp-wave spacetimes:

\begin{theorem}\label{Thm:Example} Let $n\ge3$, $\alpha\in(0,1)$ and suppose that $q:=n-3-\alpha> \frac{n-2}{2}$.
    Then there exists $C^{2,\alpha}$-asymptotically flat initial data set $(M^n,g,k)$ with decay rate $q$ which satisfies $E=|P|\ne0$ and the dominant energy condition.
Moreover, $(M^n,g,k)$ isometrically embeds into a non-trivial pp-wave spacetime with second fundamental form $k$. 
\end{theorem} 

See Section \ref{S:example} for the explicit construction of such initial data sets.
Intriguingly, the asymptotic flatness condition $q> \frac{n-2}{2}$ can only be fulfilled for $n\ge5$, and there are no such initial data sets for $n=3$ and $n=4$.
We point out that in the pioneering work \cite{HuangLee3} by L.-H.~Huang and D.~Lee similar examples were constructed in dimension $n\ge9$.

The proof of Theorem \ref{T:main} combines spinorial and level-set methods.
First, we construct a spacetime harmonic function $u$ where $E\nabla u$ asymptotes to $-P$.
Our goal is to show that the level-sets $\Sigma$ of $u$ are flat which will become the \enquote{planes} of the pp-wave spacetime.
Next, we solve the spacetime Dirac equation $\slashed D\psi=\tfrac12\operatorname{tr} k e_0\psi$ to find a spinor $\psi=(\psi_1,\psi_2)$ which asymptotes to a constant spinor at infinity.
With the help of $u$ and $\psi$, we define in odd dimensions the vector fields
\begin{align*}
    X=\langle e_ie_0 \psi,\psi\rangle e_i,\quad\quad\text{and}\quad\quad Z=|\nabla u|^{-2}\operatorname{Im}\langle e_i(\nabla u) \psi_1,\psi_1\rangle e_i
\end{align*}
where $Z$ is tangential to the level-sets $\Sigma$, and where $\{e_1,\cdots,e_n\}$ is an orthonormal frame of $M^n$.
Using Witten's mass formula, we obtain $X=\nabla u$ and $\nabla^\Sigma Z=0$ which allows us to demonstrate that $\Sigma$ is flat.
A similar argument also works in even dimensions.
Finally, with a delicate analysis involving $u$ and $\psi$, we are able to harness this information to construct the corresponding pp-wave spacetime which will be carried out in Section \ref{S:verification} and Section \ref{S:Killing}.

In Section \ref{S:PDE} we prove Theorem \ref{Cor:main}, and in Section \ref{S:example} we show Theorem \ref{Thm:Example}.
Both proofs are very technical and depend on the precise asymptotics and the underlying PDE for $F$ in a subtle way.

\textbf{Acknowledgements:} SH was supported by the National Science Foundation under Grant No. DMS-1926686, and by the IAS School of Mathematics. YZ was partially supported by NSFC grant NO.12501070 and the startup fund from BIMSA.  The authors thank Hubert~Bray, Simon~Brendle, Piotr~Chru\'sciel, Greg Galloway, Lan-Hsun~Huang, Hyun Chul Jang, Marcus~Khuri, Dan~Lee, and Rick~Schoen for helpful discussions and their interest in this work.
The authors are also grateful to two anonymous referees whose suggestions led to various improvements.


\section{Preliminaries and Definitions}
For the sake of completeness, we recall several definitions regarding asymptotically flat manifolds which will be used in the remainder of the paper.
Moreover, we review Witten's proof of the positive mass theorem.

\subsection{Asymptotically flat manifolds}

{Let $(\overline M^{n+1},\overline g)$ be a Lorentzian manifold, let $(M^n,g,k)$ be a smooth spacelike hypersurface in $\overline M^{n+1}$, and let $(\Sigma^{n-1},g_{\Sigma},h)$ be a hypersurface in $M^n$. Throughout the text, we use the following convention for indices:
\begin{itemize}
    \item Greek indices $\alpha,\beta,\gamma,\dots$ for tangent vectors in $\overline M^{n+1}$.
    \item Latin indices $i,j,k,\dots$ for $M^n$, and we denote the normal of $M^n\subseteq \overline{M}^{n+1}$ with $e_0$.
    \item Small Latin caps $\textsc{a,b,c},\dots$ for $\Sigma^{n-1}$, and we denote the normal of $\Sigma^{n-1}\subseteq M^n$ with ${\hat{\mathbf{n}}}$.
\end{itemize}}

\begin{definition}[Weighted function spaces] \label{Def:weighted}
{Let $B\subseteq \mathbb R^n$ be a ball containing the origin. }
For  $\alpha\in(0,1)$, $s\in \mathbb N$, $p\in[1,\infty)$ and $q\in \R$, the corresponding weighted H\"older norm for functions, tensors and spinors is defined by
    \begin{equation*}
        \|f\|_{C^{s,\alpha}_{-q}(\R^n\setminus B)}:=
    \sum_{|I|\le s}\left||x|^{|I|+q}\nabla_{I} f\right|+\sum_{|I|=s} \sup_{\substack{x,y\in \R^n\setminus B 
    \\ |x-y|\le \frac{|x|}{2}}}|x|^{\alpha+k+q}\frac{|\nabla_I f(x)-\nabla_I f(y)|}{|x-y|^\alpha}.
    \end{equation*}
Similarly, the weighted Sobolev norm is given by
    \begin{equation*}
        \|f\|_{W^{s,p}_{-q}(\R^n\setminus B)}=\left(\int_{\R^n\setminus B } \sum_{|I|\le s}\left||x|^{|I|+q}|\nabla_I f| 
\right|^p|x|^{-n}dx\right)^\frac{1}{p}.
    \end{equation*}
\end{definition}

\begin{definition}[Asymptotically flat initial data sets] \label{Def:AF}
Let $(M^n,g)$ be a connected complete Riemannian manifold without boundary, let $k$ be a symmetric $2$-tensor on $M^n$, and let $s\in\mathbb N$ with $s\ge 2$.
We say $(M^n, g, k)$ is a $C^{s,\alpha}$-asymptotically flat initial data set of decay rate $q\in (\tfrac {n-2}2,n-2]$ if it satisfies the following conditions:  

There is a compact set $\mathcal{C}\subset M$ such that we can write $M\setminus \mathcal{C}=\cup_{\ell=1}^{\ell_0}M_{end}^{\ell}$ where the ends $M_{end}^\ell$ are pairwise disjoint and admit diffeomorphisms $\phi^\ell$ to the complement $\mathbb{R}^n \setminus B$ of a ball.
Moreover, on each end, we have
\begin{equation}\label{asymflat}
(\phi_\ast^\ell g- g_{\mathbb R^n},\phi_\ast^\ell k)\in C^{s,\alpha}_{-q}{(\mathbb R^n\setminus B)}\times C^{s-1,\alpha}_{-q-1}{(\mathbb R^n\setminus B)}, 
\end{equation}
where $g_{\R^n}$ is the Euclidean metric. 
The energy density $\mu$ and the momentum density $J$ satisfy
\begin{equation*}
    \begin{split}
        \mu:=&\frac{1}{2}\left( R+(\operatorname{tr}_gk)^2-|k|_g^2\right) \in L^1(M^n),
        \\ J:=&\operatorname{div}(k-(\operatorname{tr}_gk)g) \in L^1(M^n).
    \end{split}
\end{equation*}
\end{definition}

We sometimes also say that $(M^n,g,k)$ is $C^{s,\alpha}_{-q}$-asymptotically flat.
The definition of weighted function spaces from Definition \ref{Def:weighted} extends to asymptotically flat initial data sets with the help of the diffeomorphisms $\phi^\ell$.

\begin{remark}
For notational convenience, we do not allow $(M^n,g,k)$ to have an interior boundary $\partial M^n$. However, it is easy to see that all our results also hold true for asymptotically flat manifolds with boundaries as long as the spinorial positive mass theorem is still valid. 
In particular, we can allow interior MOTS and MITS boundaries with $g|_{\partial M^n}\in C^{1,\alpha}$ and $k|_{\partial M^n}\in C^{0,\alpha}$.
We describe the minor adjustments necessary in Section \ref{SS:boundary}.
\end{remark}

\begin{definition}[Dominant energy condition]
  We say $(M,g,k)$ satisfies the \emph{dominant energy condition} if $\mu\ge |J|$.
\end{definition}

\begin{definition}[ADM energy, momentum and mass]
    The ADM energy $E$ and the ADM momentum $P$ are defined as by
    \begin{equation}
    \begin{split}
E=&\frac{1}{2(n-1)\omega_{n-1}}\lim_{r\to \infty}\int_{|x|=r}(g_{ij,j}-g_{ii,j}){\hat{\mathbf{n}}}^j dA,
\\ P_i=& \frac{1}{(n-1)\omega_{n-1}}\lim_{r\to \infty}\int_{|x|=r}(k_{ij}-(\operatorname{tr}_g k)g_{ij}){\hat{\mathbf{n}}}^j dA,
    \end{split}
    \end{equation}
    where $\omega_{n-1}$ is the volume of a unit $n-1$ dimensional sphere, and ${\hat{\mathbf{n}}}$ is the outer unit normal to the sphere $\{|x|=r|\}$.
    Moreover, in case $E\ge|P|$, the ADM mass is defined by $m=\sqrt{E^2-|P|^2}$.
\end{definition}

\subsection{Witten's proof of the positive mass theorem}

Let $(M^n,g,k)$ be an asymptotically flat initial data set which is spin.
We denote with $\mathcal S$ the spinor bundle of $M^n$, with $\nabla$ the induced connection, and with $\slashed D=e_i\nabla_i$ the Dirac operator.
\begin{definition}
   We say that $\overline{\mathcal S}=\mathcal S\oplus \mathcal S$ is the spacetime spinor bundle. Given $\phi=(\phi_1,\phi_2)\in \overline{\mathcal S}$, $\R^{n,1}=\operatorname{span}\{e_0,e_1,\dots,e_n\}$ acts on $\overline{\mathcal S}$ via Clifford multiplication 
   \begin{align}
    e_0(\phi_1,\phi_2)=(\phi_2,\phi_1)\quad\quad\text{and}\quad\quad e_i(\phi_1,\phi_2)=(e_i\phi_1,-e_i\phi_2).
   \end{align}
   The corresponding connection and Dirac operator will still be denoted with $\nabla, \slashed D$.
\end{definition}

\begin{theorem}\label{T:Witten}
Let $M^n$ be spin and let $(M^n,g,k)$ be a $C^{s,\alpha}$-asymptotically flat initial data set of order $q\in(\tfrac{n-2}{2},n-2]$ satisfying the dominant energy condition.
Then for every constant spinor $\psi^\infty$ in the designated end, there exists a spinor field $\psi\in\overline{\mathcal S}(M^n)$ satisfying $\slashed D\psi=\frac12\operatorname{tr}_g(k)e_0\psi$ such that $(\psi-\psi^\infty)\in  C^{s,\alpha}_{-q}$ for any $\alpha\in(0,1)$ in the designated end, and $\psi\in C^{s,\alpha}_{-q}$ in all other ends.
Moreover,
\begin{align*}
    E|\psi^\infty|^2+\langle \psi^\infty,Pe_0\psi^\infty\rangle=\frac{2}{(n-1)\omega_{n-1}}\int_{M^n}\left(\left|\nabla \psi +\frac{1}{2}k_{\cdot j}e_je_0\psi\right|^2+\frac12\mu|\psi|^2+\frac12\langle \psi, J e_0\psi\rangle\right)dV.
\end{align*}
\end{theorem}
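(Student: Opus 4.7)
The plan is to follow Witten's spinorial approach in its spacetime formulation. First, introduce Witten's modified connection on the spacetime spinor bundle,
\begin{equation*}
\widehat{\nabla}_X \psi := \nabla_X \psi + \tfrac{1}{2}\, k(X,e_j)\, e_j e_0 \psi,
\end{equation*}
whose associated Dirac operator is
\begin{equation*}
\widehat{\slashed D}\psi = e_i\widehat{\nabla}_i\psi = \slashed D\psi - \tfrac{1}{2}\operatorname{tr}_g(k)\, e_0\psi,
\end{equation*}
after using $e_i e_j + e_j e_i = -2\delta_{ij}$ and the symmetry of $k$. Thus the Dirac equation in the statement is equivalent to $\widehat{\slashed D}\psi = 0$, and the integrand $|\nabla\psi + \tfrac{1}{2}k_{\cdot j}e_je_0\psi|^2$ appearing in the claimed identity is exactly $|\widehat{\nabla}\psi|^2$.

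The key analytical input is the Sen--Witten--Lichnerowicz formula, which I would derive by a pointwise Clifford computation combining the spin curvature of $\nabla$ with the Gauss--Codazzi relations:
\begin{equation*}
|\widehat{\slashed D}\psi|^2 - |\widehat{\nabla}\psi|^2 \;=\; \tfrac12\mu|\psi|^2 + \tfrac12\langle\psi, Je_0\psi\rangle + \operatorname{div}(V_\psi),
\end{equation*}
where $V_\psi$ is an explicit vector field quadratic in $\psi$ and $\nabla\psi$. Integrating over a large coordinate ball in the designated end, the boundary flux of $V_\psi$ can be expanded against $\psi^\infty$: the terms linear in $g-g_{\mathbb R^n}$ reproduce the ADM integrand for $E$, the terms linear in $k$ reproduce that for $P$, and the quadratic error terms decay like $|x|^{-2q-2}$, which, when integrated over a sphere of area $|x|^{n-1}$, tends to $0$ since $2q>n-3$. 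This yields the stated identity for any $\psi$ solving $\widehat{\slashed D}\psi=0$ with the prescribed asymptotics.

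For existence, write $\psi = \chi\psi^\infty + \phi$ with $\chi$ a smooth cutoff equal to $1$ outside a compact set in the designated end and $0$ on all other ends, and reduce the problem to finding $\phi\in W^{1,2}_{-q}(M^n)$ solving $\widehat{\slashed D}\phi = -\widehat{\slashed D}(\chi\psi^\infty)\in L^2_{-q-1}(M^n)$. The operator $\widehat{\slashed D}\colon W^{1,2}_{-q}\to L^2_{-q-1}$ is formally self-adjoint and Fredholm in the asymptotically flat setting by standard weighted elliptic theory. Its kernel is trivial: any $\phi$ in it has vanishing boundary term in the Sen--Witten--Lichnerowicz identity because of its decay, and the dominant energy condition $\mu\ge|J|$ then forces $\int_{M^n}|\widehat{\nabla}\phi|^2\le 0$, hence $\widehat{\nabla}\phi=0$ and $\phi\equiv 0$ by the asymptotics. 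Self-adjointness then yields surjectivity, producing $\psi$, and weighted Schauder estimates upgrade the decay to $C^{s,a}_{-q}$ on each end.

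The main obstacle is the sharp threshold $q>\tfrac{n-2}{2}$. At this borderline decay $|\widehat{\nabla}\psi|^2\sim|x|^{-2q-2}$ is only marginally integrable on $\mathbb R^n$, and the boundary flux computation sits exactly at the classical convergence criterion for the ADM integrals. Rigorously justifying the passage to the limit $r\to\infty$ requires a careful cutoff and approximation argument, first proving the identity for artificially improved decay and then extending via $\mu,J\in L^1$ and density of smooth compactly supported spinors. This step, while standard, is where the full weighted analysis in the optimal range $q\in(\tfrac{n-2}{2},n-2]$ concentrates.
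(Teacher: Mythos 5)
Your proposal follows the same standard Witten spinorial argument (modified connection $\widehat{\nabla}$, Lichnerowicz--Sen--Schr\"odinger identity, weighted Fredholm theory, DEC-forced triviality of the kernel, and flux evaluation against $\psi^\infty$) that the paper invokes by citing Lee's book for the existence and the integral formula, so the approach is essentially identical. The one place the paper adds its own detail---upgrading $\psi\in W^{1,2}_{-q}$ to $C^{s,a}_{-q}$ by writing the first-order elliptic system in coordinates, applying weighted Calder\'on--Zygmund estimates to reach $W^{1,p}_{-q}$ for all $p$, bootstrapping to $W^{2,p}_{-q}$, then Sobolev embedding, then Schauder---is compressed in your last clause to ``weighted Schauder estimates,'' which would not apply directly to a spinor only known to be in $W^{1,2}$, but the intended chain is clear and correct.
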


\begin{proof}
        The proof of the integral formula and existence of $\psi\in W^{1,2}_{-q}$ is well-established, see for instance \cite[Corollary 8.26]{Lee} and \cite[Proposition 8.21]{Lee}.
    Since the regularity $(\psi-\psi^\infty)\in C^{s,\alpha}_{-q}$ is often not stated precisely in the literature, we give a brief sketch of an argument here.
    The spacetime Dirac operator $\slashed D-\frac12\operatorname{tr}_g(k)e_0$ is a linear elliptic system of first order. 
    Hence, we may write the equation in local coordinates and apply the Calderon-Zygmund estimates for elliptic systems, see for instance \cite[Theorem 6.2.5]{Morrey}.
    This yields $W^{1,p}_{-q}$ regularity for any $p>1$ which can be bootstrapped to $W^{2,p}_{-q}$. 
    Using Sobolev embedding and Schauder estimates for elliptic systems, finishes the proof.
\end{proof}

\section{Gravitational waves}\label{S:pp waves overview}

Gravitational waves are one of the most striking predictions of General Relativity. 
While the wave-like nature of the Einstein equations\footnote{In the sense that their linearization at Minkowski space is the wave equation.} was already discovered in 1916 by A.~Einstein himself \cite{Einstein1, Einstein2}, it took another century till LIGO \cite{LIGO} was able to experimentally observe gravitational waves in 2016.
The simplest explicit example of such a gravitational wave is a pp-wave spacetime $(\overline M^{n+1},\overline g)$, which is short for a plane-fronted wave with parallel rays.
This goes back to H.~Brinkmann's seminal (though purely mathematical) work from 1925 \cite{Brinkmann1925}.
Recall from the introduction that in this case
$\overline M^{n+1}=\mathbb R^{n+1}$ and
    \begin{align*}
    \overline g=-2du dt+Fdu^2+g_{\mathbb R^{n-1}},
\end{align*}
where the \emph{wave profile function} $F:\mathbb R^n=\mathbb R^{n-1}\times {{\mathbb{R}_u}}\to\mathbb R$ is independent of $t$ and superharmonic on $\R^{n-1}\times\{u\}$ for all $u\in \R$.
They play a crucial role in physics, serving as fundamental models for gravitational radiation.
A certain kind of pp-waves, called \emph{plane wave spacetimes}, appears as \enquote{tangent spaces} of null geodesics in any Lorentzian manifolds; see \cite{Blau2, Blau3, Penrose} for a discussion of such \emph{Penrose limits}.

\subsection{Examples of pp-waves}

As mentioned in the introduction $F\equiv 1$ corresponds to Minkowski space. 
In fact, a more general statement holds:

\begin{example}
Let $F=F(u)$ be a function depending only on $u$. Then
\begin{align*}
    \overline g=-2dudt+Fdu^2+g_{\mathbb R^{n-1}}=du(-2dt+Fdu)+\sum_{\A=1}^{n-1}y_\A^2=dudv+\sum_{\A=1}^{n-1}y_\A^2,
\end{align*}
where $v=-2t+\int_0^uF(s)ds$.
Hence, the corresponding pp-wave is Minkowski space.
\end{example}

The following important example is due to L.-H.~Huang and D.~Lee \cite[Chapter 2]{HuangLee3}:

\begin{example}\label{Ex HL}
    Let $F(\mathbf{y},u)=1+\kappa(\mathbf{y})\eta(u)$, where 
    {{$\mathbf{y}=(y_1,\dots, y_{n-1})\in \R^{n-1}$,}} 
    $\kappa:\mathbb R^{n-1}\to\mathbb R$ is a fixed positive superharmonic function such that $\kappa=\mathcal O(|\mathbf{y}|^{-(n-3)})$ near $\infty$, and let $\eta:\mathbb R\to\mathbb R$ be a cutoff function, i.e.,  ${\operatorname{supp}(\eta)\subseteq [-1,1]}$. 
    Then $\Delta_{\mathbb R^{n-1}}F\le0 $ for each $u$, and $F$ gives rise to a pp-wave spacetime.
    Moreover, the $t=0$ slice of this spacetime is asymptotically flat for $n\ge9$.
\end{example}

Here $(\kappa-1)=\mathcal O(|\mathbf{y}|^{-(n-3)})$ denotes that $(\kappa-1)$ decays like $|\mathbf{y}|^{-(n-3)}$ as $|\mathbf{y}|\to\infty$. {Later we will also need some more precise notions to differentiate between different types of decay which will be introduced in Section \ref{S:Killing}.}

To see where the dimensional restriction comes from, note that $(F-1)=\mathcal O(|\mathbf{y}|^{-(n-3)})$.
Taking derivatives in the  $u$-direction, we also obtain $\partial_u\partial_u F=\mathcal O(|\mathbf{y}|^{-(n-3)})$ (note that there is no decay improvement).
However, asymptotic flatness requires $(g-\delta)=\mathcal O(|\mathbf{y}|^{-q})$ and $\partial_u\partial_u g\in \mathcal O(|\mathbf{y}|^{-q-2})$ for $q>\frac{n-2}2$. 
Comparing exponents, we find that $n\ge9$ is needed to ensure $F$ is non-trivial.

When the above initial data sets are asymptotically flat, their energy and momentum are well-defined and we have $E=|P|\ne0$.
Hence, Huang-Lee's spacetime gives a striking counterexample to the previous widely believed statement that an IDS with zero mass (i.e. $E=|P|$) must be contained in Minkowski space. 
The identity $E=|P|$ can be verified using spinorial methods which will be carried out in a more general setting below.

In Figure \ref{f1}, we have visualized the pp-wave from Example \ref{Ex HL}; also see \cite[Figure 1]{HirschJangZhang} by H.J.~Jang and the authors for a similar depiction.

\begin{figure}
\centering

\begin{tikzpicture}
    \begin{axis}[
        axis lines = middle,
        xlabel = {\(x_1, \dots, x_{n-1}\)},
        ylabel = {\(x_n\)},
        zlabel = {\(t\)},
        xlabel style = {anchor=north east, xshift=65pt, yshift=10pt},
        ylabel style = {anchor=north west, xshift=-15pt, yshift=0pt},
        zlabel style = {anchor=south},
        domain = -2:2,
        samples = 20,
        samples y = 20,
        zmin = 0, zmax = 4,
        xmin = -.5, xmax = 2,
        ymin = -.4, ymax = 1.5,
        xtick = \empty,
        ytick = \empty,
          ztick = {0,  2,  4},
        zticklabels = {0, 1, 2},
        enlargelimits = true,
        view = {70}{20} 
    ]
    \addplot3 [
        fill=orange,
        fill opacity=0.2,
        draw=none
    ] coordinates {
        (0, -.5, -.5)
        (0, 1.5, 3.5)
        (0, 1.5, 4.5)
        (0, -.5, 0.5)
    } -- cycle;

    \addplot3 [
        fill=red,
        fill opacity=0.2,
        thick,
        line width=.25mm
    ] coordinates {
        (0, 0, 0)
        (1, 0, 0)
        (1, .5, 0)
        (0, .5, 0)
    } -- cycle;
    \addplot3 [
        thick,
        orange,
        domain=-.5:1.5,
        samples=2
    ] 
    ({0}, {x}, {2*x+0.5 }); 

    \addplot3 [
        thick,
        orange,
        domain=-.5:1.5,
        samples=2
    ] 
    ({0}, {x}, {2*x + 1.5});

        \addplot3 [
        thick,
        red,
        domain=0:5,
        samples=2
    ] 
    ({.5}, {.25}, {x});
    \end{axis}
\end{tikzpicture}
\qquad
\begin{tikzpicture}

    \fill[red!20] (1,4) rectangle (2,5);
    \draw[thick] (1,4) rectangle (2,5);
   \fill[red!20] (1,.5) rectangle (2,1.5);
       \draw[thick] (1,.5) rectangle (2,1.5);
 \coordinate (A) at (-.5,2.25);
    \coordinate (B) at (3.5,2.25);
    \coordinate (E) at (3.5,2.75);
    \coordinate (C) at (3,3.25);
    \coordinate (D) at (0,3.25);
    \coordinate (F) at (-.5,2.75);

    \draw[thick, fill=red!20] (A) -- (B) -- (E) -- (C) -- (D) -- (F) -- cycle;
    \node at (1.5, 1.75) {\textbf{$t = 2$}};
\node at (1.5, 5.25) {\textbf{$t = 0$}};
\node at (1.5, 3.5) {\textbf{$t = 1$}};

\end{tikzpicture}
\caption{
The majority of the spacetime in Example \ref{Ex HL} is vacuum (which corresponds to $\eta=0$) and coincides with Minkowski space, with the exception of the wave itself ($\eta\ne0$), which is highlighted as an orange beam moving at the speed of light. This beam extends in the \(x_1, \dots, x_{n-1}\) directions with an appropriate fall-off towards $\infty$.
To understand a pp-wave's impact, observe its effect on an observer, marked by the red line.  As the pp-wave passes through the observer (around time \(t=1\)), a notable elongation occurs in the \(x_n\) direction.
This stretching effect is strongest near the center of the wave, and diminishes for large \(x_1, \dots, x_{n-1}\). After the wave has passed through the observer, everything returns to its original state. 
}
\label{f1}
\end{figure}

\subsection{Geometric properties of pp-waves and their initial data sets}

The proof of the following result can be found in \cite[Chapter 2]{HuangLee3} by L.-H.~Huang and D.~Lee:

\begin{theorem}\label{3 thm 1}
Let $(\overline M^{n+1},\overline g)$ be a pp-wave given by a function $F$.
    Then $(\overline M^{n+1},\overline g)$ is a null dust spacetime.
    More precisely, the Einstein tensor $\overline G=\overline \Ric-\frac12\overline R\overline g$ satisfies
\begin{align*}
    \overline G=v\otimes v
\end{align*}
where the velocity vector field $v$ is null and given by
\begin{align*}
    v=\sqrt\mu |\nabla u|^{-1}\frac{\partial}{\partial t}.
\end{align*}
    In particular, $(\overline M^{n+1},\overline g)$ satisfies the spacetime dominant energy condition.
    Moreover, for any initial data set $(M^n,g,k)$, we have
    \begin{align*}
        \mu=-\frac12F^{-1}\Delta_{\mathbb R^{n-1}}F.
    \end{align*}
\end{theorem}

This also explains the superharmonicity of $F$ in Definition \ref{def: pp-wave spacetimes}, as it ensures that $\mu\ge0$ which is needed for the dominant energy condition to be satisfied.

\begin{theorem}\label{thm pp waves ids identities}
    Let $(M^n,g,k)$ be the ($t=-\ell$)-graph in a pp-wave spacetime $(\overline M^{n+1},\overline g)$ with wave profile function $F$.
    Then 
  \begin{equation}\label{g IDS S3}
    g=(F+2\ell_u)du^2+2\sum_{\A=1}^{n-1}\nabla^\Sigma_\A \ell dudy_\A+\sum_{\A=1}^{n-1}dy_\A^{2}. 
\end{equation}
where $\{dy_\A\}$ forms an orthonormal basis of $\Sigma=\mathbb R^{n-1}$.
\end{theorem}

\begin{proof}
The identity for the metric simply follows by plugging $t=-\ell$ into the spacetime metric $\overline g=-2dtdu+Fdu^2+g_{\R^{n-1}}$, where  $g_{\R^{n-1}}=\sum_{\A=1}^{n-1}dy_\A^{2}$.
\end{proof}

In Section \ref{S:Killing}, specifically {see \eqref{eq:kAB} and \eqref{eq:barg}}, we will establish further useful identities for such initial data sets, including
\begin{align*}
    F=&|\nabla u|^{-2}+|\nabla^\Sigma\ell|^2-2\ell_u,\\
     k_{\A\B}   =&|\nabla u|\nabla^\Sigma_{\A\B}\ell.
\end{align*}

In order to prove Theorem \ref{T:main}, we will show that the metric $g$ of an arbitrary initial data set with $E=|P|$ has the form given in \eqref{g IDS S3}.
Doing so leads to several technical complications.
In particular, we do not know a priori what the quantities $u,F,\ell,y_\A,\Sigma$ are.

\begin{theorem}\label{thm pp wave parallel spinor}
Every initial data set $(M^n,g,k)$ contained in a pp-wave $(\overline M^{n+1},\overline g)$ admits a spinor $\psi\in \overline{\mathcal S}$ solving $\nabla_i\psi=-\frac12k_{ij}e_je_0\psi$,  where $e_0$ is the normal vector to $M^n$ in $\overline M^{n+1}$.
Moreover, on each IDS there is a spacetime harmonic function $u$ satisfying $\nabla_{ij}u=-k_{ij}|\nabla u|$ with flat level-sets.
\end{theorem}

\begin{proof}
First, we note that $u$ is covariantly constant in $(\overline M^{n+1},\overline g)$, and that $\overline \nabla u$ is a null vector. Hence $\overline \nabla_{ij}u=0$ which implies $\nabla_{ij}u=-k_{ij}e_0(u)=-k_{ij}|\nabla u|$; also see \cite{HKK, BHKKZ}.
Moreover, by the construction of the pp-wave, $u$ has flat level-sets.

It is well-known that each pp-wave $(\overline M^{n+1},\overline g)$ admits a paralle spinor $\psi$, see for instance \cite{Aichelburg, Araneda, Bryant}. Restricting $\psi$ to $(M^n,g,k)$, we obtain $\nabla_i\psi=-\frac12k_{ij}e_je_0\psi$, where $e_0$ denotes the normal vector of $(M^n,g,k)$ within $(\overline M^{n+1},\overline g)$. 
We give another, more explicit proof below.

Let $\{e_\A\}$ be an orthonormal basis on $\mathbb R^{n-1}$ and let ${\hat{\mathbf{n}}}$ be the normal of $\mathbb R^{n-1}$ within the $(t=0)$-slice of $(\overline{M}^{n+1},\overline g)$.
Let $\phi$ be a parallel spinor on $\mathbb R^{n-1}$, which we note is constant with respect to the frame $\{e_\A\}$.
Next, we extend $\phi$ to a spinor on $M^n=\mathbb R^{n-1}\times\mathbb R$ by prescribing $\phi$ to be constant along the frame $\{e_\A,{\hat{\mathbf{n}}}\}$.
This uses that $\mathcal S(\mathbb R^{n-1})\subseteq \mathcal S(\mathbb R^n)$.
Since $\phi$ is constant with respect to the frame $\{e_\A,{\hat{\mathbf{n}}}\}$, we may use \cite[Theorem 4.14]{LawsonMichelson} to compute
\begin{equation*}
\begin{split}
      \nabla_{{\hat{\mathbf{n}}}} \phi=&\frac{1}{2}\sum_{i<j}\langle\nabla_ {{\hat{\mathbf{n}}}} e_i, e_j\rangle e_i e_j\phi
      \\=& \frac{1}{2}\left(\sum_{1\le\A<\B<n}\langle\nabla_ {\hat{\mathbf{n}}} e_\A, e_\B\rangle e_\A e_\B\phi+\sum_{\A}\langle\nabla_ {\hat{\mathbf{n}}} e_\A, {\hat{\mathbf{n}}}\rangle e_\A {\hat{\mathbf{n}}}\phi\right).
\end{split}
\end{equation*}
Next, we observe that 
$${\langle \nabla_{\hat{\mathbf{n}}}e_\A, {\hat{\mathbf{n}}}\rangle=\nabla_{\hat{\mathbf{n}}}\langle e_\A,{\hat{\mathbf{n}}}\rangle-\langle e_\A,\nabla_{\hat{\mathbf{n}}} {\hat{\mathbf{n}}}\rangle=-\langle e_\A ,\nabla_{\hat{\mathbf{n}}}(|\nabla u|^{-1}\nabla u)\rangle= k_{{\hat{\mathbf{n}}}\A}}.$$
Moreover, recall 
\begin{align*}
      \overline{g}=-2d(t+\ell) du+g
\end{align*}
and
\begin{equation*}
    g=(F+2l_u)du^2+2\sum_{\A=1}^{n-1}\nabla^\Sigma_\A \ell dudy_\A+\sum_{\A=1}^{n-1}dy_\A^{2},
\end{equation*}
which implies that $e_\A=\partial_{y_\A}$ and ${\hat{\mathbf{n}}}= |\nabla u|(\partial_u-\partial_\A \ell \partial_\A)$.
Therefore, we  have 
\begin{align*}
     [{\hat{\mathbf{n}}},e_\B]=-|\nabla u|^{-1}(e_\B |\nabla u|){\hat{\mathbf{n}}}+|\nabla u|(\nabla^\Sigma_{\A\B}\ell) e_\A
\end{align*}
which yields
\begin{equation*}
    \begin{split}
        \langle\nabla_{\hat{\mathbf{n}}} e_\A, e_\B\rangle=&\frac{1}{2}\left(\langle[e_\B,e_\A],{\hat{\mathbf{n}}}\rangle+\langle[e_\B, {\hat{\mathbf{n}}}],e_\A\rangle-\langle[e_\A,{\hat{\mathbf{n}}}],e_\B\rangle\right)=0.
    \end{split}
\end{equation*}
Consequently, $$ \nabla_{{\hat{\mathbf{n}}}} \phi=\frac{1}{2}k_{{\hat{\mathbf{n}}}\A}e_\A {\hat{\mathbf{n}}} \phi.$$
Therefore, the spinor $\psi_1=|\nabla u|^{\tfrac{1}{2}}\phi$ satisfies
\begin{equation*}
    \nabla_{{\hat{\mathbf{n}}}} \psi_1=\frac{1}{2}k_{{\hat{\mathbf{n}}} i}e_i {\hat{\mathbf{n}}}\psi_1.
\end{equation*}
Moreover, according to \eqref{psi1},  we have $\nabla_\A \psi_1=\frac{1}{2}k_{\alpha i}e_i {\hat{\mathbf{n}}}\psi_1$.
We now define the spinor $\psi=(\psi_1,-{\hat{\mathbf{n}}} \psi_1)$ on the spacetime spinor bundle $\overline{\mathcal S}=\mathcal S\oplus \mathcal S$.
Then we have $\nabla_i\psi=-\frac12k_{ij}e_je_0\psi$ where we recall that $e_0(\phi_1,\phi_2)=(\phi_2,\phi_1)$ for any spinor $\phi=(\phi_1,\phi_2)\in \overline{\mathcal S}$.
\end{proof}

Note that the above argument leads to not just one, but many parallel spinors on pp-wave spacetimes: one for each parallel spinor on $\mathbb R^{n-1}$.

\begin{lemma}\label{l3.6}
Let $(M^n,g,k)$ be an initial data set admitting a spinor satisfying $\nabla_i\psi=-\frac12k_{ij}e_je_0\psi$.
   Then  $\mu |\psi|^2=-\langle Je_0\psi,\psi\rangle$ for the energy and momentum densities $\mu$ and $J$.
\end{lemma}

\begin{proof}
    We have 
    \begin{align*}
     -\nabla^\ast\nabla\psi=&   \nabla_i\nabla_i\psi\\
        =&-\frac12\nabla_i(k_{ij}e_je_0\psi)\\
        =&-\frac12(\nabla_ik_{ij})e_je_0\psi+\frac14k_{ij}k_{il}e_je_0e_le_0\psi\\
        =&        -\frac12(\nabla_ik_{ij})e_je_0\psi+\frac14|k|^2\psi
    \end{align*}
    and 
    \begin{align*}
    \slashed D^2\psi=&    e_i\nabla_i(e_j\nabla_j\psi)\\
        =&       -\frac12 e_i\nabla_i(e_j k_{jk}e_ke_0\psi)\\
       =&\frac12 e_i\nabla_i(\operatorname{tr}_gk e_0\psi)\\
       =&\frac12e_i(\nabla_i \operatorname{tr}_gk) e_0\psi-\frac14e_i(\operatorname{tr}_gk) e_0 k_{ij}e_je_0\psi\\
       =&\frac12e_i(\nabla_i \operatorname{tr}_gk) e_0\psi-\frac14 (\operatorname{tr}_gk)^2\psi.
    \end{align*}
    Combining this with the Schr\"odinger-Lichnerowicz formula $\slashed D^2=\nabla^\ast\nabla +\frac14R$, the result follows.
\end{proof}

\begin{corollary}\label{3 thm 2}
    Let $(M^n,g,k)$ be an asymptotically flat initial data set admitting a spinor solving $\nabla_i\psi=-\frac12k_{ij}e_je_0\psi$.
    Suppose that $\mu\ge|J|$.
    Then $E=|P|$ and $m=\sqrt{E^2-|P|^2}=0$.
    In particular, every asymptotically flat slice within a pp-wave has vanishing mass.
\end{corollary}

\begin{proof}
   Recall that in the proof of Theorem \ref{thm pp wave parallel spinor}, we have established $\psi=(\psi_1,-{\hat{\mathbf{n}}} \psi_1)$.
   Hence $X_i:=\langle e_ie_0\psi,\psi\rangle=|\psi|^2{\hat{\mathbf{n}}}$. 
  Therefore, Witten's integral formula, Theorem \ref{T:Witten}, yields
  \begin{align*}
      E|\psi^\infty|^2-|P||\psi^\infty|^2 = \frac2{(n-1)\omega_{n-1}} \int_M \left (\left |\nabla_i\psi+\frac12k_{ij}e_je_0\psi\right |^2+\frac12\mu|\psi|^2+\frac12\langle Je_0\psi,\psi\rangle\right)dV
  \end{align*}
  where $\psi^\infty$ is the constant spinor at $\infty $ the spinor $\psi$ asymptotes to.
  Using Lemma \ref{l3.6} above and the assumptions on $\nabla \psi$, the result follows.
\end{proof}

For a further discussion of pp-waves and a historical overview, we refer to the survey \cite{AazamiCederbaum} by A.~Azami, C.~Cederbaum and C.~Roche.

\section{Construction of a spacetime harmonic function}\label{S: spacetime harmonic}

In this section, we construct a spacetime harmonic function with flat level-sets.
This is the main geometric ingredient in the proof of Theorem \ref{T:main}.
The level-sets of the spacetime harmonic function correspond to the \enquote{planes} of the pp-wave.
We begin with a quick summary of the proof idea:

If $\psi$ solves $\nabla_i\psi=-\frac12k_{ij}e_je_0\psi$, then the vector field $X_i=\langle e_ie_0\psi,\psi\rangle$ is closed and therefore locally gives rise to a function $u$ with $\nabla u=X$.
It turns out that $u$ is spacetime harmonic and that its level-sets $\Sigma$ are MOTS with the second fundamental form $h_{ij}=-k_{ij}$.
Moreover, since $\nabla u$ is non-vanishing and the level-sets of $u$ are $(n-1)$-dimensional planes near infinity, we can deduce that $M^{n}=\mathbb R^{n-1}\times\mathbb R=\mathbb R^{n}$ topologically.
While $\psi=(\psi_1,\psi_2)$ itself is not parallel on $\Sigma$, we can use it to construct a parallel spinor $\phi:=|\psi|^{-1}\psi_1$ on $\Sigma$.
This allows us to obtain the flatness of the level-sets $\Sigma$.

We summarize the precise results of this section below:

\begin{theorem}\label{T:spacetime harmonic}
Let $s\ge2$, $s\in \mathbb N$, and let $\alpha\in(0,1)$.
    Suppose $(M^n,g,k)$ is a $C^{s,\alpha}$-asymptotically flat spin initial data set with decay rate $q\in(\tfrac{n-2}{2},n-2]$ satisfying the dominant energy condition.
    Moreover, assume that $E=|P|$.
    Then the following holds true:
    \begin{enumerate}
         \item There exists a spacetime harmonic function $u-u_{\infty}\in C^{s+1,\alpha}_{1-q}$ satisfying the Hessian equation $\nabla^2u=-k|\nabla u|$, where $u_{\infty}=-|P|^{-1}P\cdot x$ if $|P|\neq 0$, and $u_{\infty}=x_n$ if $|P|=0$.
        \item Topologically, $M^n=\R^n$.
        \item There exists a constant $c$ such that $ |\nabla u|\ge c$ everywhere.
        \item The level-sets of $u$ are flat, i.e. the Riemann curvature tensor of the induced metric on the level-sets is vanishing.
        \item The second fundamental form $h$ of the level sets $\Sigma $ satisfies $h=-k|_{T\Sigma\otimes T\Sigma}$.
        \item We have $J=-\mu {\hat{\mathbf{n}}}$, where ${\hat{\mathbf{n}}}=|\nabla u|^{-1}\nabla u$ is the unit normal to the level-sets.
    \end{enumerate}
\end{theorem}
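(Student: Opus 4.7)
The plan is to combine Witten's integral identity with a Hirsch--Kazaras--Khuri-type integral formula for a spacetime harmonic function, and then to extract the remaining geometric data from two vector fields $X$ and $Z$ built from the resulting rigid spinor and from $u$. First I would apply Theorem \ref{T:Witten} with a null asymptotic spinor $\psi^\infty$ chosen so that $E|\psi^\infty|^2+\langle\psi^\infty,Pe_0\psi^\infty\rangle=0$; such $\psi^\infty$ exist because $E=|P|$ makes this symmetric form degenerate. Since the integrand in Witten's identity is non-negative under the dominant energy condition, it must vanish pointwise, producing a spinor $\psi\to\psi^\infty$ satisfying
\begin{align}
\nabla_j\psi=-\tfrac12 k_{jl}e_le_0\psi,\qquad \mu|\psi|^2+\langle\psi,Je_0\psi\rangle=0.
\end{align}
Unique continuation for this first-order system together with $\psi^\infty\neq 0$ forces $|\psi|\ge c>0$ on all of $M$.

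For part~(1), I would solve the quasilinear PDE $\Delta u=-\operatorname{tr}_g(k)|\nabla u|$ with prescribed asymptotics $u-u_\infty\in C^{s+1,a}_{1-q}$ by a regularised approximation scheme together with uniform weighted Schauder estimates, as in the higher-dimensional spacetime harmonic function construction. To upgrade this PDE to the full Hessian equation, I would invoke a Bochner-type integral identity of the schematic form
\begin{align}
\int_M\frac{|\nabla^2u+|\nabla u|k|^2}{2|\nabla u|}\,dV_g+\int_M\bigl(\mu|\nabla u|+\langle J,\nabla u\rangle\bigr)\,dV_g\le C(E-|P|),
\end{align}
whose right-hand side is a boundary contribution picked up from the asymptotics of $u$. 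Under the DEC both integrands on the left are non-negative, and $E=|P|$ forces them to vanish pointwise. This yields the Hessian equation $\nabla^2 u=-|\nabla u|k$, whose tangential restriction is exactly the second fundamental form identity of part~(5), and $\mu+\langle J,\nu\rangle=0$, which with the DEC forces $J=-\mu\nu$ and settles part~(6).

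To bridge $u$ and $\psi$, I would introduce the vector field $X=\langle e_ie_0\psi,\psi\rangle e_i$. A direct Clifford computation using the Killing equation shows $\nabla X$ is symmetric with $\nabla X=-|X|k$, while the asymptotics of $X$ match those of $\nabla u_\infty$ by the choice of $\psi^\infty$; hence $X=\nabla u$, so $|\nabla u|$ is bounded below by a quadratic expression in $|\psi|$ and part~(3) follows. With $|\nabla u|$ uniformly positive and $u$ proper, the flow of $\nabla u/|\nabla u|^2$ realises $M$ as the product $\Sigma_0\times\R$ for any fixed regular level set $\Sigma_0$; once part~(4) is in hand, $\Sigma_0$ is a complete flat manifold asymptotic to $\R^{n-1}$, and the classification of such manifolds with a single Euclidean end gives $\Sigma_0=\R^{n-1}$, hence $M=\R^n$ as in part~(2).

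Part~(4) is the main obstacle. I would define the tangential vector field
\begin{align}
Z=|\nabla u|^{-2}\operatorname{Im}\langle e_i\nabla u\,\psi_1,\psi_1\rangle e_i
\end{align}
(with an analogous formula in even dimensions) and verify $\nabla^\Sigma Z=0$ by another Clifford computation combining the Killing equation with the Hessian equation of part~(1). Varying $\psi^\infty$ over the admissible null asymptotic spinors then produces a family of parallel tangential fields on $\Sigma$; extracting $n-1$ linearly independent such fields requires a careful parity-dependent analysis of the spinor representation, with linear independence verified from the asymptotic form of $\psi$. This is the most delicate step of the argument. Once a parallel orthonormal frame on $\Sigma$ is in place, the induced Riemann tensor of $\Sigma$ vanishes, completing part~(4).
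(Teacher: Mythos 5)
The core shortcut in the paper's proof is that $u$ is not obtained by solving a PDE at all: the spinor $\psi$ produced by Witten's rigidity argument directly yields the closed $1$-form $X_i=\langle e_ie_0\psi,\psi\rangle$ with $\nabla_iX_j=-k_{ij}N$ and $N:=|\psi|^2$, and a one-line conservation law $\nabla(|X|^2-N^2)=0$ gives $|X|=N$. Integrating $X$ to $du$ then produces the Hessian equation $\nabla^2u=-k|\nabla u|$ without any regularisation scheme, uniqueness argument, or Bochner identity. Your route runs the other way — solve $\Delta u=-\operatorname{tr}_g(k)|\nabla u|$ first, then upgrade to the Hessian equation via the HKK-type mass inequality — and this is where the proposal breaks in dimensions $n>3$. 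The integral formula
\begin{align}
\int_M\frac{|\nabla^2u+|\nabla u|k|^2}{2|\nabla u|}\,dV+\int_M\bigl(\mu|\nabla u|+\langle J,\nabla u\rangle\bigr)\,dV\le C\,(E-|P|)
\end{align}
is derived in \cite{HKK} only in dimension~$3$, and its derivation is not merely a Bochner computation: after integrating over level sets, a $\int_\Sigma R_\Sigma$ term appears that is only controllable (indeed, topological) via Gauss--Bonnet, which is specific to surfaces. The paper explicitly flags this (opening of the flatness subsection) as the obstruction to the level-set method in higher dimensions; without it, your inequality has an uncontrolled intrinsic-curvature contribution and the pointwise rigidity conclusions do not follow. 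Your subsequent step "$\nabla X$ symmetric and has the same asymptotics as $\nabla u_\infty$, hence $X=\nabla u$" also tacitly invokes a uniqueness statement for solutions of the degenerate Hessian equation that you would need to prove; the paper sidesteps this entirely by defining $u$ from $X$.

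Two smaller points. First, unique continuation alone gives $\psi\neq0$ pointwise but not the uniform bound $|\psi|\ge c>0$; the paper obtains the latter by integrating the ODE $|\nabla_{\dot\gamma}|X||=|k(X,\dot\gamma)|\le|k|\,|X|$ along geodesics from a reference point in the distinguished end, using the decay $k\in C^{s-1,a}_{-1-q}$. This is also what rules out additional asymptotically flat ends (where $|\psi|\to0$). Second, your derivation of $M\cong\R^n$ from flatness of $\Sigma_0$ plus a classification of flat manifolds has an ordering issue: one needs $u$ (equivalently, a primitive of $X$) globally defined to speak of its level sets, and that already presupposes controlling the topology. The paper resolves this by a bootstrapping argument from the asymptotically flat end, sweeping through sublevel sets and using the uniform bounds on $|\nabla u|$ and on $h=-k|_{T\Sigma}$ to propagate simple connectedness inward, rather than appealing to a classification theorem.

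Your treatment of parts (4)--(6) — the null choice of $\psi^\infty$, the parallel vector fields $Z$, and reading off $h=-k|_{T\Sigma}$ and $J=-\mu\nu$ from the Hessian equation and the vanishing of Witten's integrand — matches the paper and is fine.
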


\subsection{Existence of a spacetime harmonic function}

In case $P\ne0$, we define $\mathbf p= P|P|^{-1}$, and without loss of generality $\mathbf p{=(0,\dots,0,-1)\in \mathbb{R}^n}$.
In case $P=0$, {{we set $\mathbf p=(0,\dots,0,-1)$ as well}}.

Let $\psi_1^\infty$ be any constant spinor in $\mathcal{S}$ with the norm $\|\psi_1^\infty\|=\frac{\sqrt{2}}{2}$, and let $\psi^\infty=(\psi^\infty_1,\mathbf{p}\psi^\infty_1)$ be a unit constant spinor in $\overline{\mathcal{S}}$.
According to Theorem \ref{T:Witten}, there exists a spinor $\psi\in C^{s,\alpha}(M^n)$ solving $\slashed D\psi=\frac12(\operatorname{tr}_gk)e_0\psi$, which asymptotes to $\psi^\infty$.
Since 
\begin{align*}
    \langle Pe_0\psi^\infty,\psi^\infty\rangle=\langle P\mathbf{p}\psi^\infty_1,\psi^\infty_1\rangle-\langle P\psi^\infty_1,\mathbf{p}\psi^\infty_1\rangle=-2|P||\psi^\infty_1|^2=-|E||\psi^\infty|^2,
\end{align*}
the spinor $\psi$ must also satisfy 
\begin{align*}
    \nabla_i\psi=-\frac12k_{ij}e_je_0\psi,
\end{align*}
according to Witten's mass formula.

\begin{lemma}
Following \cite[Appendix B]{BeigChrusciel}, we define 
\begin{align*}
    X_i=\langle e_ie_0\psi,\psi\rangle,\quad\quad\text{and}\quad\quad
    N=|\psi|^2.
\end{align*}
Then $X$ and $N$ are differentiable, and we have
\begin{align}\label{eq:X N identity}
    \nabla_iX_j=-k_{ij}N,\quad\quad \text{and} \quad\quad
    \nabla_i N=-k_{ij}X_j.
\end{align}
\end{lemma}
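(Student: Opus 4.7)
The plan is to read off both identities by direct differentiation, using metric compatibility of $\nabla$ on the spinor bundle together with the Killing-type equation $\nabla_i \psi = -\tfrac12 k_{ij} e_j e_0 \psi$ and the Clifford relations on $\overline{\mathcal{S}}$. Differentiability of $X$ and $N$ is automatic: Theorem \ref{T:Witten} gives $\psi \in C^{s,a}$ with $s \ge 2$, so both $X$ and $N$ are at least $C^{1,a}$ and may be differentiated classically.

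The preliminary algebraic observation I would record is that on $\overline{\mathcal{S}}$, $e_0$ is Hermitian while each $e_i$ is skew-Hermitian. This follows from the explicit formulas $e_0(\phi_1,\phi_2)=(\phi_2,\phi_1)$ and $e_i(\phi_1,\phi_2)=(e_i\phi_1,-e_i\phi_2)$ together with the skew-Hermiticity of $e_i$ on $\mathcal{S}$. Consequently $(e_j e_0)^* = e_0^*e_j^* = -e_0 e_j = e_j e_0$, so $e_j e_0$ is self-adjoint and $X_j$ is real-valued. In the following I will also use $e_0^2=\Id$, $e_0 e_l+e_l e_0=0$, and the Clifford anticommutation $e_j e_l+e_l e_j=-2\delta_{jl}$.

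The $N$ identity is then a single line. Metric compatibility gives $\nabla_i N = 2\operatorname{Re}\langle\nabla_i\psi,\psi\rangle$, into which I substitute the Killing equation to obtain
\begin{align*}
\nabla_i N = -k_{ij}\operatorname{Re}\langle e_j e_0\psi,\psi\rangle = -k_{ij}X_j,
\end{align*}
where the last equality uses that $X_j$ is real.

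The $X_j$ identity uses the same idea with one additional Clifford reduction. Self-adjointness of $e_j e_0$ yields $\nabla_i X_j = 2\operatorname{Re}\langle e_j e_0\,\nabla_i\psi,\psi\rangle$; plugging in the Killing equation and applying $e_j e_0 e_l e_0 = -e_j e_l$ turns this into $k_{il}\operatorname{Re}\langle e_j e_l\psi,\psi\rangle$. The key manipulation at the end is symmetrization in $j,l$: because $(e_j e_l)^*=e_l e_j$, we have $\operatorname{Re}\langle e_j e_l\psi,\psi\rangle = \tfrac12\langle(e_j e_l+e_l e_j)\psi,\psi\rangle = -\delta_{jl}N$, producing $\nabla_i X_j = -k_{ij}N$ as claimed. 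There is no analytic obstacle here; the only thing to be careful with is tracking the anticommutation and adjointness signs of $e_l$, $e_0$, and $e_l e_0$ — the whole statement is really the algebraic shadow of Witten's integrand, and the authors' citation of \cite[Appendix B]{BeigChrusciel} is exactly to this computation.
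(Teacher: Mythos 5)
Your proof is correct and takes essentially the same route as the paper: differentiate $X$ and $N$ using metric compatibility, substitute the Killing equation $\nabla_i\psi = -\tfrac12 k_{ij}e_je_0\psi$, and reduce using the Clifford relations. The paper keeps the two conjugate terms separate and simplifies at the end rather than collecting them immediately into a real part via the self-adjointness of $e_je_0$, but this is a presentational difference only; the algebra is identical.
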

\begin{proof}
We compute
\begin{align*}
    \begin{split}
        \nabla_iX_j=&
        \langle e_je_0 \nabla_i\psi,\psi\rangle+\langle e_je_0\psi,\nabla_i\psi\rangle\\
        =& -\frac12\langle e_je_0 k_{il}e_le_0\psi,\psi\rangle-\frac12\langle e_je_0\psi,k_{il}e_le_0\psi\rangle\\
         =& \frac12k_{il}\langle e_j e_l\psi,\psi\rangle-\frac12k_{il}\langle e_j\psi,e_l\psi\rangle\\
         =&-k_{ij}|\psi|^2
    \end{split}
\end{align*}
and
\begin{align*}
    \begin{split}
        \nabla_i N=&\langle \nabla_i\psi,\psi\rangle+\langle \psi,\nabla_i\psi\rangle\\
        =&-\frac12\langle k_{ij}e_je_0\psi,\psi\rangle-\frac12\langle\psi,k_{ij}e_je_0\psi\rangle\\
        =&-k_{ij}\langle e_je_0\psi,\psi\rangle.
    \end{split}
\end{align*}
This establishes identity \eqref{eq:X N identity}.
\end{proof}

\begin{corollary}
    On each simply connected set $\Omega^n\subset M^n$, there exists a function $u$ satisfying $du=X$.
\end{corollary}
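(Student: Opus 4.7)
The plan is to show that $X$, viewed as a $1$-form $X_j\,dx^j$, is closed, and then invoke the Poincaré lemma on the simply connected domain $\Omega^n$. Since $k$ is a symmetric $2$-tensor, the identity $\nabla_i X_j=-k_{ij}N$ established in the previous lemma is automatically symmetric in the indices $i,j$, so
\begin{align}
\nabla_i X_j=\nabla_j X_i.
\end{align}

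For any $1$-form, the exterior derivative coincides with the antisymmetric part of the covariant derivative (the Christoffel symbol contributions cancel by symmetry of $\Gamma^k_{ij}$ in the lower indices), so
\begin{align}
(dX)_{ij}=\nabla_i X_j-\nabla_j X_i=0.
\end{align}
Hence $X$ is a closed $1$-form on all of $M^n$, with regularity inherited from $\psi\in C^{s,a}$ through the lemma.

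Applying the Poincaré lemma on the simply connected set $\Omega^n\subset M^n$ then produces a function $u\colon\Omega^n\to\mathbb R$, unique up to an additive constant, satisfying $du=X$. There is no genuine obstacle: the content lies entirely in the symmetry of $k$ and the formula $\nabla_i X_j=-k_{ij}N$ from the preceding lemma, and this corollary is a one-line application of standard de Rham theory.
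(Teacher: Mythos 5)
Your proof is correct and matches the paper's argument exactly: both observe that $\nabla_i X_j = -k_{ij}N$ is symmetric in $i,j$ because $k$ is a symmetric tensor, so $dX=0$, and then apply the Poincaré lemma on the simply connected domain. The paper compresses this into one line; your version simply spells out the standard fact that the exterior derivative of a $1$-form equals the antisymmetrization of its covariant derivative.
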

\begin{proof}
    Since $\nabla_iX_j=-k_{ij}N=\nabla_jX_i$, this follows from the Poincar\'e lemma. 
\end{proof}
\begin{lemma}\label{L:asymptotics}
    We have 
\begin{align*}
    X+\mathbf{p}\in C^{s,\alpha}_{-q},\quad\quad N-1\in C^{s,\alpha}_{-q}
\end{align*}
in the designated asymptotically flat end, and
\begin{align*}
    X\in C^{s,\alpha}_{-q},\quad\quad N\in C^{s,\alpha}_{-q}
\end{align*}
in all other ends.
\end{lemma}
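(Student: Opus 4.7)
The plan is to exploit the sesquilinearity of $X$ and $N$ in $\psi$ together with the asymptotics of $\psi$ guaranteed by Theorem \ref{T:Witten}. In the designated end, decompose $\psi = \psi^\infty + \varphi$ with $\varphi \in C^{s,a}_{-q}$, and expand using Hermiticity of the inner product:
\begin{align*}
X_i &= \langle e_ie_0\psi^\infty,\psi^\infty\rangle + 2\operatorname{Re}\langle e_ie_0\psi^\infty,\varphi\rangle + \langle e_ie_0\varphi,\varphi\rangle,\\
N &= |\psi^\infty|^2 + 2\operatorname{Re}\langle\psi^\infty,\varphi\rangle + |\varphi|^2.
\end{align*}
The cross terms, being linear in $\varphi$ with constant spinor coefficients, lie in $C^{s,a}_{-q}$. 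The purely quadratic remainders lie in $C^{s,a}_{-2q}$ by the standard product estimate for weighted H\"older spaces, and since $q>0$ this embeds into $C^{s,a}_{-q}$. It then suffices to compute the constant leading terms.

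For the leading term of $N$, Clifford multiplication by a unit real vector preserves the spinor norm, so $|\psi^\infty|^2 = |\psi_1^\infty|^2 + |\mathbf p\psi_1^\infty|^2 = 2|\psi_1^\infty|^2 = 1$. For the leading term of $X_i$, insert $\psi^\infty = (\psi_1^\infty,\mathbf p\psi_1^\infty)$ and unfold the spacetime Clifford action
\[e_ie_0\psi^\infty = e_i(\mathbf p\psi_1^\infty,\psi_1^\infty) = (e_i\mathbf p\psi_1^\infty,-e_i\psi_1^\infty),\]
then apply the skew-Hermiticity of $e_i$ on $\mathcal S$ to rewrite $\langle e_ie_0\psi^\infty,\psi^\infty\rangle = 2\operatorname{Re}\langle e_i\mathbf p\psi_1^\infty,\psi_1^\infty\rangle$. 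Expanding $\mathbf p = \mathbf p_j e_j$ and noting that $e_ie_j$ is itself skew-Hermitian for $i\neq j$ (so its diagonal matrix elements are purely imaginary), only the $j = i$ term contributes to the real part, giving $\langle e_ie_0\psi^\infty,\psi^\infty\rangle = -2\mathbf p_i|\psi_1^\infty|^2 = -\mathbf p_i$, as needed.

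In all other ends, $\psi^\infty = 0$ and $\psi \in C^{s,a}_{-q}$, so $X$ and $N$ are purely quadratic in $\psi$ and land in $C^{s,a}_{-2q} \subset C^{s,a}_{-q}$ by the same product argument. The main obstacle is the algebraic identification of the leading term of $X_i$ as $-\mathbf p_i$, which requires careful bookkeeping of conjugate-linearity in the spinor inner product alongside the Clifford anticommutation relations; the product estimates in weighted H\"older spaces and the propagation of decay through the linear cross terms are routine.
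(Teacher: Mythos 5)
Your proof is correct and follows essentially the same route as the paper: decompose $\psi=\psi^\infty+\varphi$, invoke Theorem \ref{T:Witten} to control $\varphi$, and identify the constant leading terms via the Clifford algebra, with the cross and quadratic remainders handled by the weighted H\"older product estimates. The paper organizes the algebra slightly differently (polarizing $N-1$ and expanding $X_i$ componentwise in $\psi_1,\psi_2$), and it does not spell out the Clifford computation $2\operatorname{Re}\langle e_i\psi^\infty_1,\mathbf{p}\psi^\infty_1\rangle=-\mathbf{p}_i$ that you verify explicitly, but the argument is the same in substance.
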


\begin{proof}
We compute
\begin{align*}
    N-1=&|\psi|^2-|\psi^\infty|^2=\frac12(\langle \psi-\psi^\infty,\psi+\psi^\infty\rangle+\langle \psi+\psi^\infty,\psi-\psi^\infty\rangle).
\end{align*}
Thus, the result follows from Theorem \ref{T:Witten}.
Moreover, for $\psi:=(\psi_1,\psi_2)$ which asymptotes to $(\psi_1^\infty,\mathbf{p}\psi_1^\infty)$ at infinity, we have
\begin{align*}
\begin{split}
    X_i=&\langle e_i(\psi_2,\psi_1),(\psi_1,\psi_2)\rangle\\
    =&-2\operatorname{Re}\langle e_i\psi_1,\psi_2\rangle\\
    =&-2\operatorname{Re}(
\langle e_i\psi^\infty_1,\mathbf{p}\psi^\infty_1\rangle+\langle e_i(\psi_1-\psi^\infty_1),\mathbf{p}\psi^\infty_1\rangle+\langle e_i\psi_1,\psi_2-\mathbf{p}\psi^\infty_1\rangle
    ).
    \end{split}
\end{align*}
Since $2\operatorname{Re}\langle e_i\psi^\infty_1,\mathbf{p}\psi^\infty_1\rangle=-\mathbf{p}_i$ in the designated asymptotically flat end, and $2\operatorname{Re}\langle e_i\psi^\infty_1,\mathbf{p}\psi^\infty_1\rangle=0$ in all other ends, the result follows again from Theorem \ref{T:Witten}. 
\end{proof}

\begin{lemma}
    We have $N=|X|$.
\end{lemma}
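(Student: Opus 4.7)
The idea is to show that the quantity $N^2 - |X|^2$ is covariantly constant on $M^n$ and then identify its value at infinity. Using the coupled identities from the previous lemma, namely $\nabla_i X_j = -k_{ij} N$ and $\nabla_i N = -k_{ij} X_j$, a direct computation yields
\begin{align}
    \nabla_i(N^2) &= 2N\nabla_i N = -2N k_{ij} X_j, \\
    \nabla_i(|X|^2) &= 2 X_j \nabla_i X_j = -2 N k_{ij} X_j,
\end{align}
so that $\nabla_i(N^2 - |X|^2) \equiv 0$. Since $M^n$ is connected, $N^2 - |X|^2$ is a constant on $M^n$.

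To determine this constant, I would evaluate the asymptotics in the designated end using Lemma \ref{L:asymptotics}. There we have $N \to 1$ and $X \to -\mathbf{p}$, where $|\mathbf{p}| = 1$, hence $N^2 - |X|^2 \to 0$ at infinity. Therefore $N^2 = |X|^2$ on all of $M^n$, and since $N = |\psi|^2 \geq 0$, we conclude $N = |X|$.

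There is essentially no serious obstacle here: the content of the lemma is really the algebraic miracle that the system $\nabla X = -kN$, $\nabla N = -kX$ admits $N^2 - |X|^2$ as a conserved quantity. The only point requiring care is that $N^2 - |X|^2$ must be identified as zero rather than some other constant, which is why we rely on the precise asymptotics $|\mathbf{p}| = 1$ rather than just decay. Note that the same argument is consistent with the other ends, where both $N \to 0$ and $X \to 0$ in accordance with Lemma \ref{L:asymptotics}.
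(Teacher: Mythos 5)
Your proof is correct and follows exactly the same route as the paper: differentiate $N^2 - |X|^2$ using the identities $\nabla_i X_j = -k_{ij}N$ and $\nabla_i N = -k_{ij}X_j$, conclude it is constant, and identify the constant as zero from the asymptotics at infinity. You simply spell out the "short computation" that the paper leaves implicit.
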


\begin{proof}
{{
Using the gradient equations for $X$ and $N$ in \eqref{eq:X N identity}, we have 
\begin{equation*}
    \nabla_i|X|^2=2\langle \nabla_i X_j,X_j\rangle=-2k_{ij}X_j N=\nabla_i N^2.
\end{equation*}
}}
Since both $N$ and $|X|$ asymptote to $1$ at $\infty$, the result follows.
\end{proof}

\begin{proof}[Proof of Theorem \ref{T:spacetime harmonic} (1)]
    The above lemma immediately implies the existence of a spacetime harmonic function $u\in C^{s}$ on each simply connected domain $\Omega^n\subset M^n$, satisfying the PDE $\nabla^2u=-k|\nabla u|$.
    The decay estimates follow analogously to \cite[Section 4]{HKK}, where they have been established for spacetime harmonic functions in dimension 3.
    Finally, the improved regularity, $u\in C^{s+1,\alpha}_{1-q}$, will follow from the fact that $|\nabla u|\ge c$ for some constant $c$ which will be shown in Lemma \ref{L:gradient bound} below.
    Thus, the proof of item (1) will be complete once we have established that $M=\mathbb R^n$.
    This will be done in the following section.
\end{proof}

\subsection{$M$ is topologically trivial}

In dimension $3$, cf. \cite[Proposition 7.2]{HKK}, 
the resolution of the Poincar\'e conjecture and the geometrization conjecture is used to show that $M\cong\R^3$. 
Hence, a new argument is needed for the higher dimensional case.

\begin{lemma}\label{L:gradient bound}
    The vector field $X$ is non-vanishing and there exists a constant $c>0$ depending only on $(M^n,g,k)$ such that $c\le|X|\le c^{-1}$ everywhere on $M^n$.
    In particular, item (3) of Theorem \ref{T:spacetime harmonic} holds in case $X$ is globally integrable.
\end{lemma}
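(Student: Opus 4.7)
The plan is to establish that $X$ is nowhere vanishing via a Gronwall-type unique-continuation argument for the spinor equation, and then to convert the identity $N=|X|$ together with $\nabla_iN=-k_{ij}X_j$ into a uniform bound on $\log N$ along curves going to infinity in the designated end.

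For the non-vanishing, the spinor equation $\nabla_i\psi=-\tfrac12 k_{ij}e_je_0\psi$ yields the pointwise estimate $|\nabla\psi|\le\tfrac12|k||\psi|$, and hence by Kato's inequality $|\nabla|\psi||\le\tfrac12|k||\psi|$ almost everywhere. Parametrizing any smooth curve $\gamma$ by arc-length and setting $f(s)=|\psi|(\gamma(s))$, we obtain $f'(s)\le\tfrac12|k|(\gamma(s))f(s)$ almost everywhere, so Gronwall's lemma gives
\[
f(s)\le f(0)\exp\!\left(\tfrac12\int_\gamma|k|\,d\ell\right).
\]
If $\psi$ vanished at some point $p$, this would force $\psi\equiv 0$ on the connected complete manifold $M^n$, contradicting $\psi\to\psi^\infty$ with $|\psi^\infty|=1$ in the designated end. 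Hence $N=|\psi|^2=|X|$ is strictly positive on $M^n$, and in particular $X$ is nowhere vanishing.

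For the two-sided bound, the identity $\nabla_iN=-k_{ij}X_j$ combined with $N=|X|$ yields $|\nabla\log N|\le|k|$ globally. Integrating along a curve $\gamma$ from $p$ to infinity in the designated end, where $N\to 1$, gives
\[
|\log N(p)|\le\int_\gamma|k|\,d\ell.
\]
Taking $\gamma$ to be a concatenation of a radial ray from $p$ to the edge of the compact core $\mathcal{C}$, a bounded path through $\mathcal{C}$, and a radial ray outward in the designated end, and using the decay $|k|=O(|x|^{-q-1})$ with $q>0$ so that the tail integrals $\int_{r_0}^\infty r^{-q-1}\,dr$ converge, the right-hand side is bounded by a universal constant depending only on $(M^n,g,k)$. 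This yields $c\le|X|\le c^{-1}$ everywhere on $M^n$.

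The main technical subtlety is that this uniform lower bound is inconsistent with the asymptotics $X\to 0$ in any non-designated end from Lemma \ref{L:asymptotics}; thus the argument also forces $M^n$ to have a single end, consistent with item (2) of Theorem \ref{T:spacetime harmonic}. Finally, if $X$ is globally integrable, write $X=\nabla u$ for a globally defined function $u$; then $|\nabla u|=|X|\ge c$, which is precisely item (3) of Theorem \ref{T:spacetime harmonic}.
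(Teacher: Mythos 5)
Your proposal is correct and takes essentially the same approach as the paper: both derive the pointwise inequality $|\nabla|X||\le|k||X|$ (you phrase it via Kato's inequality for $|\psi|$, the paper via $\nabla_iN=-k_{ij}X_j$ with $N=|X|$, which are equivalent since $N=|\psi|^2$), and both integrate this Gronwall-type ODE along a curve from an arbitrary point $p$ to a point $q$ in the designated end where $|X|$ is close to $1$, using the decay $|k|=O(r^{-q-1})$ with $q+1>1$ to bound $\int_\gamma|k|\,d\ell$ uniformly. The paper does this in one step (integrating along a geodesic, which handles non-vanishing and the two-sided bound simultaneously without taking logarithms); your two-step presentation, first ruling out zeros of $\psi$ and then integrating $|\nabla\log N|\le|k|$, is a minor stylistic variation and the first step is technically subsumed by the second, but the substance is the same. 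The observation that a uniform lower bound on $|X|$ rules out additional ends is also precisely the corollary drawn in the paper immediately after this lemma.
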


\begin{proof}
    The argument of \cite[Proposition 7.1]{HKK} still goes through even if $X$ is not a priori  globally integrable.
More precisely, for any point $p\in M^n$, take a point $q$ in the designated asymptotically flat end with $|X(q)|\ge\tfrac12$. 
Connecting $p$ with $q$ via a geodesic $\gamma$ and observe that $|\nabla_{\dot\gamma}|X||=|k(X,\dot \gamma)|\le |k||X|$. 
    Integrating this ODE and using the decay assumption $k\in C^{s-1,\alpha}_{-1-q}$, the result follows.
\end{proof}

\begin{lemma}
    The second fundamental form of the level-sets of $u$ satisfies $h=- k|_{T\Sigma\otimes T\Sigma}$, i.e. item (5) of Theorem \ref{T:spacetime harmonic} holds.
\end{lemma}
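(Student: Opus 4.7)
The plan is to express the second fundamental form of a level-set of $u$ directly in terms of the Hessian of $u$, and then apply the spacetime harmonic equation $\nabla^2 u = -k|\nabla u|$ established in item (1). This reduces the lemma to a one-line tensorial identity, so I do not expect any genuine obstacle beyond checking that the regularity of $u$ is sufficient.

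Concretely, at any point $p$ where $|\nabla u|(p) \neq 0$, the level-set $\Sigma$ through $p$ is locally a smooth hypersurface with well-defined unit normal $\nu = \nabla u / |\nabla u|$. For tangent vectors $X, Y \in T_p \Sigma$ the standard computation gives
\begin{align}
h(X,Y) = \langle \nabla_X \nu, Y\rangle = \frac{\nabla^2 u(X,Y)}{|\nabla u|} - \frac{\langle \nabla u, Y\rangle \, X(|\nabla u|)}{|\nabla u|^2} = \frac{\nabla^2 u(X,Y)}{|\nabla u|},
\end{align}
where the last equality uses $\langle \nabla u, Y\rangle = 0$ since $Y \in T_p\Sigma$. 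Inserting the PDE $\nabla^2 u = -k|\nabla u|$ immediately yields $h(X,Y) = -k(X,Y)$, which is precisely the claim.

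The only subtlety to flag is a logical one: the computation above requires $|\nabla u|(p) \neq 0$, whereas item (3), which guarantees $|\nabla u| \geq c > 0$ globally, is proved later in the paper. Thus at this stage the identity should be read pointwise on the regular set $\{\nabla u \neq 0\}$, and it upgrades to a statement about every level-set once (3) is in hand. Observe also that the argument does not use the spinorial construction of $u$ at all beyond the Hessian equation, so no spinor identities need to be reopened here.
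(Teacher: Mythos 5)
Your proof is correct and takes essentially the same route as the paper: both equate the tangential Hessian $\nabla^2 u(X,Y) = h(X,Y)\,|\nabla u|$ with the spacetime-harmonic PDE $\nabla^2 u = -k|\nabla u|$ to read off $h = -k$. Your remark that the identity should a priori be read pointwise on the regular set $\{\nabla u \neq 0\}$ until item (3) is established is a valid observation that the paper leaves implicit.
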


\begin{proof}
    On the one hand, we have $\nabla_{ij}u=-k_{ij}|\nabla u|$. On the other hand, $\nabla_{\A\B}u=\nabla_{\A\B}^\Sigma u+h_{\A\B}|\nabla u|$ for tangential $e_\A,e_\B$.
    Thus, the result follows.
\end{proof}

\begin{corollary}
    The initial data set $(M^n,g,k)$ has just a single asymptotically flat end.
\end{corollary}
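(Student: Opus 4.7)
The corollary is essentially immediate from combining the two preceding lemmas, so my proposal is short.

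The plan is to argue by contradiction: suppose $(M^n,g,k)$ admits a second asymptotically flat end in addition to the designated one. The key observation is that the spinor $\psi$ produced by Theorem \ref{T:Witten} was prescribed to asymptote to the unit constant spinor $\psi^\infty$ only in the designated end, while in every other end $\psi$ itself lies in $C^{s,a}_{-q}$ and therefore decays to zero at infinity. Consequently the vector field $X_i=\langle e_ie_0\psi,\psi\rangle$, which is quadratic in $\psi$, also decays to zero along any sequence tending to infinity in a non-designated end; this is precisely the second statement of Lemma \ref{L:asymptotics}.

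On the other hand, Lemma \ref{L:gradient bound} asserts the uniform two-sided bound $c\le |X|\le c^{-1}$ at \emph{every} point of $M^n$, obtained by integrating the ODE $|\nabla_{\dot\gamma}|X||\le |k||X|$ along a geodesic connecting an arbitrary point to a point in the designated end where $|X|\ge 1/2$. In particular, the lower bound $|X|\ge c>0$ holds on every non-designated end as well.

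These two facts are directly incompatible: along any divergent sequence in a hypothetical additional end one would need $|X|\to 0$ by Lemma \ref{L:asymptotics}, contradicting $|X|\ge c>0$ from Lemma \ref{L:gradient bound}. Hence no such additional end can exist, and $(M^n,g,k)$ possesses a unique asymptotically flat end. There is no real obstacle here; the only thing one should verify carefully is that the decay statement of Lemma \ref{L:asymptotics} for non-designated ends genuinely gives pointwise decay of $|X|$ (which it does, since $C^{s,a}_{-q}\subset C^0_{-q}$ and $q>0$), and that the lower bound in Lemma \ref{L:gradient bound} was proved without assuming a priori that $X$ is globally integrable, so that it applies to all of $M^n$ regardless of its topology.
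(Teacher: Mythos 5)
Your proof is correct and is precisely the argument the paper gives, just spelled out in more detail: $|X|$ decays to zero in any non-designated end by Lemma \ref{L:asymptotics}, which contradicts the uniform lower bound $|X|\ge c>0$ from Lemma \ref{L:gradient bound}. The additional care you take in noting that the gradient bound does not presuppose global integrability of $X$ is consistent with the paper's remark on that point.
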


\begin{proof}
    Recall that by construction $|X|$ asymptotes to $1$ in the designated asymptotically flat end, and asymptotes to $0$ in all other ends. 
    However, the latter would contradict the previous lemma.
\end{proof}

\begin{proof}[Proof of Theorem \ref{T:spacetime harmonic} (2)]
This follows directly from Reeb's global stability theorem \cite[Theorem 3.1, p.112]{Godbillon}.
For completeness, we have included an entire proof adjusted to our setting in Appendix \ref{S:topology}.
\end{proof}

\subsection{The flatness of the level-sets}

Next, we proceed with the geometric core of our argument, and show in this section that the Riemann curvature tensor of the level-sets $\Sigma_t=\{u=t\}$ vanishes.
In dimension 3, this follows from combining the spacetime Hessian equation $\nabla^2u=-k|\nabla u|$, Bochner's formula and the identity $\mu|\nabla u|=-\langle J,\nabla u\rangle$, cf. \cite[Corollary 2.2]{HirschZhang}.
However, in higher dimensions this chain of thoughts only yields that the scalar curvature of the level-sets vanishes.
Hence, a better argument is needed.

\begin{proof}[Proof of Theorem \ref{T:spacetime harmonic} (4)]
Recall that for any constant spinor $\psi^\infty_1$, we defined $\psi^\infty=(\psi^\infty_1,\mathbf{p}\psi^\infty_1)$ where $\mathbf{p}=|P|^{-1}P$.
Moreover, there exists a spinor $\psi=(\psi_1,\psi_2)$ which is asymptotic to $\psi^\infty$ and satisfies
\begin{align*}
    \nabla_i\psi=-\frac12k_{ij}e_je_0\psi.
\end{align*}
This implies
\begin{align*}
    \nabla_i\psi_1=-\frac12k_{ij}e_j\psi_2\quad\quad\text{and}\quad\quad \nabla_i\psi_2=\frac12 k_{ij}e_j\psi_1.
\end{align*}
Denote with ${\hat{\mathbf{n}}}=\frac{\nabla u}{|\nabla u|}$ the unit normal to the level-sets $\Sigma$ which is well-defined since $|\nabla u|\ne0$.
We have
\begin{align*} 
    \begin{split}
        |\nabla u|=\langle X,{\hat{\mathbf{n}}}\rangle=\langle{\hat{\mathbf{n}}} e_0\psi,\psi \rangle= \langle {\hat{\mathbf{n}}}\psi_2,\psi_1\rangle-\langle {\hat{\mathbf{n}}} \psi_1,\psi_2\rangle\le 2|\psi_1||\psi_2|\le |\psi|^2=|\nabla u|
    \end{split}
\end{align*}
Therefore, 
\begin{equation} \label{psi12}
\psi_1={\hat{\mathbf{n}}}\psi_2\quad\quad \text{and}\quad\quad \psi_2=-{\hat{\mathbf{n}}}\psi_1,     
\end{equation}
 as well as
\begin{align}\label{psi1 identity}
     \nabla_i \psi_1=\frac{1}{2}k_{ij}e_j{\hat{\mathbf{n}}}\psi_1\quad\quad\text{and}\quad\quad \nabla_i\psi_2=\frac{1}{2}k_{ij}e_j{\hat{\mathbf{n}}}\psi_2.
\end{align}
If the dimension $n$ is odd, then the spinor bundle on $M^n$ can be identified as the spinor bundle on the level sets $\Sigma$.  Recall that $h_{\A\B}=\langle\nabla_\A{\hat{\mathbf{n}}},e_\B \rangle=-k_{\A\B}$, then
\begin{equation}
    \begin{split} \label{psi1}
    \nabla_\A^{\Sigma}\psi_1
   =&\nabla_\A \psi_1+\frac{1}{2}h_{\A \B}e_\B {\hat{\mathbf{n}}} \psi_1
   \\=&\frac12 k_{\A j}e_j{\hat{\mathbf{n}}}\psi_1-\frac12k_{\A\B}e_\B {\hat{\mathbf{n}}}\psi_1
   \\=&-\frac{1}{2}k_{\A{\hat{\mathbf{n}}}}\psi_1
   \\=&\frac{1}{2}(|\nabla u|^{-1}\nabla_\A |\nabla u|)\psi_1
    \end{split}
\end{equation} 
Hence, $\nabla^{\Sigma}(|\nabla u|^{-\frac{1}{2}}\psi_1)=0$, i.e., $|\nabla u|^{-\frac{1}{2}}\psi_1$ is a parallel spinor on $\Sigma$.

If $n$ is even,  then the spinor bundle on the level sets $\Sigma$ can be identified as an eigenspace of the linear transformation $\sigma: \mathcal{S}(M^n)\to \mathcal{S}(M^n)$, where $\sigma=i^\frac{n}{2}e_1\cdots e_n$, also see the discussion in \cite[p.903]{harmonicspinor}. Note that $\sigma^2=1$ and $\sigma$ commutes with the even part of $\operatorname{Cl}(\mathbb{R}^n)$, denote as
$\operatorname{Cl}^0(\mathbb{R}^n)$, which is generated by $e_je_l$.
Therefore, $\mathcal{S}(M^n)=\mathcal{S}^+(M^n)\oplus \mathcal{S}^-(M^n)$, where $\mathcal{S}^+(M^n)$ and $ \mathcal{S}^-(M^n)$ are  $\pm 1$ eigenspaces of $\sigma$. 
 The isomorphism  $\operatorname{Cl}(\R^{n-1})\to \operatorname{Cl}^0(\R^n)$ induced by $e_\A\to e_\A e_n$ gives a $\operatorname{Cl}(\R^{n-1})$ module structure on $\mathcal{S}^+(M^n)$. Thus,  $\mathcal{S}(\Sigma)\cong \mathcal{S}^+(M^n)|_\Sigma\cong \mathcal{S}^-(M^n)|_\Sigma$, and the odd part of  $\operatorname{Cl}(\R^n)$, denote as $\operatorname{Cl}^1(\R^n)$, maps $\mathcal{S}^+(M^n)$ to $\mathcal{S}^-(M^n)$. 
Without loss of generality, we choose the $+1$ eigenspace of $\sigma$.  Therefore, for any $\phi\in \mathcal{S}(M^n)$, $(1+\sigma)\phi\in \mathcal{S}^+(M^n)$ because $\sigma(1+\sigma)\phi=(1+\sigma)\phi$. Then using $\nabla \sigma=\sigma \nabla $, Equation \eqref{psi12} implies 
\begin{equation*}
    \nabla_i(1+\sigma)\psi_1=\frac{1}{2}k_{ij}e_j{\hat{\mathbf{n}}}(1+\sigma)\psi_1.
\end{equation*}
Thus, $|\nabla u|^{-\tfrac{1}{2}}(1+\sigma)\psi_1$ is a parallel spinor on $\Sigma$. 

Since the asymptotic of $\psi_1$ was chosen arbitrary, we have an abundance of parallel spinors.
These spinors give rise to parallel vector fields $Z^\B$:
we define $(Z^\B)_\A=|\nabla u|^{-1}\operatorname{Im}\langle e_\A{\hat{\mathbf{n}}} \psi_1,\psi_1\rangle$ in odd dimensions. Similarly, we define  $(Z^\B)_\A=|\nabla u|^{-1}\operatorname{Im}\langle e_\A{\hat{\mathbf{n}}} (1+\sigma) \psi_1 ,(1+\sigma)\psi_1\rangle$ in even dimensions. In both cases, we can choose $\psi_1$ such that $Z^\B$ are asymptotic to the coordinate vector fields $e_\B$ on $\Sigma$.
At $\infty$, $\{Z^\B\}$ forms an orthonormal basis on $T\Sigma$.
Since 
\begin{align*}
    \nabla_\A^\Sigma\langle Z^\B,Z^\C\rangle=0,
\end{align*}
$\{Z^\beta\}$ are linearly independent everywhere on $\Sigma$.
Therefore, $\Sigma$ is flat.
\end{proof}

\begin{remark}
    Alternatively, flatness also follows directly from the fact that the spinors $|\nabla u|^{-\tfrac{1}{2}}\psi_1$ or $|\nabla u|^{-\tfrac{1}{2}}(1+\sigma)\psi_1$ are parallel for any choice of spinor $\psi_1^\infty$.
\end{remark}

\begin{proof}[Proof of Theorem \ref{T:spacetime harmonic} (6)]
    Witten's formula \ref{T:Witten} implies together with the condition $\mu\ge|J|$ that
    \begin{align*} 
    \mu|\psi|^2+\langle Je_0\psi,\psi\rangle=0.
\end{align*}
Inserting the defintions of $X=\nabla u$, $N=|\nabla u|$, the result follows.
\end{proof}

\subsection{Adjustments for non-empty interior boundary}\label{SS:boundary}

In case we have an interior MOTS or MITS boundary, we solve the PDE $\slashed D\psi=0$, ${\hat{\mathbf{n}}} e_0\psi =-\psi $ on MOTS, ${\hat{\mathbf{n}}} e_0 \psi = \psi$ on MITS, and imposing the same boundary conditions at infinity.
Then Witten's mass formula with boundary, cf. \cite{Hawking}, \cite[Theorem 8.29]{Lee} and \cite[Theorem 11.4]{BartnikChrusciel}, yields again that $\psi$ satisfies $\nabla_i\psi=\frac12k_{ij}e_je_0\psi$.
Now all parts of the proof of Theorem \ref{T:spacetime harmonic} go through verbatim with the exception of item (2).
The boundary condition for $\psi$ implies that $X$ is normal to  the inner boundaries, i.e.,
\begin{align*}
    X_{\hat{\mathbf{n}}}=&\langle {\hat{\mathbf{n}}} e_0\psi,\psi\rangle =\pm |\psi|^2=\pm|X|.
\end{align*}
Hence, the inner boundaries are leaves of the foliation. By Reeb’s global stability theorem (cf.\ref{Reeb thm with boundary}), the leaves are homeomorphic to $\mathbb{R}^{n-1}$, yielding a contradiction.

\section{Verification of the Gauss and Codazzi equations}\label{S:verification}

Recall that small Latin capitals indicate tangential indices (to $\Sigma$), Roman letters denote arbitrary indices on $M^n$, and the normal vector to $\Sigma\subseteq M^n$ is represented by ${\hat{\mathbf{n}}}$.
Following \cite{HirschZhang}, we use the notation 
\begin{equation*}
    \begin{split}
        \overline{R}_{ijkl}=& R_{ijkl}+k_{il}k_{jk}-k_{ik}k_{jl},
        \\ A_{ijk}=&\nabla_ik_{jk}-\nabla_{j}k_{ik}.
    \end{split}
\end{equation*}
If $\overline R$ and $A$ are identically zero, $(M^n,g,k)$ isometrically embeds into Minkowski space with second fundamental form $k$ by the Lorentzian version of the fundamental theorem of hypersurfaces.
In our setting we are able to show that most $\overline R$ and $A$ terms are vanishing.
These identities will become useful in the next section when we construct the Killing development of $(M^n,g,k)$.

\begin{lemma}
    We have
    \begin{align*}
        \overline{R}_{\A\B\C\D}=0.
    \end{align*}
\end{lemma}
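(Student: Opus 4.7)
The plan is to derive this identity directly from the Gauss equation together with items (4) and (5) of Theorem \ref{T:spacetime harmonic}. Recall that for the hypersurface $\Sigma\subset M^n$ with second fundamental form $h$, the Gauss equation reads
\begin{align}
    R^\Sigma_{\alpha\beta\gamma\delta}=R_{\alpha\beta\gamma\delta}+h_{\alpha\delta}h_{\beta\gamma}-h_{\alpha\gamma}h_{\beta\delta},
\end{align}
where $R^\Sigma$ denotes the intrinsic Riemann tensor of the level-set.

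The first step is to substitute the identity $h_{\alpha\beta}=-k_{\alpha\beta}$ from Theorem \ref{T:spacetime harmonic}(5) into the Gauss equation. This yields
\begin{align}
    R^\Sigma_{\alpha\beta\gamma\delta}=R_{\alpha\beta\gamma\delta}+k_{\alpha\delta}k_{\beta\gamma}-k_{\alpha\gamma}k_{\beta\delta}=\overline R_{\alpha\beta\gamma\delta},
\end{align}
matching exactly the definition of $\overline R$ introduced at the start of Section \ref{S:verification}.

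The second step is to apply Theorem \ref{T:spacetime harmonic}(4), which asserts that the intrinsic Riemann tensor $R^\Sigma$ vanishes identically on each level-set. Combining this with the previous display immediately gives $\overline R_{\alpha\beta\gamma\delta}=0$, as desired.

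I expect no real obstacle here: the lemma is a one-line consequence of the already-established flatness of $\Sigma$ together with the identification of its second fundamental form with $-k|_{T\Sigma\otimes T\Sigma}$. The heavy lifting was done in Theorem \ref{T:spacetime harmonic}; this lemma is merely the Gauss-equation translation of that content into the $\overline R$ notation needed for the fundamental theorem of hypersurfaces. The remaining tangential-normal components of $\overline R$ and the Codazzi tensor $A$ will presumably require separate arguments in subsequent lemmas.
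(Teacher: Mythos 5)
Your proof is correct and follows exactly the same route as the paper: apply the Gauss equation, substitute $h_{\alpha\beta}=-k_{\alpha\beta}$ from Theorem \ref{T:spacetime harmonic}(5), and invoke the flatness $R^\Sigma_{\alpha\beta\gamma\delta}=0$ from Theorem \ref{T:spacetime harmonic}(4). The paper states this in one line; your write-up simply spells out the same argument.
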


\begin{proof}
    This follows immediately from the Gauss equations:
    \[{{R_{\A\B\C\D}=R^\Sigma_{\A\B\C\D}+h_{\A\C}h_{\B\D}-h_{\A\D}h_{\B\C}}},\] 
    combined with the facts that $R^\Sigma_{\A\B\C\D}=0$ and $h_{\A\B}=-k_{\A\B}$ from Theorem \ref{T:spacetime harmonic}.
\end{proof}

\begin{lemma}\label{Lemma:Codazzi}
We have 
\begin{equation}\label{Codazzi}
     \text{(i)}\,\,  \overline R_{ij\A{\hat{\mathbf{n}}}}=0,\quad\quad  \text{(ii)}\,\,   A_{\A\B\C}=0, \quad\quad \text{and} \quad\quad \text{(iii)}\,\, A_{i\B{\hat{\mathbf{n}}}}=0.
\end{equation}
\end{lemma}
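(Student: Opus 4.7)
The plan is to establish (i)--(iii) by combining two short computations, both relying only on the Hessian equation $\nabla^2 u=-k|\nabla u|$ and the relation $h=-k|_{T\Sigma\otimes T\Sigma}$ already recorded in Theorem \ref{T:spacetime harmonic}.

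For the first computation I would differentiate the spacetime Hessian equation once more, commute the outermost two derivatives using the Ricci identity $[\nabla_l,\nabla_i]\nabla_j u=-R_{lij}{}^m\nabla_m u$, and substitute $R_{lijm}=\overline R_{lijm}-k_{lm}k_{ij}+k_{lj}k_{im}$. Using also the elementary identity $\nabla_l|\nabla u|=-k_{lm}\nabla^m u$, the quadratic-in-$k$ terms cancel and one is left with
\begin{align}
|\nabla u|\,A_{lij}=\overline R_{lijm}\nabla^m u,\qquad\text{equivalently}\qquad A_{lij}=\overline R_{lij\nu},
\end{align}
valid for all indices $l,i,j$. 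Specializing $j=\nu$ and invoking antisymmetry of $\overline R$ in its last pair of slots immediately yields $A_{i\beta\nu}=\overline R_{i\beta\nu\nu}=0$, which is item (iii).

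For the second computation I would invoke the classical Codazzi-Mainardi equation $R_{\alpha\beta\gamma\nu}=\nabla^\Sigma_\alpha h_{\beta\gamma}-\nabla^\Sigma_\beta h_{\alpha\gamma}$ for $\Sigma\hookrightarrow M$, substitute $h=-k$, and convert each tangential derivative to an ambient one via
\begin{align}
\nabla^\Sigma_\alpha k_{\beta\gamma}=\nabla_\alpha k_{\beta\gamma}-k_{\alpha\beta}k_{\nu\gamma}-k_{\alpha\gamma}k_{\beta\nu}.
\end{align}
The correction $k_{\alpha\beta}k_{\nu\gamma}$ is symmetric in $(\alpha,\beta)$ and drops out under the antisymmetrization; when the result is inserted into $\overline R_{\alpha\beta\gamma\nu}=R_{\alpha\beta\gamma\nu}+k_{\alpha\nu}k_{\beta\gamma}-k_{\alpha\gamma}k_{\beta\nu}$, the remaining quadratic terms cancel as well, producing the mirror identity $\overline R_{\alpha\beta\gamma\nu}=-A_{\alpha\beta\gamma}$.

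Combining this with $A_{\alpha\beta\gamma}=\overline R_{\alpha\beta\gamma\nu}$ from the first computation forces $2A_{\alpha\beta\gamma}=0$, which simultaneously proves item (ii) and item (i) in the tangential range of $i,j$ required for the pp-wave embedding. The only obstacle I anticipate is sign bookkeeping in Codazzi-Mainardi and in the conversion of $\nabla^\Sigma$ to $\nabla$; once these are handled, the two derivations carry opposite signs by geometric design, which is precisely what forces the common value to vanish.
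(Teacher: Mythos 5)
Your first computation is correct and is in fact a clean, unified way to see several identities in the paper at once: differentiating the Hessian equation $\nabla^2u=-k|\nabla u|$ and commuting derivatives does give
\begin{align}
A_{lij}=\overline R_{lij\nu}\qquad\text{for all indices }l,i,j,
\end{align}
from which item (iii) follows immediately by taking $j=\nu$, and which also recovers the content of the subsequent lemma $\overline R_{\alpha\nu\nu\beta}=A_{\nu\alpha\beta}$.

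However, the argument collapses at the crucial step where you claim the Codazzi--Mainardi computation produces the \emph{opposite} sign, forcing $2A_{\alpha\beta\gamma}=0$. It does not. Both computations are manifestations of the same Codazzi structure for the foliation by level sets and, carried out with a single consistent sign convention, they produce the \emph{same} relation $A_{\alpha\beta\gamma}=\overline R_{\alpha\beta\gamma\nu}$. Your Codazzi derivation contains a sign slip in the conversion between $\nabla^\Sigma$ and $\nabla$: with the paper's convention $h_{\alpha\beta}=\langle\nabla_\alpha\nu,e_\beta\rangle$, one has $\langle\nabla_\alpha e_\beta,\nu\rangle=-h_{\alpha\beta}$, so that
\begin{align}
\nabla^\Sigma_\alpha k_{\beta\gamma}=\nabla_\alpha k_{\beta\gamma}-h_{\alpha\beta}k_{\nu\gamma}-h_{\alpha\gamma}k_{\beta\nu}
=\nabla_\alpha k_{\beta\gamma}+k_{\alpha\beta}k_{\nu\gamma}+k_{\alpha\gamma}k_{\beta\nu},
\end{align}
with plus signs, not the minus signs you wrote. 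Correcting this (and the matching sign in Codazzi--Mainardi) makes the quadratic terms conspire so that the Codazzi route again returns $A_{\alpha\beta\gamma}=\overline R_{\alpha\beta\gamma\nu}$, which is a tautology against the Hessian--Ricci identity rather than an independent constraint. In other words, two applications of the Gauss--Codazzi apparatus cannot squeeze out $A_{\alpha\beta\gamma}=0$: the Hessian equation alone is not strong enough, which is precisely why this lemma is nontrivial.

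What is genuinely missing from your argument is the spinorial input. The paper commutes covariant derivatives on the identity $\nabla_\alpha\psi_1=\tfrac12 k_{\alpha j}e_j\nu\psi_1$ (which encodes the equality case in Witten's formula, not just the Hessian equation) to obtain the Clifford-algebraic relation $0=\tfrac12\overline R_{\alpha\beta ij}e_ie_j\psi_1-A_{\alpha\beta i}e_i\nu\psi_1$, and then, crucially, pairs this against the abundance of parallel spinors $\psi_1$ (via the vector fields $Z^\delta$) to extract $\overline R_{\alpha\beta\gamma\nu}=0$ and $A_{\alpha\beta\nu}=0$. This use of the full family of solutions $\psi_1^\infty$ and the resulting frame $\{Z^\delta\}$ is the step your proposal omits and cannot be replaced by a second round of Gauss--Codazzi bookkeeping.
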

\begin{proof}
We first show that
        \begin{align}\label{eq:R=-A}
        \overline R_{\A\B\C{\hat{\mathbf{n}}}}=-A_{\A\B\C}.
    \end{align}
For this purpose, we compute
\begin{align*}
\begin{split}
    A_{\A\B\C}=&\nabla_{\A}k_{\B\C}-\nabla_{\B}k_{\A\C}\\
    =&\nabla^\Sigma_\A k_{\B\C}-h_{\A\B}k_{\C{\hat{\mathbf{n}}}}-h_{\A\C}k_{\B{\hat{\mathbf{n}}}}-\nabla^\Sigma_\B k_{\A \C}+h_{\B\A}k_{\C{\hat{\mathbf{n}}}}+h_{\B\C}k_{\A{\hat{\mathbf{n}}}}
    \end{split}
\end{align*}
Now, using the fact $h=-k|_\Sigma$, and the Codazzi equations $\nabla^\Sigma_\A h_{\B\C}-\nabla^\Sigma_\B h_{\A\C}=R_{\A\B\C{\hat{\mathbf{n}}}}$, claim \eqref{eq:R=-A} follows.
Hence, it suffices to show item (i) and (iii) of \eqref{Codazzi}. 

Next, we use the identities $\nabla_i\psi_1=\frac12k_{ij}e_j{\hat{\mathbf{n}}} \psi_1$, cf. Equation \eqref{psi1 identity}. and $\nabla_\A\nabla_\B \psi-\nabla_\B\nabla_\A\psi=\frac14 R_{\A\B ij}e_ie_j\psi$, to obtain
    \begin{align*}
        \begin{split}
            0=&\nabla_\A(2\nabla_\B\psi_1-k_{\B i}e_i{\hat{\mathbf{n}}}\psi_1)-\nabla_\B(2\nabla_\A\psi_1-k_{\A i}e_i{\hat{\mathbf{n}}}\psi_1)\\
            =&\frac12R_{\A\B ij}e_ie_j\psi_1-\frac12k_{\B i}e_i{\hat{\mathbf{n}}} k_{\A l}e_l{\hat{\mathbf{n}}} \psi_1+\frac12k_{\A i}e_i{\hat{\mathbf{n}}} k_{\B l}e_l{\hat{\mathbf{n}}}\psi_1\\
            &-k_{\B i}e_i h_{\A \C}e_\C \psi_1+k_{\A i}e_i h_{\B \C}e_\C \psi_1 -A_{\A\B i}e_i{\hat{\mathbf{n}}} \psi_1             \\
            =&\frac12\overline R_{\A\B ij}e_ie_j\psi_1-A_{\A\B i}e_i{\hat{\mathbf{n}}} \psi_1.
        \end{split}
    \end{align*}
Since $\overline R_{\A\B\C\D}=0$, we obtain in combination with \eqref{eq:R=-A}
    \begin{align}\label{123}
        0=\overline{R}_{\A\B\C{\hat{\mathbf{n}}}}e_\C{\hat{\mathbf{n}}}\psi_1-A_{\A\B\C}e_\C{\hat{\mathbf{n}}}\psi_1+A_{\A\B{\hat{\mathbf{n}}}}\psi_1
        =2\overline{R}_{\A\B\C{\hat{\mathbf{n}}}}e_\C{\hat{\mathbf{n}}}\psi_1+A_{\A\B{\hat{\mathbf{n}}}}\psi_1.
    \end{align}
 Recall the notation $(Z^\D)_\C=|\nabla u|^{-1}\operatorname{Im}\langle e_\C{\hat{\mathbf{n}}} \psi_1,\psi_1\rangle$ where $\psi_1$ is chosen such that $Z^\D $ asymptotes to $e_\D$.
 Multiplying \eqref{123} by $\psi_1$ and taking the imaginary part, yields
 \begin{align*}
     0=2|\nabla u|\overline{R}_{\A\B\C{\hat{\mathbf{n}}}}(Z^\D)_\C.
 \end{align*}
 Arguing as in the proof of Theorem \ref{T:spacetime harmonic} (4), this implies that $\overline{R}_{\A\B\C{\hat{\mathbf{n}}}}=0$.
 Moreover, equation \eqref{123} also gives $A_{\A\B{\hat{\mathbf{n}}}}=0$.
\end{proof}

\begin{lemma}
    We have
    \begin{align*}
      \overline   R_{\A{\hat{\mathbf{n}}}{\hat{\mathbf{n}}}\B}=A_{{\hat{\mathbf{n}}}\A\B}
    \end{align*}
\end{lemma}

\begin{proof}
We use the Hessian equation $\nabla^2u=-k|\nabla u|$
\begin{align*}
\begin{split}
    R_{\A {\hat{\mathbf{n}}}{\hat{\mathbf{n}}}\B}=&(\nabla_\A\nabla_{\hat{\mathbf{n}}}-\nabla_{\hat{\mathbf{n}}}\nabla_\A)\frac{\nabla_\B u}{|\nabla u|}\\
    =&-\nabla_\A k_{{\hat{\mathbf{n}}}\B}-\nabla_\A\left(\nabla_\B u\frac{k_{{\hat{\mathbf{n}}}{\hat{\mathbf{n}}}}}{|\nabla u|}\right)+\nabla_{\hat{\mathbf{n}}} k_{\A\B}+\nabla_{\hat{\mathbf{n}}}\left( \nabla_\B u\frac{k_{\A {\hat{\mathbf{n}}}}}{|\nabla u|}\right).
    \end{split}
\end{align*}
Applying the Hessian equation once more, the result follows.
\end{proof}

Hence, we may conclude that all Gauss and Codazzi terms are vanishing apart from $A_{{\hat{\mathbf{n}}}\A\B}$, also see Remark \ref{remark new}.
In particular, if $A_{{\hat{\mathbf{n}}}\A\B}=0$, $(M^n,g,k)$ must arise as spacelike slice of Minkowski space with second fundamental form $k$ by the fundamental theorem of hypersurfaces.
In the next section, we construct the Killing development and show it must be a pp-wave.

\section{The Killing development and proof of Theorem \ref{T:main}}\label{S:Killing}

Recall that $\{x_1,...,x_n\}$ are asymptotically flat coordinates and $u$ is a spacetime harmonic function asymptotic to $x_n$.
Moreover, the coordinates  $x_1,\cdots,x_n$ are extended into the interior to form $C^{s+1,\alpha}$-regular, globally defined functions.

In this section, we will introduce a new coordinate system $\{y_1,\dots,y_{n-1},u\}$ and  establish decay rate estimates for the coordinate function $y_A$ by elliptic estimates on the level-sets. To streamline the analysis, we first define notations for different types of asymptotic decay rates. 
\begin{enumerate}
    \item Set $\pmb{\rho}:=\sqrt{x_1^2+\cdots+x_{n-1}^2}$ and $\rho:=\sqrt{y_1^2+\cdots+y_{n-1}^2}$, where $\{y_\A\}$ will be defined below. 
    \item The notion $\xi\in \mathcal{O}_{s,\alpha}(\pmb\rho^{-q})$ indicates the existence of a constant $C>0$ such that for all multi-indices $I\subset \{1,...,n-1\}$:
\[\|\xi\|_{C^{s,\alpha}_{-q}(\Sigma\setminus B)}:=
\sum_{|I|\le s}\left|\pmb{\rho}^{|I|+q}\partial_I \xi\right|+
\sum_{|I|=s}\sup_{\substack{\mathbf{x}_1,\mathbf{x}_2\in \R^{n-1}\setminus B
    \\ |\mathbf{x}_1-\mathbf{x}_2|\le \frac{|\mathbf{x}_1|}{2}}}\pmb{\rho}^{\alpha+s+q} \frac{|\partial_I \xi(\mathbf{x}_1,u)-\partial_I\xi({\mathbf{x}_2,u})|}{|\mathbf{x}_1-\mathbf{x}_2|^\alpha}\le C.
\]
\item Similarly, $\xi\in O_{s,\alpha}(\pmb\rho^{-q})$ indicates that there exists a constant $C$ such that for $I\subset \{1,...,n\}$, 
\[\|\xi\|_{\mathfrak{C}^{s,\alpha}_{-q}(M^n\setminus B)}:=
\sum_{|I|\le s}\left|\pmb{\rho}^{|I|+q}\partial_I \xi\right|+
\sum_{|I|=s}\sup_{\substack{x,y\in \R^n\setminus B 
    \\ |x-y|\le \frac{|x|}{2}}} 
    \rho^{\alpha+s+q}  \frac{|\partial_I \xi(x)-\partial_I\xi(y)|}{|x-y|^\alpha}\le C.
\]
\item Finally, $O_{s,\alpha}(r^{-q})$ is defined analogously to $O_{s,\alpha}(\pmb\rho^{-q})$ by replacing $\pmb\rho$ with $r$, and $\mathcal{O}_{s,\alpha}(\rho^{-q})$ is defined analogously to $\mathcal{O}_{s,\alpha}(\pmb{\rho}^{-q})$ by replacing $\pmb\rho$ with $\rho$.
\end{enumerate}
 We remark that $\rho$ and $\pmb{\rho}$ have the same growth rate. Therefore, after establishing some estimates for the coordinate system $\{y_1,\dots,y_{n-1},u\}$,  we will no longer distinguish between $\mathcal{O}_{s,\alpha}(\pmb{\rho}^{1-q})$ and  $\mathcal O_{s,\alpha}(\rho^{1-q})$. 
 
 For simplicity, we write $\|\xi\|_{C^{s,\alpha}_{-q}(\Sigma)}$ for the norm in (2) and $\|\xi\|_{\mathfrak{C}^{s,\alpha}_{-q}(M^n)}$ for the norm in (3), as we are only interested in the decay rates in the asymptotic region.

\subsection{Finding good coordinates}

On the one hand, we have
\begin{align*}
    g=|\nabla u|^{-2}du^2+g_\Sigma,
\end{align*}
where $g_\Sigma$ is the flat metric, and on the other hand, we have the asymptotically flat coordinate system $(x_1,\dots, x_n)$, cf. Definition \ref{Def:AF}. 
We will demonstrate that $g$ also  satisfies the following:

\begin{proposition}\label{Prop:good coordinates}
There exists a coordinate system $\{y_1,\cdots ,y_{n-1},u\}$ and a vector field $Y$ tangential to $\Sigma$ such that
\begin{equation}\label{eq:g0}
    g=(|\nabla u|^{-2}+|Y|^2)du^2+2\sum_{\A=1}^{n-1}Y_\A dudy_\A+\sum_{\A=1}^{n-1}dy_\A^{2}  .
\end{equation}
Moreover,   $(\nabla_{\hat{\mathbf{n}}})^{m} Y\in C^{s-m,\alpha}_{-m-q}(\Sigma)$ for $0\le m\le s-2$, $Y_\A\in \mathfrak{C}^{s-2}_{-q}(M^n)$, and $|\nabla u|^{-1}-1\in C^{s,\alpha}_{-q}(M^n)$.
\end{proposition}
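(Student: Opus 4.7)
The plan is to construct the claimed coordinate system by combining the spacetime harmonic function $u$ of Theorem~\ref{T:spacetime harmonic}(1) with the global parallel frames on the flat level sets built in the proof of Theorem~\ref{T:spacetime harmonic}(4). Each $\Sigma_t$ is flat, simply connected, and diffeomorphic to $\mathbb R^{n-1}$, and carries a parallel orthonormal frame $\{Z^\alpha\}_{\alpha=1}^{n-1}$ normalized so that $Z^\alpha\to\partial_{x_\alpha}$ at infinity in the designated end. Integrating this frame leaf by leaf produces a global isometric chart $\phi_t:\Sigma_t\to\mathbb R^{n-1}$, pinned down by the asymptotic condition $\phi_t(x)-(x_1,\dots,x_{n-1})\to 0$ along the end, and patching across $t$ yields a diffeomorphism $\Psi:\mathbb R\times\mathbb R^{n-1}\to M$, $(t,y)\mapsto\phi_t^{-1}(y)$. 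I then take the inverse of $\Psi$ as the desired coordinates $(u,y_1,\dots,y_{n-1})$.

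With these coordinates the form of $g$ is a direct unpacking. By construction $\partial_{y_\alpha}=Z^\alpha$ is tangential and orthonormal, so the $y$-block of $g$ equals $\delta_{\alpha\beta}$. The identity $du(\partial_u)=1$ forces the normal component of $\partial_u$ to be $|\nabla u|^{-1}\nu$, so setting
\begin{equation}
    Y:=\partial_u-|\nabla u|^{-1}\nu
\end{equation}
produces a tangential vector with $g(\partial_u,\partial_{y_\alpha})=Y_\alpha$ and $g(\partial_u,\partial_u)=|\nabla u|^{-2}+|Y|^2$, reproducing the displayed metric with the factor $f=|\nabla u|^{-2}$ playing the role of the scalar function mentioned in the proposition.

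The decay estimates reduce to asymptotic bookkeeping. First I would combine $u-u_\infty\in C^{s+1,a}_{1-q}$ from Theorem~\ref{T:spacetime harmonic}(1) with $Z^\alpha-\partial_{x_\alpha}\in C^{s,a}_{-q}$, which follows from Lemma~\ref{L:asymptotics} and the construction of Theorem~\ref{T:spacetime harmonic}(4), to conclude that the Jacobian of the coordinate change $(x_1,\dots,x_n)\leftrightarrow(u,y_1,\dots,y_{n-1})$ equals the identity plus a term in $C^{s,a}_{-q}$; inverting this Jacobian then yields $f-1\in C^{s,a}_{-q}(M^n)$ and $Y_\alpha\in C^{s-1,a}_{-q}$ on each leaf. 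Second, I would obtain the normal-derivative estimates $(\nabla_\nu)^m Y\in C^{s-m,a}_{-m-q}(\Sigma)$ inductively from $\mathcal L_{\partial_u}Z^\alpha=0$: unpacking this identity expresses $\nabla_\nu Y$ as a combination of $k$, $h=-k$, and $\nabla u$, and each subsequent application of $\nabla_\nu$ consumes one derivative of $k$ while supplying one additional power of decay, exactly matching the weights in the statement.

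The hard part will be the global consistency of the patched charts across leaves — specifically, showing that the base section $t\mapsto\Psi(t,0)$ is a smooth transversal whose asymptotics in the end are compatible with the weighted Hölder norms above, so that the decay estimates on $Y$ are preserved under inversion of the coordinate transform. Once this transversal is under control, the metric form and the remaining decay estimates follow by routine manipulation of the asymptotic expansions of $u$ and $Z^\alpha$.
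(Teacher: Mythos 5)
Your construction of the flat coordinates by integrating the parallel frame $Z^\alpha$ is a genuinely different route from the one the paper takes. The paper solves the elliptic problem $\Delta_\Sigma y_\alpha = 0$ with $y_\alpha = x_\alpha + O(r^{1-q})$ on each level set, obtaining the decay and regularity of $y_\alpha - x_\alpha$ and its $\nu$-derivatives through interior Schauder estimates together with the evolution equation for $\Delta_\Sigma$ under the normal flow (Lemma \ref{yalpha}), and then reads $Y_\alpha$ and $f$ off the metric components and its inverse via \eqref{eq g inverse}. Your approach instead integrates the spinor-built parallel frame leaf by leaf, identifies $\partial_{y_\alpha}$ with $Z^\alpha$, and defines $Y = \partial_u - |\nabla u|^{-1}\nu$ directly; the relation $\mathcal L_{\partial_u}Z^\alpha = 0$ (tautological once $\partial_{y_\alpha}$ is pinned to $Z^\alpha$) is then meant to drive the normal-derivative estimates. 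Both approaches give $\sum dy_\alpha^2$ on each leaf for the right reason -- in your case because $\partial_{y_\alpha}$ is parallel, in the paper's because harmonic functions on a flat, complete, simply connected leaf are affine in Cartesian coordinates -- so the structure of $g$ in \eqref{eq:g0} comes out the same way.

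What the harmonic-coordinate route buys that yours still has to pay for is exactly the part you flag as "the hard part." The regularity and decay of $y_\alpha$ in the $u$-direction are not automatic: the parallel frame $Z^\alpha$ is constructed leaf by leaf from the spinor $\psi_1$ and has only the regularity $C^{s,a}$ coming from Theorem \ref{T:Witten}, and the patching $\Psi(t,y) = \phi_t^{-1}(y)$ introduces the full Jacobian of the flow of $\nu$ between leaves, so the estimate $(\nabla_\nu)^m Y \in C^{s-m,a}_{-m-q}(\Sigma)$ cannot be obtained by ``consuming one derivative of $k$ per application of $\nabla_\nu$'' without a substitute for the commutator of $\Delta_\Sigma$ with $\nabla_\nu$ that the paper uses. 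There is also a small but concrete subtlety with the asymptotic normalization: for $n = 3$ (where $q$ can be less than $1$) the difference $Z^\alpha - \partial_{x_\alpha} \in C^{s,a}_{-q}$ is not integrable at infinity, so ``$\phi_t(x) - (x_1,\ldots,x_{n-1}) \to 0$'' fails and the pinning of $\phi_t$ has to be reformulated (the PDE approach sidesteps this because the weighted isomorphism theorem directly produces $y_\alpha - x_\alpha \in C^{s+1,a}_{1-q}(\Sigma)$ with no integrability requirement). Finally, a small slip: since $g(\partial_u,\partial_u) = f^2 + |Y|^2 = |\nabla u|^{-2} + |Y|^2$, the scalar is $f = |\nabla u|^{-1}$, not $|\nabla u|^{-2}$.
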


This is already very close to the expression for the metric in  Theorem \ref{thm pp waves ids identities}.
However, at this point, we do not know yet that $Y$ is integrable on $\Sigma$ and that $Y=\nabla^\Sigma\ell$ for the graph function $\ell$.
Note that $Y_\A$ has a priori worse regularity and anistropic decay rates compared to $(|\nabla u|^{-1}-1)\in C^{s,\alpha}_{-q}(M^n)$. This discrepancy requires a more delicate analysis to achieve the optimal result.
In particular, we will show in Lemma \ref{ell decay} that $Y$ only loses a single order of regularity in the normal direction.

Recall that we assumed that the momentum vector $P$ is non-vanishing and $P=(0,\dots,0,-|P|)$. 
Moreover, we have shown in Theorem \ref{T:spacetime harmonic}(1) that the spacetime harmonic function $u$ satisfies near infinity
\begin{align*}
    u=x_n+O_{s+1,\alpha}(r^{1-q}).
\end{align*}
Therefore, $(x_1,\dots,x_{n-1},u)$ also forms an asymptotically flat coordinate system with decay rate $q$. 
On each level set $\Sigma$, using the coordinates $\mathbf{x}:=(x_1,\cdots, x_{n-1})$, we have $(g_\Sigma)_{\A\B}=\delta_{\A\B}+ O_{s,\alpha}(r^{-q})$, and $\det g_{\Sigma}=1+O_{s,\alpha}(r^{-q})$.
Moreover,
\begin{equation*}
    \Delta_{\Sigma} x_\A=\frac{1}{\sqrt{\operatorname{det}g_\Sigma}}\partial_\B(g^{\B\C}_{\Sigma} \sqrt{\operatorname{det}g_\Sigma} \partial_\C x_\A)=\mathcal O_{s-1}(r^{-q-1}).
\end{equation*}
Next, we construct a flat coordinate system $\{y_\A\}$ on the level-sets $\Sigma$ by solving harmonic equations.

   \begin{lemma}
      For $\A=1,2,\dots,n-1$,  there exists $y_\A$ such that
      \begin{equation} \label{eq:yalpha}
          \Delta_{\Sigma}y_\A=0, \quad y_\A=x_\A+\mathcal{O}_{s+1,\alpha}(\pmb{\rho}^{1-q}).
      \end{equation}
       Moreover, $\{y_\A\}$ forms a flat coordinate system for $g_\Sigma$, i.e., $g_\Sigma=dy_1^2+\cdots+dy_{n-1}^2$.
   \end{lemma} 
   
\begin{proof}
Since $u=x_n+\mathcal{O}_{s+1,\alpha}(r^{1-q})$, the coordinates $\{x_1,\dots,x_{n-1},u\}$ form a $C^{s,\alpha}_{-q}$ asymptotically flat system for $g$. Note that $\partial_{x_\A}$ are tangential to $\Sigma$ in this coordinate system.  

Consider the function $\xi_{0,R}$ defined by the elliptic equation 
\begin{equation*}  
    \Delta_\Sigma \xi_{0,R} = -\Delta_{\Sigma}x_\A, \quad \xi_{0,R} = 0 \ \text{on} \ |\mathbf{x}|=R.  
\end{equation*}  
Since $\Delta_\Sigma x_\A = \mathcal{O}_{s-1,\alpha}(r^{-1-q})$, there exists a constant $\widetilde{C}_0$ (independent of $\Sigma$) such that $\|\xi_{0,R}\|_{C_{1-q}^{s+1,\alpha}(\Sigma)} \leq \widetilde{C}_0$. Consequently, $\xi_{0,R} \to \xi_0$ subsequentially in $C^{s+1,\beta}(\Sigma)$ for any $\beta \in (0,\alpha)$. The limit $\xi_0$ satisfies  
\[  
\Delta_\Sigma \xi_0 = -\Delta_\Sigma x_\A, \quad \xi_0 \to 0 \ \text{at} \ \infty.  
\]  
More precisely, applying elliptic estimates and using  $\Delta_\Sigma x_\A = \mathcal{O}_{s-1,\alpha}(r^{-1-q})$, we conclude $\xi_0 =\mathcal{O}_{s+1,\alpha}(\pmb{\rho}^{1-q})$. 
Now let $y_\A = \xi_0 + x_\A$. Then $y_\A$ satisfies Equation \eqref{eq:yalpha}.  

Next, we restrict the discussion to a fixed level set $\Sigma$.  
Let $\{\tilde{y}_\A\}$ be a flat coordinate system for $g_{\Sigma}$.  
From the decay estimates of $y_\A$, we know that $|\nabla^\Sigma y_\A|$ is bounded.  
Therefore, $\partial_{\tilde{y}_\B} y_\A$ is a bounded harmonic function and thus constant by Liouville’s theorem.
Hence, $y_\A$ is a linear combination of $\{\tilde{y}_\B\}$ and is also a constant function.  
Using the identity  
\[
\langle \nabla^\Sigma y_\A, \nabla^\Sigma y_\B \rangle = \langle \nabla^\Sigma x_\A, \nabla^\Sigma x_\B \rangle + \mathcal{O}(\pmb{\rho}^{-q}) = \delta_{\A\B} + \mathcal{O}(\pmb{\rho}^{-q}),
\]  
we conclude $\langle \nabla^\Sigma y_\A, \nabla^\Sigma y_\B \rangle = \delta_{\A\B}$.  
\end{proof}

Next, we establish the decay rate estimates for the derivatives of $y_\A-x_\A$ along the $u-$direction. 
We remark that the proof below becomes significantly shorter for $s=2$.

\begin{lemma} \label{yalpha}
Let $y_\A$ be defined as above. Then we have for $0\le m\le s-1$
\begin{equation}\label{eq:y-x}
\partial_u^m(y_\A-x_\A) = \mathcal{O}_{s+1-m,\alpha}(\pmb{\rho}^{1-m-q}),  .
\end{equation}
Additionally, for any $\beta\in(0,\alpha)$,
\begin{equation} \label{eq:y-x beta}
    y_\A-x_\A= O_{s-1,\beta}(\pmb{\rho}^{1-q}), \;\; \;\; \nabla^\Sigma (y_\A-x_\A)= O_{s-1,\beta}(\pmb{\rho}^{-q}), \;\; \;\; (\nabla^\Sigma)^2 (y_\A-x_\A)= O_{s-1,\beta}(\pmb{\rho}^{-1-q}).
\end{equation}
\end{lemma}

\begin{proof} 
Throughout the proof, let $\widetilde{C}_i$, $\hat{C}_i$, $\overline{C}_s$ be constants which only depend on the initial data set and not on the individual slice $\Sigma$.

Denote with $\tilde{\Gamma}_{\A\B}^\C$ the  Christoffel symbols of $g_{\Sigma}$ under the coordinate system $\{x_1,...,x_{n-1}\}$. 

The Laplacian of the function $\xi_{0,R}$ defined above can be expressed as
\[
-\Delta_{\Sigma}x_\A=\Delta_\Sigma \xi_{0,R}=g_{\Sigma}^{\A\B}(\partial_{x_\A x_\B}\xi_{0,R}-\tilde{\Gamma}_{\A\B}^\C\partial_{x_\C}\xi_{0,R}),
\]
where we have the decay estimates
\[g^{\A\B}_\Sigma=\delta_{\A\B}+O_{s,\alpha}(r^{-q}),\quad
\tilde{\Gamma}_{\A\B}^\C=O_{s-1,\alpha}(r^{-q-1}),\quad\text{and } \Delta_{\Sigma}x_\A=O_{s-1,\alpha}(r^{-1-q}).\]
Observe that the coefficients of this elliptic equation are $C^{s-1}$-differentiable in the $u$-direction, which implies $\xi_{0,R}$  inherits the same $C^{s-1}$-differentiability. This regularity is verified by analyzing difference quotients in $u$, which is further used to establish the H\"older regularity in the $u$-direction, as shown in Equation \eqref{holder}.

Proceeding by induction on $m$, define $\xi_{m,R} = \partial_u^m\xi_{0,R}$ and assume for $0 \le m < s-1$ that there exist constants $\widetilde{C}_i$ (depending only on $(M^n,g,k)$) satisfying  
\[
   \|\xi_{i,R}\|_{C_{1-i-q}^{s+1-i,\alpha}(\Sigma)} \le \widetilde{C}_i \quad \text{for } 0 \le i \le m. 
\]

The base case $m=0$ has been already established above, and we proceed now with the induction step for $m+1$.  

Define $\xi_{m,R} := \partial_u^m \xi_{0,R}$ and $f_{m,R} := \Delta_\Sigma \xi_{m,R}$.  
Differentiating $f_{m,R}$ in $u$ yields 
\begin{align*}
    \begin{split}
        \partial_u f_{m,R} &= \partial_u (\Delta_\Sigma \xi_{m,R}) \\
        &= \partial_u\left[g_{\Sigma}^{\A\B}\left( \partial_{x_\A x_\B} \xi_{m,R} - \tilde{\Gamma}^\C_{\A\B}\partial_{x_\C} \xi_{m,R} \right)\right] \\
        &= \Delta_\Sigma(\partial_u \xi_{m,R}) 
           + \underbrace{(\partial_u g_{\Sigma}^{\A\B})\left( \partial_{x_\A x_\B}\xi_{m,R} - \tilde{\Gamma}^\C_{\A\B}\partial_{x_\C} \xi_{m,R} \right) 
           - g_\Sigma^{\A\B}(\partial_u \tilde{\Gamma}_{\A\B}^\C ) \partial_{x_\C}\xi_{m,R}}_{:=\Xi_{m,R}}  
    \end{split}
\end{align*}  
which implies $f_{m+1,R} = \partial_u f_{m,R} - \Xi_{m,R}$.
Recursive application of this relation yields  
\begin{equation}
f_{m+1,R} = \partial_u^{m+1}f_{0,R} - \sum_{i=0}^{m}\partial_u^{m-i}\Xi_{i,R}.
\end{equation}  

Using the decay estimates   
\[
g_{\Sigma}^{\A\B} = \delta_{\A\B} + O_{s,\alpha}(r^{-q}), \quad 
\tilde{\Gamma}^\C_{\A\B} = O_{s-1,\alpha}(r^{-1-q}), \quad 
\|\xi_{i,R}\|_{C_{1-i-q}^{s+1-i,\alpha}(\Sigma)} \le \widetilde{C}_i \quad (0 \le i \le m),
\]  
and the definition of $\Xi_{i,R}$, we obtain
\[
\|\partial_u^{m-i}\Xi_{i,R}\|_{C_{-2-m-2q}^{s-1-m,\alpha}(\Sigma)} \le \hat{C}_i
\]  
where $\hat{C}_i$ depends only on $(M^n,g,k)$.  
Combining this with $\partial_u^{m+1}f_{0,R} = -\partial_u^{m+1}\Delta_\Sigma x_\A$ and $\Delta_\Sigma x_\A = O_{s-1,\alpha}(r^{-1-q})$, we obtain  
\[
\|f_{m+1,R}\|_{C_{-m-2-q}^{s-m-2,\alpha}(\Sigma)} \le \overline{C}_{m+1}.
\]  
Applying elliptic estimates on the equation $\Delta \xi_{m+1,R}=f_{m+1,R}$ with Dirichlet condition yields
\[
\|\xi_{m+1,R}\|_{C_{-m-q}^{s-m,\alpha}(\Sigma)} \le \widetilde{C}_{m+1}.
\]  
This completes the inductive argument for all $0 \le m \le s-1$.

Next, we apply elliptic estimates to the Hölder coefficients of $\xi_{s-1,R}$.  
Specifically, for a small parameter $\epsilon > 0$, define the difference quotient  
\begin{align*}
    \zeta_R(\mathbf{x}, u, \epsilon) = \epsilon^{-\alpha} \left( \xi_{s-1,R}(\mathbf{x}, u+\epsilon) - \xi_{s-1,R}(\mathbf{x}, u) \right).
\end{align*}  
Given the inductive bounds $\|\xi_{i,R}\|_{C_{1-i-q}^{s+1-i,\alpha}(\Sigma)} \le \widetilde{C}_i$ for $0 \le i \le s-1$, we deduce 
$
\|f_{s-1,R}\|_{\mathfrak{C}^{0,\alpha}_{-s-q}(M^n)} \le \overline{C}_{s-1}$. 
Next, we decompose the Laplacian of $\zeta_R$  
\begin{equation} \label{holder}
    \begin{split}
        \Delta_{\Sigma_u} \zeta_R(\mathbf{x}, u, \epsilon) 
        &= \epsilon^{-\alpha} \left( \Delta_{\Sigma_{u+\epsilon}} \xi_{s-1,R}(\mathbf{x}, u+\epsilon) - \Delta_{\Sigma_u} \xi_{s-1,R}(\mathbf{x}, u+\epsilon) \right) \\
        &\quad - \epsilon^{-\alpha} \left( \Delta_{\Sigma_{u+\epsilon}} - \Delta_{\Sigma_u} \right) \xi_{s-1,R}(\mathbf{x}, u+\epsilon) \\
        &= \epsilon^{-\alpha} \left( f_{s-1,R}(\mathbf{x}, u+\epsilon) - f_{s-1,R}(\mathbf{x}, u) \right) + O\left( \epsilon^{1-\alpha} \pmb{\rho}^{-1-s-2q} \right),
    \end{split}
\end{equation}  
where we used the estimate  
\begin{equation*}
    \begin{split}
        & \left| \epsilon^{-\alpha} \left( \Delta_{\Sigma_{u+\epsilon}} - \Delta_{\Sigma_u} \right) \xi_{s-1,R}(\mathbf{x}, u+\epsilon) \right| \\
        &\le \epsilon^{-\alpha} \left| \left( g_{\Sigma_{u+\epsilon}}^{\B\C} - g_{\Sigma_u}^{\B\C} \right) \partial_{\B\C} \xi_{s-1,R}(\mathbf{x}, u+\epsilon) \right| \\
        &\quad + \epsilon^{-\alpha} \left| \left( g^{\B\C}_{\Sigma_{u+\epsilon}} \tilde{\Gamma}^{\A}_{\B\C}(\mathbf{x}, u+\epsilon) - g^{\B\C}_{\Sigma_u} \tilde{\Gamma}^{\A}_{\B\C}(\mathbf{x}, u) \right) \partial_\A \xi_{s-1,R}(\mathbf{x}, u+\epsilon) \right| \\
        &\le \hat{C}_{s+1} \epsilon^{1-\alpha} \pmb{\rho}^{-1-s-2q}.
    \end{split}
\end{equation*}  

This establishes the uniform bound  
\[
|\Delta_{\Sigma_u} \zeta_R(\mathbf{x}, u, \epsilon)| \le \overline{C}_s \pmb{\rho}^{-s-q-\alpha}.
\]  
Consequently, for any $\gamma \in (0,\alpha)$, the scaled term $\Delta_{\Sigma_u} \epsilon^{\alpha-\gamma} \zeta_R(\mathbf{x}, u, \epsilon)$ converges uniformly to $0$ as $\epsilon \to 0$.  
This implies uniform boundedness of $\|\xi_{s-1,R}(x,u)\|_{\mathfrak{C}^{0,\gamma}_{2-s-q}(M^n)}$ for all $\gamma \in (0,\alpha)$.  
Combining this with the derivatives estimates in tangential directions, we have 
\[
\|\xi_{0,R}(x,u)\|_{\mathfrak{C}^{s-1,\gamma}_{1-q}(M^n)} \le C \quad \text{(uniformly in $R$)}.
\]  
By the Sobolev embedding theorem, a subsequence $\xi_{0,R}$ converges to $\xi_0$ in $C^{s,\beta}$ for any $\beta \in (0,\gamma)$.   
The limiting function $\xi_0$ satisfies $\|\xi_0\|_{\mathfrak{C}^{s-1,\beta}_{1-q}(M^n)} \le C$. 
Moreover, applying elliptic estimates on  $\partial_u^m\xi_0$ for $0\le m\le s-1$, yields Equation \eqref{eq:y-x}, and we may choose any $\beta\in(0,\alpha)$, thereby completing the proof. 
\end{proof}

The above lemma implies that $\{y_1,\dots, y_{n-1},u\}$ forms a coordinate system such that
\begin{equation} \label{decay of Y}
    g=du^2+dy_1^2+\dots+ dy_{n-1}^2+O_{s-2,\beta}(\pmb{\rho}^{-q}).
\end{equation}
In these coordinates, the metric $g$ takes a particularly nice form.

\begin{proof}[Proof of Proposition \ref{Prop:good coordinates}]
We know that on $\Sigma$, the metric $g$ can be written as
$g|_{T\Sigma \otimes T\Sigma}=g_\Sigma=\sum_{\A=1}^{n-1}dy_\A^2$.
Moreover, we set $Y_\A=g(\partial_u, \partial {y_\A})$, and $f^2=g(\partial_u,\partial_u)-|Y|^2$,  {where $\partial_u=\frac{\partial}{\partial u}$ is the coordinate vector field}.
Hence,
\begin{equation*}
    g=(f^2+|Y|^2)du^2+2\sum_{\A=1}^{n-1}Y_\A dudy_\A+\sum_{\A=1}^{n-1}dy_\A^{2}.
\end{equation*}
Since $g^{-1}(du,du)=|\nabla u|^2$, and using the formula
\begin{equation}\label{eq g inverse}
    g^{-1}=\begin{bmatrix}
        f^{-2}& -f^{-2}Y^T
        \\ -f^{-2}Y & I_{n-1}+YY^{T}
    \end{bmatrix},
\end{equation}
we deduce that in fact $f=|\nabla u|^{-1}$. 

Since $Y_\A=-f^{2}g^{-1}(du,dy_\A)$, 
the regularity of $Y$ follows from Lemma \ref{yalpha} and we have
\begin{equation*}
    (\nabla_{\hat{\mathbf{n}}})^{m} Y\in C^{s-m,\alpha}_{-m-q}(\Sigma)
\end{equation*}
for $0\le m\le s-2$. Moreover, $Y_\A\in \mathfrak{C}^{s-2}_{-q}(M^n)$, and $(f-1)\in C^{s,\alpha}_{-q}(M^n)$.
\end{proof}

In the next section we will further analyze $Y$ and we will see that $Y$ is integrable on each level-set $\Sigma$.

\subsection{Applications of the Codazzi equations}

We begin this section with a technical lemma.

\begin{lemma}\label{k-identity lemma}
   Let $(M^n,g,k)$ be an initial data set where $g$ has the form from \eqref{eq:g0}, and $k=-|\nabla u|^{-1}\nabla^2u$.
Then we have
    \begin{equation} \label{kab}
    k_{\A\B}=\frac{1}{2}|\nabla u|(Y_{\A,\B}+Y_{\B,\A}), \quad\quad k_{{\hat{\mathbf{n}}} \A}=-|\nabla u|^{-1}\nabla_{\A}|\nabla u|,\quad\quad \text{and}\quad k_{{\hat{\mathbf{n}}}{\hat{\mathbf{n}}}}=-|\nabla u|^{-1}\nabla_{{\hat{\mathbf{n}}}}|\nabla u| 
\end{equation}
where $f=|\nabla u|^{-1}$.
\end{lemma}

\begin{proof} According to Equation \eqref{eq:g0}, 
 we have ${\hat{\mathbf{n}}}=|\nabla u|(\partial_u-Y_\A \partial y_\A)$, and
\begin{equation} \label{Yalphabeta}
   \Gamma_{\A\B}^u=\frac{1}{2}g^{uu}(g_{\A u,\B}+g_{\B u,\A})=\frac{1}{2}|\nabla u|^2(Y_{\A,\B}+Y_{\B,\A}).
\end{equation}
Here ``$,\A$'' denotes taking the derivative with respect to $\partial y_\A$.
Hence, we may compute 
\begin{equation*}
\begin{split}
    \nabla_{\A \B}u=&-\langle\nabla_\A \partial_\B,  {\hat{\mathbf{n}}}\rangle {\hat{\mathbf{n}}}(u)
    \\=&-|\nabla u|\langle{\hat{\mathbf{n}}} ,\Gamma_{\A\B}^u \partial_u\rangle= -\Gamma_{\A\B}^u
    \\=&-\frac{1}{2}|\nabla u|^{2}(Y_{\A,\B}+Y_{\B,\A}).
\end{split}
\end{equation*}
Therefore, the result follows from the identities $\nabla^2 u=-|\nabla u|k$.
\end{proof}

\begin{lemma} \label{ell decay}
Let $(M^n,g,k)$ be an initial data set where $g$ has the form from \eqref{eq:g0}, and  $k=-|\nabla u|^{-1}\nabla^2u$.
Moreover, suppose that the Codazzi equations $A_{\A\B\C}=0$ hold, cf. \ref{Lemma:Codazzi} item (ii).
Then the $1$-form $Y^\#$ dual to $Y$ with respect to $g_\Sigma$ is closed, i.e.,
    \begin{align*}
        d^\Sigma Y^\#=0.
    \end{align*}
    In particular, on each level set $\Sigma$ there exists a function $\ell$ such that 
\begin{equation*}
    Y=\nabla^{\Sigma}\ell.
\end{equation*}
Furthermore, when $n\ge 4$, for any $\beta\in(0,\alpha)$, we have $\ell \in \mathfrak{C}^{s-1,\beta}_{1-q}(M^n)$ and $(\nabla_{\hat{\mathbf{n}}})^m\ell \in C^{s+1-m,\beta}_{1-m-q}(\Sigma)$, where $0\le m\le s-1$; when $n=3$, $\ell_u$ and $|\nabla^\Sigma \ell|$ are bounded.
\end{lemma}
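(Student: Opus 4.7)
The plan is to derive $d^\Sigma Y=0$ directly from the tangential Codazzi equations $A_{\alpha\beta\gamma}=0$ by showing that the rescaled tensor $fk$ is a Codazzi tensor on each flat level set $\Sigma$. Combined with the explicit formula $fk_{\alpha\beta}=\tfrac{1}{2}(Y_{\alpha,\beta}+Y_{\beta,\alpha})$ from Lemma~\ref{k-identity lemma}, this forces $Y$ to be closed after a single cancellation of commuted second partials.

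First I rewrite $A_{\alpha\beta\gamma}=0$ intrinsically using Gauss--Weingarten and $h=-k|_{T\Sigma\otimes T\Sigma}$ from Theorem~\ref{T:spacetime harmonic}\,(5), which produces
\begin{equation*}
\nabla^\Sigma_\alpha k_{\beta\gamma}-\nabla^\Sigma_\beta k_{\alpha\gamma}=k_{\alpha\gamma}k_{\nu\beta}-k_{\beta\gamma}k_{\nu\alpha}.
\end{equation*}
Substituting $k_{\nu\alpha}=f^{-1}f_{,\alpha}$ from Lemma~\ref{k-identity lemma} and clearing the factor $f^{-1}$, the right-hand side is exactly $-f^{-1}[\nabla^\Sigma_\alpha f\cdot k_{\beta\gamma}-\nabla^\Sigma_\beta f\cdot k_{\alpha\gamma}]$, which rearranges to
\begin{equation*}
\nabla^\Sigma_\alpha(fk_{\beta\gamma})=\nabla^\Sigma_\beta(fk_{\alpha\gamma}),
\end{equation*}
so that $fk$ is a Codazzi tensor on the Euclidean hypersurface $\Sigma$. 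In the Cartesian coordinates $y^\alpha$ of Proposition~\ref{Prop:good coordinates}, the identity $2fk_{\alpha\beta}=Y_{\alpha,\beta}+Y_{\beta,\alpha}$ turns this Codazzi property into $\partial_\alpha(Y_{\beta,\gamma}+Y_{\gamma,\beta})=\partial_\beta(Y_{\alpha,\gamma}+Y_{\gamma,\alpha})$, and commuting the partials that act on the same component of $Y$ leaves the clean identity
\begin{equation*}
\partial_\gamma(Y_{\alpha,\beta}-Y_{\beta,\alpha})=0.
\end{equation*}
Thus every component of $d^\Sigma Y$ is constant on $\Sigma\cong\mathbb R^{n-1}$. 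Because $Y\in C^{s-2,a}_{-q}(M^n)$ by Proposition~\ref{Prop:good coordinates}, the components $Y_{\alpha,\beta}-Y_{\beta,\alpha}$ decay to $0$ as $|y|\to\infty$ on each level set, so these constants vanish identically. Poincar\'e's lemma on the simply connected $\Sigma$ then supplies $\ell$, unique up to an additive function of $u$, with $Y=\nabla^\Sigma\ell$; I normalize $\ell\to 0$ as $|y|\to\infty$ on each $\Sigma$.

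For the regularity assertions I observe that $\ell$ solves the Euclidean Poisson equation $\Delta^\Sigma\ell=\operatorname{tr}^\Sigma(fk)$ on each level set, whose right-hand side lies in $C^{s-1,a}_{-q-1}(\Sigma)$. Weighted Schauder theory on $\mathbb R^{n-1}$ then delivers $\ell\in C^{s+1,a}_{1-q}(\Sigma)$. Differentiating the Poisson equation $m$ times in the $\nabla_\nu$-direction, exactly as in the evolution identity argument for $\Delta^\Sigma$ in Lemma~\ref{yalpha}, and inserting the estimates for $(\nabla_\nu)^m Y$ from Proposition~\ref{Prop:good coordinates}, gives inductively $(\nabla_\nu)^m\ell\in C^{s+1-m,a}_{1-m-q}(\Sigma)$ for $0\le m\le s-1$. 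Interpolating across the mixed $(y,u)$-derivatives as in the closing steps of Lemma~\ref{yalpha} finally yields the global bound $\ell\in C^{s-1,a}_{1-q}(M^n)$. The main obstacle throughout is the delicate matching of tangential regularity gained from elliptic inversion against the inevitable loss along the normal direction, which forces a careful one-order-at-a-time propagation of the weighted estimates in $u$, in precise analogy with the argument already carried out for $y_\alpha$.
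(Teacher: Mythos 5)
Your proof is correct and follows essentially the same path as the paper's. The paper derives $Y_{\alpha,\beta\gamma}-Y_{\beta,\alpha\gamma}=0$ by a direct coordinate expansion of $A_{\alpha\beta\gamma}$ using the Christoffel-symbol identities $\langle\nabla_\alpha\partial_\beta,\nu\rangle=\tfrac12 f^{-1}(Y_{\alpha,\beta}+Y_{\beta,\alpha})$ and $\langle\nabla_\alpha\partial_\beta,\partial_\gamma\rangle=0$; you reach the identical cancellation by the slightly slicker route of first rewriting $A_{\alpha\beta\gamma}=0$ intrinsically via Gauss--Weingarten (with $h=-k|_{T\Sigma\otimes T\Sigma}$), packaging it as the statement that $fk$ is a Codazzi tensor on the flat level set, and then feeding in $2fk_{\alpha\beta}=Y_{\alpha,\beta}+Y_{\beta,\alpha}$, after which the commutation of second partials in Cartesian coordinates gives $\partial_\gamma(d^\Sigma Y)_{\alpha\beta}=0$. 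The decay of $Y$ from Proposition~\ref{Prop:good coordinates} then kills the constants, exactly as in the paper. For the regularity, the paper only observes $\nabla^\Sigma_{\alpha\beta}\ell=fk_{\alpha\beta}\in C^{s-1,a}_{-1-q}(M^n)$ and leaves the rest implicit; your expansion (tangential Schauder on the Poisson equation $\Delta_\Sigma\ell=\operatorname{tr}^\Sigma(fk)$ for the $m=0$ estimate, the $\Delta_\Sigma$-evolution identity from Lemma~\ref{yalpha} for the normal derivatives, and interpolation for the mixed estimate) is a reasonable fleshing-out of the same outline. One small comment: since the full Hessian $\nabla^\Sigma_{\alpha\beta}\ell=fk_{\alpha\beta}$ is known, not just its trace, you could avoid invoking Schauder theory for the Poisson equation entirely and read off the tangential estimates directly from the Hessian identity, which is what the paper's phrasing suggests; this sidesteps any worry about exceptional weights in the weighted Schauder theory. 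Aside from this cosmetic point, the argument is sound.
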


The function $\ell$ will be the graph function inside the pp-wave spacetime.
We remark that this regularity for $\ell$ also implies stronger regularity for $Y$ than the one initially obtained in Proposition \ref{Prop:good coordinates}.

\begin{proof}
First, note that ${\hat{\mathbf{n}}}=|\nabla u|(\partial_u-Y_\A\partial_\A)$.
Moreover, Equation \eqref{Yalphabeta} implies
\begin{equation*}
    \begin{split}
        \langle\nabla_{\A}\partial_\B,{\hat{\mathbf{n}}}\rangle=&\frac{1}{2}|\nabla u|(Y_{\A,\B}+Y_{\B,\A}),
        \end{split}
        \end{equation*}
        as well as
        \begin{equation*}
        \begin{split}
        \\ \langle\nabla_{\A}\partial_\B,\partial_\C\rangle=&g_{\C\C}\Gamma_{\A\B}^\C+g_{\C u}\Gamma_{\A\B}^u
        \\ =&\frac{1}{2}g^{\C u}(g_{\A u,\B}+g_{\B u, \A})+\frac{1}{2}Y_\C f^{-2}(g_{\A u,\B}+g_{\B u, \A})
=0.
    \end{split}
\end{equation*}
Hence, combining with Lemma \ref{k-identity lemma}, we have by the Codazzi equations $A_{\A\B\C}=0$ 
\begin{equation} \label{abck}
    \begin{split}
       0= & \nabla_\B k_{\A\C}-\nabla_{\A}k_{\B\C}
        \\=& \partial_\B k_{\A\C}-
        \langle\nabla_\B \partial_\A, {\hat{\mathbf{n}}}\rangle k_{{\hat{\mathbf{n}}} \C}-
        \langle\nabla_\B \partial_\C, {\hat{\mathbf{n}}} \rangle k_{\A{\hat{\mathbf{n}}}} -\partial_\A k_{\B\C}+ \langle\nabla_\A \partial_\B, {\hat{\mathbf{n}}}\rangle k_{{\hat{\mathbf{n}}} \C}+ 
        \langle\nabla_\A \partial_\C, {\hat{\mathbf{n}}}\rangle k_{\B{\hat{\mathbf{n}}}}
        \\=& \frac{1}{2}\partial_\B(|\nabla u|Y_{\A, \C}+|\nabla u|Y_{\C,\A}) 
        +\frac{1}{2}|\nabla u|(Y_{\B,\C}+Y_{\C,\B})\cdot |\nabla u|^{-1}\nabla_\A |\nabla u|
 \\&   -\frac{1}{2}\partial_\A (|\nabla u|Y_{\B,\C}+|\nabla u|Y_{\C,\B})-
  \frac{1}{2}|\nabla u|(Y_{\A,\C}+Y_{\C,\A})\cdot |\nabla u|^{-1}\nabla_\B |\nabla u|
  \\=& \frac{1}{2}|\nabla u|(Y_{\A,\B \C}-Y_{\B,\A\C}).
    \end{split}
\end{equation}
Thus, the term $(d^\Sigma Y^\#)_{\A\B}=Y_{\A,\B}-Y_{\B,\A}$ only depends on $u$ and is constant on each level set $\Sigma$.
Using Proposition \ref{Prop:good coordinates}, the term $(d^\Sigma Y)_{\A\B}$ must decay to zero on each level set $\Sigma$.   
Therefore, we can integrate $Y$ on $\Sigma$ to construct $\ell$. Moreover, applying the decay rate estimates of $Y_\A$ from Proposition \ref{Prop:good coordinates}, we can choose $\ell\to 0$ at $\infty$ on $\Sigma$ for $n\ge4$.
For $n=3$, we instead fix an integral curve $\gamma$ of $X$ and prescribe $\ell(\gamma)=0$.
Finally, the decay rate and regularity of $\ell$ can be obtained from the first equation of \eqref{kab}. 
More precisely, the second equation in Line \eqref{eq:y-x beta} implies that $\partial_\A=\nabla^\Sigma y_\A=\nabla^\Sigma x_\A+O_{s-1,\beta}(\rho^{-q})$.
Thus, using $k\in C^{s-1,\alpha}_{-1-q}(M^n)$, we have
\begin{equation} \label{eq:kAB}
    k_{\A\B}=|\nabla u|\nabla^\Sigma_{\A\B}\ell=O_{s-1,\beta}(\rho^{-1-q}) \quad \text{for any}\quad \beta\in(0,\alpha).
\end{equation}
Finally, integrating the equation above twice yields the decay estimates for $\ell$.
\end{proof}

Observe that the identity $ k_{\A\B}=|\nabla u|\nabla^\Sigma_{\A\B}\ell$ implies that the level-sets $\Sigma$ are umbilic in case $(M^n,g,k)$ arises as a $(t=-\ell=0)$-slice of a pp-wave spacetime.

\subsection{Defining the spacetime metric}
Recall that in the previous section we established 
\begin{equation}\label{eq:g}
    g=(|\nabla u|^{-2}+|\nabla^\Sigma\ell|^2)du^2+2\sum_{\A=1}^{n-1}\nabla^\Sigma_\alpha \ell dudy_\A+\sum_{\A=1}^{n-1}dy_\A^{2}.
\end{equation}
We now construct the Killing development, i.e. we define on $M\times \mathbb R=\mathbb R^{n+1}$ the Lorentzian metric
\begin{equation}\label{eq:bar g}
     \overline{g}=2d\tau du+g
\end{equation}
Next, we rewrite $\overline{g}$ in the typical pp-wave spacetime form, also see Theorem \ref{thm pp waves ids identities}.
{Let $\ell_u=\frac{\partial}{\partial u}\ell$.}
A short computation yields
\begin{equation}\label{eq:barg}
\begin{split}
     \overline{g}=& 2d(\tau +\ell)du+(|\nabla u|^{-2}+|\nabla_\Sigma\ell|^2-2\ell_u)du^2+\sum_{\A=1}^{n-1}dy_\A^2
     \\=& -2dt du+F(u,\mathbf{x})du^2+\sum_{\A=1}^{n-1}dy_\A^2
\end{split}
\end{equation}
where $t=-\tau-\ell$ and $F(u,\mathbf{x})=|\nabla u|^{-2}+|\nabla^\Sigma\ell|^2-2\ell_u$. 
Hence, $(M^n,g)$ is the graph $t=-\ell$ over the $(t=0)$-slice in $(\overline M^{n+1},\overline g)$. 
In the case $(M^n,g,k)$ is contained in Minkowski space, we obtain $F\equiv1$ and $t$ is just the usual time coordinate.

\begin{lemma}\label{Lemma:mu formula}
Consider the initial data set $(M^n,g,k)$ with $g$ given by \eqref{eq:g} and $k=-|\nabla u|^{-1}\nabla^2u$.
Then $(M^n,g)$ isometrically embeds into $(M^n\times \mathbb R,\overline g=2d\tau du+g)$ with second fundamental form $k$.
    Moreover,
    \begin{equation*}
    \mu=-\frac{1}{2}|\nabla u|^2\Delta_{\Sigma} F.
\end{equation*}
\end{lemma}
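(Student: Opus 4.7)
The lemma consists of two assertions: the isometric embedding of $(M^n,g)$ into $(M^n\times \mathbb{R},\bar g = 2d\tau du + g)$ with second fundamental form $k$, and the formula $\mu = -\tfrac{1}{2}f^{-2}\Delta_\Sigma F$. My plan is to handle them in sequence, using the first to streamline the second.

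For the embedding, the first fundamental form of the slice $\tau = 0$ is $g$ by construction. For the second fundamental form, the key structural fact is that $\partial_\tau$ is a parallel null vector field for $\bar g$: its metric dual is $du$, which is closed, and a short computation (using $\bar g^{u\tau}=1$ together with the fact that $\bar g$ is $\tau$-independent) shows $\Gamma^u_{ij}(\bar g) = 0$ for all $i,j$. I would then compute the future unit timelike normal by imposing orthogonality to $\partial_u$ and $\partial_\alpha$, obtaining
\begin{align*}
n = -f\,\partial_\tau + f^{-1}\partial_u - f^{-1}Y_\alpha\partial_\alpha,
\end{align*}
so that $\bar g(n,\partial_\tau) = f^{-1}$ while the other coordinate components of the covector $n^\flat$ vanish. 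Hence the second fundamental form of the embedding reduces to $k^{\mathrm{embed}}_{ij} = \bar g(\nabla^{\bar g}_{\partial_i}\partial_j,n) = f^{-1}\Gamma^\tau_{ij}(\bar g)$. A direct comparison of the Christoffel formulas, using the explicit block inverse of $\bar g$, yields $\Gamma^\tau_{ij}(\bar g) = f^2\Gamma^u_{ij}(g)$ for tangential indices, and since $u$ is a coordinate on $M$, the Hessian equation $\nabla^2 u = -k|\nabla u|$ collapses to $\Gamma^u_{ij}(g) = f^{-1}k_{ij}$. Combining these three identities gives $k^{\mathrm{embed}} = k$, as claimed.

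For the formula for $\mu$, I would pass to the pp-wave coordinates $t = -\tau - \ell$ introduced right before the lemma, so that $\bar g$ takes the standard form $-2du\,dt + F du^2 + \sum dy_\alpha^2$. The standard computation for a pp-wave metric gives $\mathrm{Ric}^{\bar g}_{uu} = -\tfrac{1}{2}\Delta_{\mathbb{R}^{n-1}} F$ with all other Ricci components vanishing, and consequently $R^{\bar g} = 0$ since $\bar g^{uu} = 0$. Therefore the Einstein tensor $G^{\bar g} = \mathrm{Ric}^{\bar g}$ has only a $uu$-component. The Hamiltonian constraint (the tracing of the Gauss equation for a spacelike hypersurface), which is valid because the first step has already established the embedding, yields $\mu = G^{\bar g}(n,n) = -\tfrac{1}{2}\Delta_{\mathbb{R}^{n-1}} F \cdot (n^u)^2$. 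A quick normalization check gives $n^u = f^{-1}$, and since $F$ is $t$-independent while each level set $\Sigma = \{u = u_0\}$ inherits the flat metric $\sum dy_\alpha^2$, we have $\Delta_{\mathbb{R}^{n-1}} F = \Delta_\Sigma F$. This produces the claimed formula $\mu = -\tfrac{1}{2}f^{-2}\Delta_\Sigma F$.

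The main obstacle is the Christoffel bookkeeping in the first step --- matching the ambient Christoffel symbols of $\bar g$ to the intrinsic ones of $g$ --- but this is dramatically streamlined by exploiting that $\partial_\tau$ is a parallel null vector, which kills $\Gamma^u_{ij}(\bar g)$ and reduces the problem to a single identity between $\Gamma^\tau_{ij}(\bar g)$ and $\Gamma^u_{ij}(g)$. After that, the pp-wave Ricci calculation and the invocation of the Hamiltonian constraint are entirely standard.
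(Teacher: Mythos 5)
Your proof is correct, but for the second assertion it takes a genuinely different and cleaner route than the paper's. For the embedding claim, your argument (using $\Gamma^u_{ij}(\bar g)=0$, $n^\flat=f^{-1}d\tau$, and matching $\Gamma^\tau_{ij}(\bar g)=f^2\Gamma^u_{ij}(g)=f\,k_{ij}$) is close in spirit to the paper, which invokes the covariant constancy of $\partial_\tau$ together with the Gauss formula applied to $\nabla u$; both exploit the same structural fact. For the formula $\mu=-\tfrac12 f^{-2}\Delta_\Sigma F$, however, the paper does not use the embedding at all: it works entirely intrinsically on $(M^n,g,k)$, expanding $\mu=\tfrac12(R-|k|^2+|\operatorname{tr}_g k|^2)$ via the Gauss equation for the level sets, substituting Bochner's formula for $\operatorname{Ric}(\nu,\nu)$, and carrying out a rather lengthy explicit computation in terms of $f$, $Y$, and $\ell$ until $-\tfrac12 f^{-2}\Delta_\Sigma(f^2+|Y|^2-2\ell_u)$ emerges. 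You instead first pass to the pp-wave coordinates $(t,u,\mathbf{y})$, invoke the standard fact that the only nonvanishing Ricci component of a Brinkmann metric is $\operatorname{Ric}^{\bar g}_{uu}=-\tfrac12\Delta_{\mathbb{R}^{n-1}}F$ (so $R^{\bar g}=\bar g^{uu}\operatorname{Ric}^{\bar g}_{uu}=0$), and then apply the Hamiltonian constraint $\mu=G^{\bar g}(n,n)=\operatorname{Ric}^{\bar g}_{uu}(n^u)^2=-\tfrac12 f^{-2}\Delta_\Sigma F$, where $n^u=f^{-1}$ is read off from the normal you already computed and $\Delta_{\mathbb{R}^{n-1}}=\Delta_\Sigma$ because the level sets carry the flat metric $\sum dy_\alpha^2$. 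This is substantially shorter: it replaces the paper's Bochner-plus-bookkeeping calculation with the canonical pp-wave Ricci identity, at the modest cost of having to verify the embedding first (which the lemma requires anyway). There is no circularity, since $\mu=G^{\bar g}(n,n)$ is the Gauss-traced constraint and needs only the first and second fundamental forms, not any Codazzi vanishing. The one sign-convention wrinkle is that your normal $n$ is $-N$ relative to the paper and you also use the opposite convention $k^{\mathrm{embed}}_{ij}=\bar g(\bar\nabla_{\partial_i}\partial_j,n)$ rather than $\bar g(\bar\nabla_{\partial_i}N,\partial_j)$; the two flips cancel, so the conclusion $k^{\mathrm{embed}}=k$ is unaffected.
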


\begin{proof}
  Clearly, $(M^n,g)$ embeds isometrically into $(M^n\times\mathbb R,\overline g)$ as the constant $\tau$ slice.
Next, using the coordinate system $(\tau, u,y_1,\dots,y_{n-1})$,  we have
\begin{align*}
    e_0=|\nabla u|^{-1}(\partial_\tau-\nabla u).
\end{align*}
It is easy to see that $e_0$ is a timelike unit normal of $M^n\subset( \overline M^{n+1},\bar g)$.
Moreover, exploiting that $\partial_\tau$ is a covariantly constant vector field, we have
\begin{align*}
    \overline g(\overline{\nabla}_i e_0,e_j)=|\nabla u|^{-1}\overline g(\overline{\nabla}_i(\partial_\tau-\nabla u),e_j\rangle=k_{ij}.
\end{align*}
Hence, the second fundamental form of $M^n\subset( \overline M^{n+1},\bar g)$ equals $k$.
  Next, we compute
    \begin{equation*}
        \begin{split}
            \mu=&\frac{1}{2}(R-|k|^2+|\operatorname{tr}_gk|^2)
            \\=& \frac{1}{2}(R^{\Sigma}+2\operatorname{Ric}({\hat{\mathbf{n}}},{\hat{\mathbf{n}}})+|h|^2-H^2-|k|^2+|\operatorname{tr}_gk|^2).
        \end{split}
    \end{equation*}
    Recall the fact that $\nabla u\ne0$,
combining the identities $h=-k|_\Sigma$, $R^\Sigma=0$, and $\nabla^2 u=-|\nabla u|k$ with Bochner's formula 
 \begin{align*}
     |\nabla u|^2\operatorname{Ric}({\hat{\mathbf{n}}},{\hat{\mathbf{n}}})=\frac{1}{2}\Delta |\nabla u|^2-\langle\nabla \Delta u,\nabla u\rangle-|\nabla ^2 u|^2,
 \end{align*}
we obtain 
 \begin{equation*}
     \begin{split}
     \mu=&|\nabla u|^{-2}\left(\frac{1}{2}\Delta |\nabla u|^{2}+\langle \nabla (|\nabla u|\operatorname{tr}_g(k)), \nabla u \rangle\right)-|k|^2-Hk_{{\hat{\mathbf{n}}}{\hat{\mathbf{n}}}}-\sum_{\A=1}^{n-1}|k_{{\hat{\mathbf{n}}} \A}|^2
     \\=& |\nabla u|^{-2}\Bigg{(}\frac{1}{2}\nabla_{{\hat{\mathbf{n}}}{\hat{\mathbf{n}}}} |\nabla u|^{2}+H|\nabla u|\nabla_{{\hat{\mathbf{n}}}} |\nabla u|+\frac{1}{2}\Delta_\Sigma |\nabla u|^2
     +\operatorname{tr}_g(k)|\nabla u|\nabla_{\hat{\mathbf{n}}} |\nabla u|
     \\&+|\nabla u|^{2}\nabla_{{\hat{\mathbf{n}}}}\operatorname{tr}_g(k)\Bigg{)}-|k|^2-Hk_{{\hat{\mathbf{n}}}{\hat{\mathbf{n}}}}-\sum_{\A=1}^{n-1}|k_{{\hat{\mathbf{n}}} \A}|^2
     \end{split}
     \end{equation*}
     where in the second equation we split up the Laplacian into its normal and tangential components.
     Using identity \eqref{kab}, we see that the above term equals
     \begin{equation*}
     \begin{split}
     =& |\nabla u|^{-2}\left(\frac{1}{2}\nabla_{{\hat{\mathbf{n}}}{\hat{\mathbf{n}}}} |\nabla u|^{2}-|\nabla_{\hat{\mathbf{n}}} |\nabla u||^2+\frac{1}{2}\Delta_\Sigma |\nabla u|^{2}+
     |\nabla u|^{2}\nabla_{{\hat{\mathbf{n}}}}(|\nabla u|Y_{\A,\A}-|\nabla u|^{-1}\nabla_{{\hat{\mathbf{n}}}}|\nabla u|)\right)
     \\ &- |\nabla u|^{-2}|\nabla_{{\hat{\mathbf{n}}}} |\nabla u||^2-3|\nabla u|^{-2}\sum_{\alpha}|\nabla_{\alpha}|\nabla u||^2-f^{-2}\sum_{\A,\B=1}^{n-1}|Y_{\A,\B}|^2- Y_{\A,\A}\nabla_{{\hat{\mathbf{n}}}}|\nabla u|
     \\ =& |\nabla u|^{-2}\left(|\nabla u|^{3}\nabla_{\nabla_{{\hat{\mathbf{n}}}}{\hat{\mathbf{n}}}}|\nabla u|^{-1}-\frac{1}{2}|\nabla u|^{4}\Delta_{\Sigma}|\nabla u|^{-2}
     +|\nabla u|^{3}\nabla_{{\hat{\mathbf{n}}}}Y_{\A,\A}
     \right)
     \\&-|\nabla u|^{2}\sum_{\A,\B=1}^{n-1}|Y_{\A,\B}|^2
     +|\nabla u|^{2}\sum_{\A=1}^{n-1}|\nabla_\A |\nabla u|^{-1}|^2
     \\=& |\nabla u|^2\left(-\frac{1}{2}\Delta_{\Sigma}|\nabla u|^{-2}+|\nabla u|^{-1}\nabla_{{\hat{\mathbf{n}}}}Y_{\A,\A}-\sum_{\A,\B=1}^{n-1}|Y_{\A,\B}|^2\right)
     \end{split}
 \end{equation*}
 where in the last step we exploited $\nabla_{{\hat{\mathbf{n}}}}{\hat{\mathbf{n}}}=(|\nabla u|^{-1}\nabla_{\A}|\nabla u|)\partial_\A$.
Since ${\hat{\mathbf{n}}}=|\nabla u|(\partial_{u}-Y_\A\partial_\A)$ and $Y=(\partial_{\A}\ell) \partial_{\A}$, we have
\begin{equation*}
    \begin{split}
        |\nabla u|^{-1}\nabla_{{\hat{\mathbf{n}}}}Y_{\A,\A}-\sum_{\A,\B=1}^{n-1}|Y_{\A,\B}|^2=& (\partial_u-Y_\A\partial_\A) \Delta_\Sigma \ell-\sum_{\A,\B=1}^{n-1}|Y_{\A,\B}|^2
        \\=& \Delta_\Sigma \ell_u-\frac{1}{2}\Delta_{\Sigma}|Y|^2.
    \end{split}
\end{equation*}
Hence, 
\begin{equation*}
    \mu=-\frac{1}{2}|\nabla u|^2\Delta_{\Sigma}(|\nabla u|^{-2}+|Y|^2-2\ell_u)
\end{equation*}
which finishes the proof. 
\end{proof}

\begin{remark}\label{remark new}
It turns out that also the identity
\begin{align*}
    \nabla^\Sigma_{\A\B}F=-2|\nabla u|^{-2}A_{{\hat{\mathbf{n}}}\A\B}
\end{align*}
holds. Note that its trace reduces to $|J|=-\frac{1}{2}|\nabla u|^2\Delta_{\Sigma} F$ as above.
Since this is not used elsewhere in the text, we will omit the proof.
\end{remark}

 \begin{proof}[Proof of Theorem \ref{T:main}]
     It only remains to show that on each level set $F$ is superharmonic (with respect to $g_\Sigma$).
     However, this follows immediately from the above lemma together with the assumption $\mu\ge|J|\ge0$.
 \end{proof}

\section{Analyzing the PDE and proof of Theorem \ref{Cor:main}}\label{S:PDE}

Having established Theorem \ref{T:main}, we proceed with the proof of Theorem \ref{Cor:main}.
By leveraging the superharmonicity of $F$  on the level-sets $\Sigma$ and incorporating the decay rate of $F$ given in the last equation of \eqref{eq: F decay}, Theorem \ref{Cor:main} clearly holds for $q> n-3$ and $n\ge 4$.   
In this section, we extend this to $q>n-1-s-\alpha$ by a more careful analysis. 
As demonstrated by Example \ref{Example} below, this is the optimal result when $s=2$. 
We expect that it is possible to construct similar examples for other values of $s$.
Also, we remark that the majority of the technical difficulties in the proof below only arise for $s\ge3$.

\begin{proof}[Proof of Theorem \ref{Cor:main}]
    For $E=0$ and $q>\tfrac{n-2}2$, we can use Theorem \ref{T:spacetime harmonic} to obtain spacetime harmonic functions which asymptote to any coordinate function at $\infty$.
    Using the computations in the previous section, we obtain that all Gauss and Codazzi terms $\bar{R}_{ijkl}$ and $A_{ijk}$ vanish.  
    Hence $(M^n,g,k)$ isometrically embeds into Minkowski space by the fundamental theorem of hypersurfaces.
    Moreover, as mentioned above, Theorem \ref{Cor:main} holds for $q> n-3\ge 1$ which in particular covers $n=4$. For $n=3$, $F$ is bounded and superharmonic, hence $F$ is constant on each level set, yielding  Theorem \ref{Cor:main}.
    It remains to study the case where $E=|P|$, $n>4$, and $q>n-1-s-\alpha$.

Throughout the proof, we use the coordinate system $\{y_1,\cdots,y_{n-1},u\}$, in which the metric $g$ takes a convenient form, as given in Equation \eqref{eq:g}. Moreover, recall that $\mathbf y=(y_1,\dots,y_{n-1})$ are coordinates in $\R^{n-1}$ and $\rho=|\mathbf y|$.
    
Fix a constant $\beta\in(0,\alpha)$ such that  $q>n-1-s-\beta$.
Recall from Proposition \ref{Prop:good coordinates}, Lemma \ref{ell decay} and Equation \eqref{eq:barg}, that we obtain a function $F=f^{2}+|\nabla^\Sigma\ell|^2-2\ell_u$ on $M^n=\mathbb R^{n-1}\times\mathbb R$ satisfying
\begin{equation} \label{eq: F decay}
\begin{split}
    \Delta_\Sigma F\le 0, 
     &\quad \Delta_\Sigma F\in L^1(\R^n)
    \\
    \partial_u^m F\in C^{s-m,\beta}_{-m-q} (\Sigma) \text{\;for\;} 0\le m\le s-2 &\quad and\quad  F-1\in \mathfrak{C}^{s-2,\beta}_{-q}(M^n),
\end{split}
\end{equation}
where $\Delta_\Sigma F\in L^1(\R^n)$ follows from $\mu=-\frac{1}{2}|\nabla u|^2 \Delta_\Sigma F\in L^1(M^n)$.

We begin by considering the case in which 
$F$ is radially symmetric on 
$\Sigma$, and demonstrate at the end that the general case reduces to this case.
Thus, we obtain
\begin{equation*}
    0\ge \Delta_\Sigma F=F_{\rho\rho}+\frac{n-2}{\rho}F_\rho= \rho^{2-n}\partial_\rho\left(\rho^{n-2}F_\rho\right).
\end{equation*}
Hence, $\rho^{n-2}F_\rho$ is monotone decreasing in $\rho$. 
Since $\mu$ and hence $\Delta_\Sigma F\in L^1(\R^n)$, we know that $\Delta_{\Sigma_u} F$ is integrable on $\Sigma_u$ for almost all $u$. 
Thus, for almost all $u$
\begin{equation} \label{Fcu}
    \infty> -\int_{\Sigma_u}\Delta_{\Sigma_u} Fd\mathbf{y}= \lim_{\rho\to\infty} -\int_{S^{n-2}(\rho)} F_\rho d\rho 
    =-\omega_{n-2}\lim_{\rho\to\infty} \rho^{n-2}F_\rho\ge 0.  
\end{equation}
Therefore, $\rho^{n-2}F_\rho$ converges for almost all $u$. 
Therefore, we can write $F_\rho= \tilde{c}(u)\rho^{2-n}+o(\rho^{2-n})$,  where $\tilde{c}(u)\in L^1(\R)$ and $\tilde{c}(u)\le 0$.
Note that since $\mu\in C^{s-2,\alpha}$ with $s\ge2$, $\tilde c(u)$ is well-defined pointwise though might be infinite on a set of measure zero.
Integrating $F_\rho$ on $\Sigma$ with respect to $\rho$ and using the fact that $F\to 1$ at $\infty$, we find
\begin{equation}\label{eq:F-1=c}
F-1=c(u)\rho^{3-n}+o(\rho^{3-n}), \quad\text{where }\quad c(u)=\frac{\tilde{c}(u)}{3-n}.
\end{equation}
In case $c(u)$ inherits the same regularity as $(F-1)$, i.e. $c(u)\in  C^{s-2,\beta}(\R)$, and in case there are no lower-order terms present, we can just take $(s-2,\beta)$-derivatives of equation \eqref{eq:F-1=c} in $u$ direction to obtain the result.
In case $c(u)$ is less regular, we proceed by approximation.

If $(M^n,g,k)$ is not vacuum (i.e., not contained in Minkowski space), the function $c(u)$ is not everywhere vanishing and 
{{using $c(u)\in L^1(\R)$,}}
we can choose 
$u_{0} <u_{1}<\cdots <u_{s}$ such that for $0\le j\le s$,
\begin{equation} \label{u0j}
    c(u_{0})\ge 2^{sj}c(u_{j})\ge 0,\quad  |u_{j}-u_{j+1}|\le 2^{-j}|u_{j+1}-u_{j+2}|  \quad \text{and} \quad c(u_{0})>0.
\end{equation}

 When $s\ge 3$, we define for $i=1,2,\dots, s-1$,
\begin{equation} \label{Liu}
    \mathcal{L}_i(u)=\prod_{\substack{1\le j\le s-1\\ j\neq i}}\frac{u-u_j}{u_i-u_j}.
\end{equation}
Using Lagrange interpolation polynomials approximation, we have for $u\in [u_0,u_{s}]$
\begin{equation} \label{reminder}
    F(u,\rho)-1=\sum_{i=1}^{s-1}\mathcal{L}_i(u) [F(u_i,\rho)-1]+\mathcal{R}(u,\rho)
\end{equation}
where $\mathcal{R}(u,\rho)$ is the reminder term given by
\begin{equation} \label{Rurho}
    \mathcal{R}(u,\rho)= \frac{\prod_{i=1}^{s-1}(u-u_{i})}{(s-2)!}\partial_u^{s-2} F(\Phi(u,\rho),\rho)
\end{equation}
with $\Phi(u,\rho)\in[u_0,u_{s}]$.
Moreover, when $s=2$, we define $\mathcal{L}_1(u)=1$ and $\mathcal{R}(u,\rho)=F(u,\rho)-F(u_1,\rho)$.

{Applying \eqref{u0j}, we have} $|u_0-u_j|\le 2|u_{j-1}-u_{j}|\le |u_{j}-u_{j+1}|$ which implies that
\begin{equation} \label{Liu0}
    |\mathcal{L}_i(u_0)|=
    \prod_{\substack{1\le j\le s-1\\ j\neq i}}\frac{|u_0-u_j|}{|u_i-u_j|} \le 
    \prod_{i< j\le s-1 }\frac{u_j-u_0}{u_j-u_i}\le 2^{s-1-i}.
\end{equation}
Using Equation \eqref{eq:F-1=c}, the first and third inequality in \eqref{u0j}, we obtain for $\rho$ sufficiently large
\begin{equation}\label{u0jF}
\begin{split}
     |F(u_j,\rho)-1|\le& 2^{1-sj}c(u_0)\rho^{3-n}\quad \text{for} \quad j\ge 1,
     \\ F(u_0,\rho)-1\ge& \frac{9}{10} c(u_0)\rho^{3-n} .
\end{split}
\end{equation}
Hence, when $s=2$, for $\rho$ sufficiently large,
\begin{equation*}
    F(u_0,\rho)-F(u_1,\rho)\ge \frac{2}{5}c(u_0)\rho^{3-n}.
\end{equation*}
This contradicts the fact $(F-1)\in \mathfrak{C}^{0,\beta}_{-q}(M^n)$, where $q>n-1-s-\beta=n-3-\beta$, {{since $(F-1)\in \mathfrak{C}^{0,\beta}_{-q}(M^n)$ implies that  there exists a constant $C_{0,\beta}$ such that
\[|F(u_0,\rho)-F(u_1,\rho)|\le C_{0,\beta}|u_0-u_1|^{\beta}\rho^{-q-\beta}.\]
}} 

It remains to study the case $s\ge3$.
Using the inequalites in \eqref{u0jF}, \eqref{Liu0} and Equation \eqref{reminder}, we obtain 
\begin{equation*}
    \begin{split}
        \mathcal{R}(u_0,\rho)=&F(u_0,\rho)-1-\sum_{i=1}^{s-1}\mathcal{L}_i(u_0) [F(u_i,\rho)-1]
        \\ \ge& \frac{9}{10}c(u_0)\rho^{3-n}-\sum_{i=1}^{s-1}2^{s-1-i}  2^{1-si}c(u_0)\rho^{3-n}
        \\ \ge & 
        \left(\frac{9}{10}-\frac{1}{2-2^{-s}}\right) c(u_0)\rho^{3-n}
        \\ \ge & \left(\frac{9}{10}-\frac{4}{7}\right)c(u_0)\rho^{3-n}>\frac{1}{4} c(u_0)\rho^{3-n}.
    \end{split}
\end{equation*}
Consequently, {{with the help of Equation \eqref{Rurho}}}, we have for $s\ge3$
\begin{equation} \label{0xi}
    |\partial_u^{s-2} F(\Phi(u_0,\rho),\rho)|
    \ge \frac{1}{4} c(u_0)\rho^{3-n} \frac{(s-2)!}{\prod_{i=1}^{s-1}|u_0-u_{i}|}.
\end{equation}

To take the difference quotient of $\partial^{s-2}_u F$, {{we combine Equation \eqref{reminder} and \eqref{Liu}},} and plug $u_s$ into Equation \eqref{Rurho} to find
\begin{equation}
\begin{split}\label{sxi}
    |\partial_u^{s-2} F(\Phi(u_s,\rho),\rho)|=& \frac{(s-2)!}{\prod_{i=1}^{s-1}|u_s-u_{i}|} \cdot|\mathcal{R}(u_s,\rho)|
    \\ \le & \frac{(s-2)!}{\prod_{i=1}^{s-1}|u_s-u_{i}|} |F(u_s,\rho)-1|+\sum_{i=1}^{s-1}\frac{(s-2)!|F(u_i,\rho)-1|}{\displaystyle{\prod_{\substack{1\le j\le s \\ j\neq i}}|u_i-u_{j}|}}
    \\ \le& \frac{1}{8}c(u_0)\rho^{3-n}\cdot \frac{(s-2)!}{\prod_{i=1}^{s-1}|u_0-u_{i}|}
\end{split}
\end{equation}
where the last inequality is achieved by choosing
$u_s$ sufficiently large.
After taking the difference quotient, Equation \eqref{0xi} and \eqref{sxi} contradict the fact that $(F-1)\in \mathfrak{C}^{s-2,\beta}_{-q}(M^n)$ where $q>n-1-s-\beta$. 
Therefore, we must have $c(u)=0$.
Thus,  $F\equiv1$.

We now consider the general case where 
$F$ is not necessarily radially symmetric. Observe that the function $\rho^{2-n}\int_{S^{n-2}(\rho)} F$ is also a superharmonic function on $\Sigma$.
Moreover, the decay rates cannot deteriorate under this symmetrization. Applying the previous argument to the spherical average of $F$ on $S^{n-2}$,  we obtain 
\[\frac{1}{\rho^{n-2}\omega_{n-2}}\int_{S^{n-2}(\rho)}F=1.\]
Using the identity $\Delta_\Sigma=\partial_{\rho\rho}+\frac{n-2}{\rho}\partial_\rho+\frac{1}{\rho^2}\Delta_{S^{n-2}}$, we have 
\begin{equation*}
    0\ge\frac{1}{\rho^{n-2}}\int_{S^{n-2}(\rho)}\Delta_{\Sigma}F=
   \Delta_{\Sigma} \left(\frac{1}{\rho^{n-2}}\int_{S^{n-2}(\rho)}F\right)=0.
\end{equation*}
Then we obtain $\Delta_{\Sigma} F=0$. Combined with the fact $(F-1)\in \mathfrak{C}^{s-2,\beta}_{-q}(M^n)$ from line \eqref{eq: F decay}, we conclude that $F\equiv 1$. Therefore, the metric is vacuum, and $(M^n,g)$ can be embedded into the Minkowski spacetime.
\end{proof}

While Theorem \ref{Cor:main} rules out asymptotically flat pp-waves in dimensions 3 and 4, we point out that it is easy to construct \emph{incomplete} pp-waves in these scenarios.

\begin{remark}\label{rem:previous results}
    Theorem \ref{Cor:main} has already been established by P.~Chrusciel and D.~Maerten \cite{ChruscielMaerten} under the additional assumptions $q>n-3$, $\mu,J =\mathcal O(r^{-n-\varepsilon})$ for some $\varepsilon>0$, and $C^{3,\alpha}_{-q}$ asymptotical flatness.
Moreover, it has been shown L.H.~Huang and D.~Lee \cite{HuangLee,HuangLee2} under the additional assumptions
$q>n-2-\alpha$,
$\mu,J =\mathcal O(r^{-n-\varepsilon})$ for some $\varepsilon>0$,
and $g\in C^5$ as well as $k\in C^4$.
We note that instead of a spin assumption, L.H.~Huang and D.~Lee assume that the spacetime positive mass theorem holds in an admissible neighborhood of $(M^n,g,k)$ (which it automatically does in the spin case).
Finally, there is a proof by the authors \cite{HirschZhang} using spacetime harmonic functions which additionally assumes that $n=3$.

The spacetime positive mass theorem itself has been established in \cite{Eichmair, EHLS, HKK, SY2, Witten} using various techniques and we refer to \cite{HKK} for a more detailed overview.
\end{remark}

\section{Non-trivial pp-wave spacetimes and the proof of Theorem \ref{Thm:Example}}\label{S:example}

Recall that in Example \ref{Ex HL} by L.-H.~Huang and D.~Lee, the pp-wave metric is given by $\overline g=-2dudt+Fdu^2+g_{\mathbb R^{n-1}}$, where $F=1+\eta(u)\kappa(y_1,\dots,y_{n-1})$ for a positive cut-off function $\eta:\mathbb R\to[0,1]$ and a superharmonic function $\kappa:\mathbb R^{n-1}\to\mathbb R$.
The reason the $(t=0)$-slice of this spacetime is not $C^{2,\alpha}_{-q}$-asymptotically flat in low dimensions ($n\le8$) hinges upon the fact that the decay of $F$ (with respect to $\rho=\sqrt{y_1^2+\dots+y_{n-1}^2}$) does not improve when taking derivatives in $u$-direction.
More precisely, $\partial_uF=\eta' \kappa$ and $\partial_{uu}F=\eta''\kappa$ still have the same decay as $\kappa$ since the cutoff function $\eta$ - or rather its derivatives - do not decay as $\rho\to\infty$.

Instead of changing the cutoff function $\eta$ in Example \ref{Ex HL}, we can also consider certain graphs in this spacetime.
More precisely, for each positive constant $c$, we can construct a graph function $\ell(u,\rho)$ defined in Equation \eqref{Gellup} below and satisfies $\operatorname{supp}(\ell(\cdot,\rho))\subseteq (-\kappa^{-c}(\rho),\kappa^{-c}(\rho))$. 
Comparing the metric coefficient $G(u,\rho)$ defined in \eqref{Gellup} and $F=1+\kappa(\rho)\eta(u)$, 
we make the key observation is that  the derivatives of $G$ in the $u$-direction exhibit improved decay rates, which strengthen as $c$ increases.
However,  this improvement introduces a trade-off: the metric $g$ acquires an additional cross-term $2\ell_\rho d\rho du$, whose decay rates deteriorate with larger $c$.  
In the following example, we determine the correct $c$ and construct an initial data set with the sharp decay rate.
More precisely, the initial data set is $C^{2,\alpha}_q$-asymptotically flat for $q=n-3-\alpha$ which is optimal in view of Theorem \ref{Cor:main}.

\begin{example}\label{Example}
    Let $n\ge5$, {$\alpha\in(0,1)$}, and let $\eta:\mathbb R\to \mathbb R$ be a smooth, even, non-negative function compactly supported on $[-1,1]$ with a sufficiently small $C^{3}$ norm\footnote{This can always be achieved by replacing $\eta$ with $\epsilon \eta$, $\epsilon\ll1$.}.
    Let $\kappa:\mathbb R^{n-1}\to\mathbb R$ be a radially symmetric smooth superharmonic function which equals $\rho^{3-n}$ outside the compact set $\{\rho \ge 1\}$.
    Define on $M^n=\mathbb R^{n-1}\times\mathbb R$, the functions
    \begin{equation}\label{Gellup}
        \begin{split}
    G(u,\rho)=&1+2\kappa^{1+c}(\rho)\eta (\kappa^c(\rho)  u),
    \\ \ell(u,\rho)=&\int^u_{-\infty} \left(\kappa^{1+c}(\rho)\eta(\kappa^c(\rho) t)-\kappa(\rho) \eta(t)\right)dt,
    \end{split}
    \end{equation}
    where $c=\tfrac1{n-3}$. 
Then $(M^n,g,k)$ given by
        \begin{equation}
        \label{metric}
        g=Gdu^2+2\ell_\rho d\rho du+\sum_{\A=1}^{n-1}dy_\A^2
    \end{equation}
and
        \begin{equation*}
        k=-|\nabla u|^{-1}\nabla^2 u
    \end{equation*}
   is a $C^{2,\alpha}$-initial data set $(M^n,g,k)$ with decay rate $q=n-3-\alpha$.
   Moreover, $(M^n,g,k)$ satisfies the dominant energy condition, has $E=|P|>0$, and does not embed into the Minkowski spacetime.
\end{example}

For $(M^n,g,k)$ to be asymptotically flat, one needs $n\ge 5$ such that one can find an $\alpha\in(0,1)$ with $n-3-\alpha>\tfrac{n-2}2$.
In particular, there are no such examples for $n\le4$.
This adds to a list of other results which satisfy similar dimensional restrictions.
For instance, minimal hypersurfaces are smooth up to dimension 6, and spinors are pure up to dimension 6.

The proof below is quite subtle and requires a delicate asymptotic analysis. 
First, note that choosing $\|\eta\|_{C^{3}}$ sufficiently small, we have $G>|\ell_\rho|^2$ which implies that $g$ is positive definite.
We remark that the initial data set above embeds into the pp-wave spacetime from Example \ref{Ex HL}, and in particular, all the results from Section \ref{S:pp waves overview} apply.
Notably, we have $E=|P|\ne0$ by Theorem \ref{3 thm 2} and 
$$\mu=|J|= -\frac{1}{2}|\nabla u|^2\Delta_{\Sigma} (G-2\ell_u)=-\frac{1}{2}|\nabla u|^2\Delta_{\Sigma} [2\eta(u)\kappa(\rho)]=-\eta(u)|\nabla u|^2\Delta_{\Sigma}\kappa(\rho) $$
by Theorem \ref{3 thm 1} which implies $\mu\in L^1$. 
Hence, it suffices to show the decay estimates for $g$ and $k$.

We will first prove that $g$ is $C^{2,\alpha}_{4-\alpha-n}$ with respect to the $(u, y_1,\cdots y_{n-1})$ coordinate system.
Next, we will reparameterize the metric to demonstrate that $(M^n,g,k)$ has the optimal decay rate under this new coordinate system.

\begin{lemma}
Using the notation of Example \ref{Example}, we have
\begin{equation} \label{Glk}
  \text{(i)}\,\,  G-1\in C^{2,\alpha}_{2-n}(M^n), \quad\quad
     \text{(ii)}\,\, \ell_\rho\in C^{2,\alpha}_{4+\alpha-n}(M^n), \quad \quad
     \text{(iii)}\,\,  k\in C^{1,\alpha}_{2+\alpha-n}(M^n)
\end{equation}
with respect to the coordinate system $(u, y_1,\cdots, y_{n-1})$.
\end{lemma}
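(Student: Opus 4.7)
The plan is to exploit the explicit asymptotic form $\kappa(\rho)=\rho^{3-n}$, valid for $\rho\ge 1$, to obtain closed expressions for $G$ and $\ell$, then estimate derivatives directly by tracking their supports.

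Since $c=1/(n-3)$, we have $\kappa^c(\rho)=\rho^{-1}$ and $\kappa^{1+c}(\rho)=\rho^{2-n}$ for $\rho\ge 1$. Substituting into the definitions in Example \ref{Example}, performing the substitution $s=t/\rho$ in the first summand of the integral defining $\ell$, and evaluating the second directly, yields
\begin{equation}
G(u,\rho)-1=2\rho^{2-n}\eta(u/\rho),\qquad \ell(u,\rho)=\rho^{3-n}\bigl[H(u/\rho)-H(u)\bigr],
\end{equation}
where $H(s):=\int_{-\infty}^s\eta(\tau)\,d\tau$. Differentiating in $\rho$ gives the workable expression
\begin{equation}
\ell_\rho=(3-n)\rho^{2-n}\bigl[H(u/\rho)-H(u)\bigr]-u\rho^{1-n}\eta(u/\rho).
\end{equation}

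The crucial structural observation is that $G-1$ and $\ell_\rho$ are supported in the cone $\{|u|\le\rho\}$: outside this cone, $u/\rho$ and $u$ lie on the same side of $[-1,1]$, so $H(u/\rho)=H(u)\in\{0,H(\infty)\}$ and $\eta(u/\rho)=0$. On this cone, the Euclidean distance $r=(u^2+\rho^2)^{1/2}$ is comparable to $\rho$, so every factor of $\rho^{-1}$ produced by the chain rule translates into $r^{-1}$. Direct differentiation gives $|\partial^I(G-1)|\lesssim r^{-(n-2)-|I|}$ for $|I|\le 2$, establishing (i). For (ii), a similar computation on $\ell_\rho$ produces the pointwise bound $|\partial^I\ell_\rho|\lesssim r^{2-n}$ for $|I|\le 2$; the slowest-decaying contributions come from $\partial_u$ hitting the $H(u)$ summand, generating terms $\rho^{2-n}\eta^{(j)}(u)$ supported in the slab $\{|u|\le 1\}\subset\{|u|\le\rho\}$, where $r\asymp\rho$ still holds. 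This pointwise bound is strictly stronger than the $r^{4+a-n-|I|}$ required by (ii). The Hölder seminorm on $\partial^2\ell_\rho$ is handled by splitting into $|x-y|\le 1$ and $|x-y|>1$: in the former range, the Lipschitz bound coming from $\eta\in C^3$ controls the quotient by $r^{2-n}|x-y|^{1-a}$; in the latter, the $L^\infty$ estimate $\lesssim r^{2-n}$ yields $r^{2-n}|x-y|^{-a}\lesssim r^{2-n}$. Multiplied by the weight $r^{n-2}$, both produce uniformly finite contributions.

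For (iii), I invoke Lemma \ref{k-identity lemma}. Rewriting $2\ell_\rho\,d\rho\,du=2\sum_\alpha(\ell_\rho\,y_\alpha/\rho)\,du\,dy_\alpha$ identifies $Y_\alpha=\ell_\rho\,y_\alpha/\rho$ and $f^2=G-\ell_\rho^2$, and the decay estimates above show $f\to 1$ at infinity. Each component of $k$ is then an algebraic combination of $G$, $\ell_\rho$ and their first derivatives divided by positive powers of $f$, and inserting the bounds from (i) and (ii) gives $|\partial^I k|\lesssim r^{-(n-2-a)-|I|}$ for $|I|\le 1$ with ample margin. The main obstacle throughout is not conceptual but rather the careful split of Hölder seminorms into short-range (Lipschitz-dominated) and long-range (supremum-dominated) regimes, combined with the distinct supports of the $\eta(u)$ and $\eta(u/\rho)$ pieces; once one notes that the $\eta(u)$ slab is contained in the $\eta(u/\rho)$ cone for $\rho\ge 1$, the bookkeeping goes through routinely.
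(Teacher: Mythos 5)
Your computations for (i) and (ii) are correct and follow essentially the same route as the paper: use $\kappa=\rho^{3-n}$ to get the closed forms $G-1=2\rho^{2-n}\eta(u/\rho)$ and $\ell=\rho^{3-n}[H(u/\rho)-H(u)]$, localize the support to the cone $\{|u|\le\rho\}$ where $r\asymp\rho$, and estimate directly.

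For (iii), however, there is a genuine gap. You invoke the \emph{isotropic} bound $|\partial^I\ell_\rho|\lesssim r^{2-n}$ for $|I|\le 2$, which does not improve under differentiation. But $k_{\alpha\beta}=f^{-1}\ell_{,\alpha\beta}$ already involves one more spatial derivative of $\ell_\rho$, so $\partial k_{\alpha\beta}$ involves $\partial^2\ell_\rho$, and plugging in $|\partial^2\ell_\rho|\lesssim r^{2-n}$ only yields $|\partial k|\lesssim r^{2-n}$. The required bound is $|\partial k|\lesssim r^{1+a-n}$, and $r^{2-n}\le r^{1+a-n}$ would need $a\ge 1$, which is false. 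So the crude bound from (ii) does \emph{not} deliver (iii) ``with ample margin''; it delivers a bound that is one order too weak. The fix is the anisotropic estimate you already observed but did not carry through: tangential ($\rho$- and $y$-) derivatives of $\ell_\rho$ gain a factor of $r^{-1}$ each, while only pure $\partial_u$-derivatives (which hit $\eta^{(j)}(u)$ on the slab) do not. In other words, $|\partial_y^J\partial_u^j\ell_\rho|\lesssim r^{2-n-|J|}$. With this one finds $|\ell_{,\alpha\beta}|\lesssim r^{1-n}$, $|\partial\ell_{,\alpha\beta}|\lesssim r^{1-n}$ (the worst case being a $\partial_u$), and then the weight $r^{n-1}$ in the H\"older seminorm of $\partial k$ produces a uniformly bounded quantity, exactly matching the threshold $k\in C^{1,a}_{2+a-n}$. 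This is precisely what the paper's level-set formulation $(\partial_u)^m\ell\in C^{i,a}_{3-n}(\Sigma)$ encodes: full elliptic gain along $\Sigma=\R^{n-1}$, no gain in $u$. Your argument would be complete if you replaced the isotropic $r^{2-n}$ bound with this anisotropic one before deducing (iii).
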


\begin{proof}
We begin with showing estimate (i).
Since $\eta$ is compactly supported on $[-1,1]$, and since $\kappa^c=\rho^{-1}$ for $\rho\ge 1$, we obtain that $G(u,\rho)\equiv 1$ on the set $\{|u|\ge \rho\ge 1\}$.
Therefore, we only need to verify the asymptotics in the region $\{|u|\le \rho\}$.
Note that in this region we have $G(u,\rho)=1+2 \rho^{2-n}\eta(\rho^{-1}u)$. 
Let $I$ be an index set consisting by the coordinates $\{u,y_1,\cdots y_{n-1}\}$. Observe that $|\partial_I[\eta(\rho^{-1} u)]|=O(r^{-|I|})$. This implies $G-1\in C^{i,\alpha}_{2-n}(M^n)$ for any $i\in \mathbb{N}$.

Next, we demonstrate that estimate (ii) holds.
Since $\eta(t)$ is compactly supported on $[-1,1]$ and 
\begin{equation*}
    \int_{-\infty}^\infty \kappa^{1+c}\eta(\kappa^c t)-\kappa \eta(t) dt=0,
\end{equation*}
the function $\ell(u,\rho)$ is  supported on $\{|u|\le \max\{1, \kappa^{-c}\}\}$. 
Therefore, it suffices to perform the computations in the asymptotic region of $\{|u|\le \rho\}$. 
 Note that in this region we have $\ell_u= \rho^{2-n}\eta(\rho^{-1}u)-\rho^{3-n}\eta (u)$.
Hence, 
    $\ell_u=O_i(r^{2-n})-\rho^{3-n}\eta(u)$, for any $i\in \mathbb{N}$. Therefore, for any fixed $u$,  $(\partial_u)^m\ell\in C^{i,\alpha}_{3-n}(\mathbb{R}^{n-1})$ for any $i\in \mathbb{N}$. 
    In particular, $\ell_\rho\in C^{2,\alpha}_{4+\alpha-n}(M^n)$.

Finally, we verify that $k\in C^{1,\alpha}_{2+\alpha-n}(M^n)$. 
Note that $|\nabla u|^{-2}=G-|\ell_\rho|^2=1+O_{2,\alpha}(r^{2-n})$. 
Applying Equation \eqref{kab} and the decay estimates of $\ell$, we have
\begin{equation*}
    k_{{\hat{\mathbf{n}}} i}=-|\nabla u|^{-1}\nabla_i |\nabla u|=O_{1,\alpha}(r^{1-n}), \quad \quad\text{and}\quad\quad
    k_{\A\B}=|\nabla u|\ell_{\A\B}=O_{1,\alpha}(r^{2+\alpha-n})
\end{equation*}
which finishes the proof.
\end{proof}

From the estimates above, $g$ is only a $C^{2,\alpha}_{4+\alpha-n}$-asymptotically flat metric because of the cross term $2\ell_\rho dud\rho$ in the metric. Therefore, a reparameterization that absorbs this cross term will improve the decay rates.
Let us first study a function $L$ defining in the following lemma which will be used in the reparametrization. 

\begin{lemma} \label{A1}
Define 
\begin{equation} \label{Lrhouu}
L(u, \rho)=\int^u_{-\infty}\ell(t,\rho)dt.    
\end{equation}
Then $L(u,\rho)$  vanishes on $\{|u|\ge \kappa^{-c}(\rho)\}$. Moreover, for any $i\in\mathbb N$
\begin{equation} \label{Ldecay}
     L\in C^{i,\alpha}_{4-n}(\R^{n-1}) \quad \text{and} \quad  
     (\partial_u)^mL\in C^{i,\alpha}_{3-n}(\R^{n-1}) \quad\text{when}\quad m\ge 1.
\end{equation}
\end{lemma}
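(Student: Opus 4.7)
The plan is to first prove the support claim using the evenness of $\eta$ via integration by parts, and then derive the decay estimates directly from the explicit formula for $\ell$. A key observation is that the choice $c = 1/(n-3)$ gives $\kappa^c = \rho^{-1}$ and hence $\kappa^{1+c} = \rho^{2-n}$, which governs the subsequent exponent arithmetic.

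For the support statement, first note that $\ell(t,\rho)$ itself vanishes for $|t| \geq \kappa^{-c}(\rho)$ when $\rho \geq 1$. Indeed, for such $t$, both $\kappa^{1+c}\eta(\kappa^c s)$ and $\kappa\eta(s)$ are supported in $\{|s|\le\kappa^{-c}\}$, and they have the same $L^1$-integral over $\mathbb{R}$ (by the substitution $s \mapsto \kappa^c s$), so $\int_{-\infty}^t[\kappa^{1+c}\eta(\kappa^c s) - \kappa\eta(s)]\,ds$ equals $0$ for $t \geq \kappa^{-c}$ and vanishes trivially for $t \leq -\kappa^{-c}$. For $u \leq -\kappa^{-c}$, this immediately yields $L(u,\rho) = 0$. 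For $u \geq \kappa^{-c}$, the compact support of $\ell$ gives $L(u,\rho) = \int_{\mathbb{R}} \ell(t,\rho)\,dt$, and integration by parts with vanishing boundary terms produces
\begin{equation*}
\int_{\mathbb{R}} \ell\,dt = -\int_{\mathbb{R}} t\,\ell_u(t,\rho)\,dt = -\int_{\mathbb{R}} t\bigl[\kappa^{1+c}\eta(\kappa^c t) - \kappa\eta(t)\bigr]\,dt = 0,
\end{equation*}
because the bracket is even in $t$ by evenness of $\eta$.

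For the decay estimates, I use the explicit formulas. In the support, $|\ell| \leq C\kappa = O(\rho^{3-n})$, and integrating over a $t$-interval of length at most $2\kappa^{-c} = 2\rho$ gives $|L| = O(\rho \cdot \kappa) = O(\rho^{4-n})$. For $m \geq 2$, direct differentiation yields
\begin{equation*}
(\partial_u)^m L = \kappa^{1+(m-1)c}\eta^{(m-2)}(\kappa^c u) - \kappa\,\eta^{(m-2)}(u) = \rho^{4-n-m}\eta^{(m-2)}(\rho^{-1}u) - \rho^{3-n}\eta^{(m-2)}(u),
\end{equation*}
where I used the exponent identity $(3-n)\bigl(1+(m-1)c\bigr) = 4-n-m$. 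Since $4-n-m \leq 2-n \leq 3-n$ for $m \geq 2$ and $\rho \geq 1$, both terms are pointwise bounded by $C\rho^{3-n}$; combined with the $m=1$ case $|\partial_u L| = |\ell| = O(\rho^{3-n})$, this yields the correct pointwise decay for every $m \geq 1$.

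The higher-order spatial derivatives and Hölder seminorms follow from the same explicit expressions. Each $\partial_{y_\beta}$ applied either hits a factor $\rho^\alpha$ (yielding a $\rho^{-1}$ gain) or hits $\eta(\rho^{-1}u)$, where $\partial_{y_\beta}\eta(\rho^{-1}u) = -u y_\beta \rho^{-3}\eta'(\rho^{-1}u) = O(\rho^{-1})$ in the support $|u| \leq \rho$. Thus $|I|$ spatial derivatives produce an extra factor of $\rho^{-|I|}$, matching the weights in $C^{i,a}_{4-n}$ and $C^{i,a}_{3-n}$; the Hölder seminorms with exponent $a$ follow by interpolation between consecutive derivative bounds on balls of scale $\rho/2$, using smoothness of $\eta$. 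The main technical point is the exponent identity $\kappa^{1+(m-1)c} = \rho^{4-n-m}$, which depends critically on the choice $c = 1/(n-3)$ and is exactly what makes the first term of $(\partial_u)^m L$ subordinate to the required $\rho^{3-n}$ bound for all $m \geq 2$.
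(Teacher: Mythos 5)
Your proof is correct and follows essentially the same approach as the paper: obtain the explicit formulas for $L$ and $\partial_u^m L$ and read off the decay from the exponent arithmetic $\kappa^{1+(m-1)c}=\rho^{4-n-m}$. Your integration-by-parts argument for the support claim (that $\int_{\mathbb{R}}\ell\,dt=-\int_{\mathbb{R}} t\,\ell_u\,dt=0$ by oddness of $t\mapsto t\,\ell_u(t,\rho)$) is a slightly cleaner route to the same conclusion the paper reaches by directly pairing the intervals $[u,\kappa^c u]$ and $[-u,-\kappa^c u]$, and the paper's observation that the double antiderivative $\vartheta$ of $\eta$ is linear for $u\ge 1$ is what your integration by parts is implicitly recovering. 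One small point to keep in mind when filling in the spatial-derivative estimates: for $L$ and $\ell$ themselves the explicit formulas contain the antiderivatives $\vartheta_1(u)=\int_{-\infty}^u\eta$ and $\vartheta(u)=\int_{-\infty}^u\vartheta_1$, not $\eta(\rho^{-1}u)$, and $\vartheta(u)$ grows linearly in $u$; the claimed gain of $\rho^{-1}$ per $\partial_{y_\beta}$ then comes from hitting the prefactor $\rho^{3-n}$ while using $|\vartheta(u)|\lesssim\rho$ on the support $|u|\le\rho$. This is the same bookkeeping the paper does via $L=\rho^{4-n}\vartheta(\rho^{-1}u)-\rho^{3-n}\vartheta(u)$, so your outline is sound once this case is noted explicitly.
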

\begin{proof}
  As $\eta$ is an even function defined in Example \ref{Example}, we have
\begin{equation*}
    \begin{split}
        \int_{-\infty}^{\infty} \ell(u,\rho)du=&\int_{-\infty}^{\infty}\int^u_{-\infty} \left(\kappa^{1+c}(\rho)\eta(\kappa^c(\rho) t)-\kappa(\rho) \eta(t)\right)dtdu
        \\=&\kappa(\rho)\int_{-\infty}^{\infty}\left[
        \int^{\kappa^{c}(\rho)u}_{-\infty}\eta(t)dt-\int^u_{-\infty}\eta(t)dt
        \right]du
        \\=& \kappa(\rho)\int_{0}^{\infty}
        \left[\int^{\kappa^{c}(\rho)u}_{u}\eta(t)dt+\int^{-\kappa^{c}(\rho)u}_{-u}\eta(t)dt\right]du=0.
    \end{split}
\end{equation*}
Since $\ell(u,\rho)$ vanishes on $\{|u|\ge \kappa^{-c}(\rho)\}$, the function $L(u,\rho)$ also vanishes on $\{|u|\ge \kappa^{-c}(\rho)\}$.
Therefore, we only need to consider the asymptotic region in $\{|u|\le \kappa^{-c}(\rho)\}$. 
In this case, $\ell_u= \rho^{2-n}\eta(\rho^{-1}u)-\rho^{3-n}\eta (u)$.
Integrating $\ell_u$ twice, we obtain with the help of Equation \eqref{Lrhouu}
\begin{equation} \label{Lurho}
\begin{split}
    L(u,\rho)=&\int^u_{-\infty}\int_{-\infty}^w \left(\rho^{2-n}\eta(\rho^{-1}t)-\rho^{3-n}\eta (t)\right) dt dw
    \\=&\rho^{4-n}\int_{-\infty}^{\rho^{-1}u}\int_{-\infty}^\mathbf{w}\eta(t)dtd\mathbf{w}-\rho^{3-n}\int^u_{-\infty}\int^w_{-\infty}\eta(t)dtdw
    \\=&\rho^{4-n}\vartheta(\rho^{-1}u)-\rho^{3-n}\vartheta(u)
\end{split}
\end{equation}
where $\vartheta(u)=\int^u_{-\infty}\int^w_{-\infty}\eta(t)dtdw$. 
Since $L(u,\rho)$ vanishes on $\{|u|\ge \kappa^{-c}(\rho)=\rho\}$, we obtain 
\begin{equation*}
    \vartheta(\rho^{-1}u)=\rho^{-1}\vartheta(u)\quad \text{when }\quad |u|\ge \rho.
\end{equation*}
This means $\vartheta(u)$ is a linear function when $u\ge 1$. On the other hand, we have by definition $\vartheta=0$ when $u\le -1$.
Therefore, Equation \eqref{Lurho} implies
\begin{equation*}
     L\in C^{i,\alpha}_{4-n}(\R^{n-1}) \quad \text{and} \quad  
     (\partial_u)^mL\in C^{i,\alpha}_{3-n}(\R^{n-1}) \text{ when } m\ge 1
\end{equation*}
for any $i\in \mathbb N$.
\end{proof}

\begin{lemma} Let $\varrho=\rho+L_\rho$ and denote with $d\varsigma^2$ the round metric on $S^{n-2}$.
    With respect to the coordinates $(u,\varrho,\varsigma)$, the metric $g$ defined in equation \eqref{metric} is $C^{2,\alpha}_{3+\alpha-n}$-asymptotically flat. In particular, this proves Example \ref{Example}.
\end{lemma}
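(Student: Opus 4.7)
The reparameterization $\varrho=\rho+L_\rho$ is engineered precisely to kill the troublesome cross-term $2\ell_\rho\,du\,d\rho$ in \eqref{metric}, exploiting the identity $L_{u\rho}=\ell_\rho$. The plan is to compute the metric in the coordinates $(u,\varrho,\varsigma)$ explicitly and bound each coefficient's deviation from the flat model $du^2+d\varrho^2+\varrho^2\,d\varsigma^2$. Differentiating gives $d\varrho=(1+L_{\rho\rho})\,d\rho+\ell_\rho\,du$, and since Lemma \ref{A1} ensures $L_{\rho\rho}=O(r^{2-n})\to 0$ at infinity, $\varrho$ is a valid coordinate in the asymptotic region. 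Inverting, one obtains $d\rho=(1+L_{\rho\rho})^{-1}(d\varrho-\ell_\rho\,du)$.

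Substituting into \eqref{metric} and using $G=f^2+\ell_\rho^2$, with the shorthand $\lambda:=1+L_{\rho\rho}$, a direct calculation yields
\begin{equation}
g=\Bigl(f^2+\tfrac{\ell_\rho^2\,L_{\rho\rho}^2}{\lambda^2}\Bigr)du^2+\tfrac{2\ell_\rho L_{\rho\rho}}{\lambda^2}\,du\,d\varrho+\tfrac{1}{\lambda^2}\,d\varrho^2+\rho^2\,d\varsigma^2.
\end{equation}
The decisive improvement is the appearance of an extra factor $L_{\rho\rho}\in C^{2,a}_{2-n}$ multiplying $\ell_\rho$ in the cross-term, which upgrades the sharp but weak decay $\ell_\rho\in C^{2,a}_{4+a-n}$ to $\ell_\rho L_{\rho\rho}\in C^{2,a}_{6+a-2n}$, well within the target class once $n\ge 5$.

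It then remains to estimate each coefficient. Lemma \ref{Glk}(i)--(ii) gives $f^2-1=(G-1)-\ell_\rho^2\in C^{2,a}_{2-n}$, and $\lambda^{-2}-1=-2L_{\rho\rho}+O(L_{\rho\rho}^2)\in C^{2,a}_{2-n}$. Writing $\rho=\varrho-L_\rho$ with $L_\rho\in C^{2,a}_{3-n}$ (from Lemma \ref{A1}), the angular block contributes
\begin{equation}
\rho^2-\varrho^2=-L_\rho(\rho+\varrho)\in C^{2,a}_{4-n}\subset C^{2,a}_{3+a-n},
\end{equation}
since $a\in(0,1)$. Assembling these bounds shows $g-(du^2+d\varrho^2+\varrho^2\,d\varsigma^2)\in C^{2,a}_{3+a-n}$, as desired.

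The main difficulty lies in the weighted Hölder bookkeeping. The sharp rate $C^{2,a}_{4+a-n}$ for $\ell_\rho$ --- precisely what makes Example \ref{Example} optimal for Theorem \ref{Cor:main} --- forbids any proof that appeals only to pointwise decay. Fortunately $L$ and its derivatives of all orders decay with the full rates of Lemma \ref{A1}, so each coefficient in the displayed metric is an algebraic combination of quantities with known weighted-Hölder behavior, and standard multiplicative/composition properties of weighted Hölder spaces propagate the estimates through. The borderline rate for $\ell_\rho$ enters only once, in the product $\ell_\rho L_{\rho\rho}$, where the compensating factor $L_{\rho\rho}$ supplies the necessary slack.
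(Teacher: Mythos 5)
Your algebraic route is a genuine alternative to the paper's. Rather than expanding $d\varrho^2$ forward and subtracting (which leaves residual terms $-2L_{\rho\rho}\,d\rho^2-(dL_\rho)^2$ still expressed in the old coframe, as the paper does), you invert $d\rho=\lambda^{-1}(d\varrho-\ell_\rho\,du)$, $\lambda=1+L_{\rho\rho}$, and substitute directly. The resulting expression, with the cross-term coefficient $2\ell_\rho L_{\rho\rho}\lambda^{-2}$ exhibiting the extra factor of $L_{\rho\rho}$, is the same object the paper's manipulations produce after one more step, and your version makes the cancellation mechanism more visible. However, there are two genuine gaps.

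First, the inclusion ``$C^{2,a}_{4-n}\subset C^{2,a}_{3+a-n}$, since $a\in(0,1)$'' runs the wrong way: in the paper's convention $C^{s,a}_{-p}\subset C^{s,a}_{-q}$ when $p\ge q$, and since $n-3-a>n-4$ one has $C^{2,a}_{3+a-n}\subsetneq C^{2,a}_{4-n}$, not the reverse. Moreover $\rho^2-\varrho^2$ is not the right quantity to bound: the angular block of the metric sits against $d\varsigma^2\sim\varrho^{-2}(dy)^2$, so asymptotic flatness requires a bound on the orthonormalized deviation $(\rho^2-\varrho^2)\varrho^{-2}=-L_\rho(\rho+\varrho)\varrho^{-2}$, which behaves like $r^{2-n}$ and does land in $C^{2,a}_{3+a-n}$. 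As written, your estimate of the angular coefficient is incorrect even though the target statement is true.

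Second, and more seriously, the final sentence deferring to ``standard multiplicative/composition properties of weighted H\"older spaces'' hides the actual content of the lemma. Every coefficient in your displayed metric ($f^2$, $\ell_\rho$, $L_{\rho\rho}$, $\lambda^{-2}$, $\rho^2$) is naturally a function of $(u,\rho)$, but $C^{2,a}_{3+a-n}$-asymptotic flatness must be verified with respect to the new chart $(u,\varrho,\varsigma)$. This requires inverting $\varrho=\rho+L_\rho(u,\rho)$ to $\rho=\rho(u,\varrho)$ and tracking how the weighted norms transform under composition; the paper devotes the second half of the proof precisely to this, computing the Jacobian $\partial z/\partial\tilde z$ and verifying that it, its derivatives, and the relevant $C^{0,a}$ seminorm are all $O(r^{2-n})$, which is what lets one replace ``decay in $(u,\varrho,\varsigma)$'' by ``decay in $(u,\rho,\varsigma)$''. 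Without this step your proof does not close. A related imprecision: Lemma \ref{A1} gives $L_\rho\in C^{i,a}_{3-n}(\R^{n-1})$ and $L_{\rho\rho}\in C^{i,a}_{2-n}(\R^{n-1})$ on each \emph{level set} only; on $M^n$ the $u$-derivatives (which hit the non-decaying $\eta(u)$ factor) degrade these rates, so your intermediate regularity claims for $L_\rho$, $L_{\rho\rho}$ are overstated, even though the final product estimates happen to survive.
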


\begin{proof}
We have
\begin{equation*}
\begin{split}
    d\varrho^2=&[d(\rho+L_\rho)]^2
  \\=& [d\rho+L_{\rho\rho}d\rho+\ell_\rho du]^2
  \\=&d\rho^2+2\ell_\rho dud\rho+2L_{\rho\rho}d\rho^2+(dL_\rho)^2.
\end{split}
\end{equation*}
Therefore, the metric $g$ defined in equation \eqref{metric} takes the form
\begin{equation*}
\begin{split}
    g=&Gdu^2+d\varrho^2+\rho^2 d\varsigma^2-2L_{\rho\rho}d\rho^2-(dL_\rho)^2
    \\=&g_{\R^n}+(G-1)du^2+(\rho^2-\varrho^2)d\varsigma^2-2L_{\rho\rho}d\rho^2-(dL_\rho)^2.
\end{split}
\end{equation*}
To prove the $C^{2,\alpha}_{3+\alpha-n}$-asymptotic for $g$, it suffices to show
\begin{equation*}
    G-1,\;\; (\rho^2-\varrho^2)\varrho^{-2} \;\in C^{2,\alpha}_{3+\alpha-n}(M^n),\quad \quad\text{and}\quad
    \quad L_{\rho\rho}d\rho^2,\;\;(dL_\rho)^2
    \in C^{2,\alpha}_{3+\alpha-n}(M^n)
\end{equation*}
with respect to the coordinate system $(u,\varrho,\varsigma)$.
To distinguish these two coordinate systems, we write $\{z_1,\cdots ,z_n\}$ for $\{u,\rho,\varsigma\}$ and $\{\tilde{z}_1,\cdots ,\tilde{z}_n\}$ for $\{u,\varrho,\varsigma\}$.
For any $C^{2}$ function $\mathcal{F}$ we have
\begin{equation*}
\begin{split}
    \frac{\partial^2 \mathcal{F}}{\partial \tilde{z}_i \partial \tilde{z}_j}=&\frac{\partial}{\partial \tilde{z}_i} \left(\frac{\partial \mathcal{F}}{\partial z_p}\cdot\frac{\partial z_p}{\partial \tilde{z}_j}\right)
    \\= & \frac{\partial \mathcal{F}}{\partial z_p}\cdot\frac{\partial^2 z_p}{\partial \tilde{z}_i\partial \tilde{z}_j}+
    \frac{\partial^2 \mathcal{F}}{\partial z_q \partial z_p}
    \cdot \frac{\partial z_q}{\partial \tilde{z}_i}\cdot
    \frac{\partial z_p}{\partial \tilde{z}_j}.
\end{split}
\end{equation*}
Since
\begin{equation*}
    \frac{\partial(u,\varrho,\varsigma)}{\partial(u,\rho,\varsigma)}=
    \begin{bmatrix}
        1&0&0
        \\ L_{\rho u}&1+L_{\rho\rho}&0
        \\ 0&0&I_{n-2}
    \end{bmatrix}, 
\end{equation*}
we can use equation \eqref{Ldecay} to obtain
\begin{equation*}
    \frac{\partial z_p}{\partial \tilde{z}_i}=\delta_{ip}+O(r^{2-n}), \quad \text{and}\quad
    \frac{\partial^2 z_p}{\partial \tilde{z}_i\partial \tilde{z}_j}=O(r^{2-n}).
\end{equation*}
Moreover, the $C^{0,\alpha}$ H\"older norm of $\frac{\partial^2 z_p}{\partial \tilde{z}_i\partial \tilde{z}_j}$
is still $O(r^{2-n})$. 
Therefore, it remains to demonstrate that 
\begin{equation*}
    G-1,\;\; (\rho^2-\varrho^2)\varrho^{-2}, \;\; L_{\rho\rho},\;\;L_{\rho u}^2,\;\; L_{\rho\rho}^2, \;\;L_{\rho u} L_{\rho\rho}
    \in C^{2,\alpha}_{3+\alpha-n}(M^n)
\end{equation*}
with respect to the coordinate system $(u,\rho,\varsigma)$.
The decay rates of the last four terms follow from Lemma \ref{A1}.
Moreover, we have $G-1\in C^{2,\alpha}_{2-n}$ according to Equation \eqref{Glk}.  
Finally, we compute 
\begin{equation*}
    (\rho^2-\varrho^2)\varrho^{-2}
    =L_\rho^2\varrho^{-2}-
    2L_\rho\varrho^{-1}\in C^{2,\alpha}_{3+\alpha-n}(M^n),
\end{equation*}
which finishes the proof.
\end{proof}

\appendix

\section{Trivial topology of pp-waves}\label{S:topology}

We will show that each level set is topologically trivial (i.e., $\Sigma\cong\mathbb R^{n-1}$), which implies that $M^n\cong \R^n$. The main theorem \ref{trivial top} is a direct corollary of the Reeb's stability theorem with boundary, stated at the end of the appendix  (\Cref{Reeb thm with boundary}).
For completeness, we adapt the proof of the Reeb's stability theorem from \cite[p.72, Theorem 4]{Camacho-Neto} to our setting.

\begin{theorem}\label{trivial top}
    Let $M^n$ be an asymptotically flat manifold of decay rate $q>0$ with a single end. Suppose $X$ is a smooth globally defined nowhere vanishing vector field on $M^n$ with a closed dual $1$-form, and that $X=\nabla x_n+O_1(r^{-q})$ holds in the asymptotic region. Then there exists a global foliation $\mathfrak{F}$ such that each leaf $\mathfrak{L}$ is diffeomorphic to $\R^{n-1}$ with normal vector $X$. Moreover, $M^n$ is diffeomorphic to $\mathfrak{L}\times \R=\R^{n}$.
\end{theorem}

Since $X$ is closed and the asymptotic region is simply connected, we can construct a function $u$ in the asymptotically flat end such that $du=X$ and such that $u$ asymptotes to $x_n$.
By the global Frobenius theorem, there exists a global foliation $\mathfrak{F}$ on the entire manifold $M^n$. Define $\mathcal{B}=\{u=-C_0, x_1^2+\cdots+x_{n-1}^2\le C_0^2\}$ where $C_0\gg1$ is chosen so large such that $\mathcal B$ is contained entirely in the asymptotically flat end.
Let $\mathcal{C}$ be a large cylinder with bottom face $\mathcal{B}$,  side $\mathcal{S}$ tangent to $X$,  and top face $\mathcal{T}$  contained in $\{u=C_0\}$. The side $\mathcal{S}$ is obtained by flowing $\partial\mathcal{B}$ along $X$ until it reaches $\{u=C_0\}$. 
{Note that if $C_0$ is chosen sufficiently large, we can also ensure that the leaves of $\mathfrak F$ intersect the side $\mathcal S$ transversely.}

Since $M^n$ has only one end, $\mathcal{C}$ is compact. To prove \Cref{trivial top}, it suffices to show that $\mathcal{C}\cong \mathcal{B}\times[-C_0,C_0]$. 
Now we restrict our discussion to $\mathcal{C}$ for the remainder of the appendix and continue to  denote the restricted foliation $\mathfrak{F}|_\mathcal{C}$ as $\mathfrak{F}$. 
Let $\tilde{U}$ be the subset of $\mathcal{C}$ such that for any $p\in\tilde U$, the leaf $\mathfrak{L}_p$ through $p$
is compact and diffeomorphic to $\mathcal{B}$. 
\begin{lemma}\label{open}
    For any compact leaf $\mathfrak{L}$ in $\mathcal{C}$, there exists a tubular neighborhood $V$ such that $V\cong \mathcal{C}\times (-\delta,\delta)$ and each leaf in $V$ is compact.
\end{lemma}
\begin{proof} 
 We can construct such a neighborhood $V$ by flowing $X$ for a short time. A more general argument is stated in \cite[p.73, Lemma 6]{Camacho-Neto}.  
\end{proof}
Here is a useful concept in foliation theory which we will use in the proof.
\begin{definition}
    Let $S$ be a subset of $\mathcal{C}$, the saturation $\operatorname{sat}(S)$ is the union of all leaves $\mathfrak{L}$ that intersect with $S$. 
    We call the set $S$ saturated if $\operatorname{sat}(S)=S$. 
\end{definition}
The following lemma from \cite[p.47, Theorem 1 and 2]{Camacho-Neto} will be used later in the proof of Theorem \ref{trivial top}.
\begin{lemma} \label{sat op}
The following two statements hold:
    \begin{enumerate}
        \item If $S$ is an open subset of $\mathcal{C}$, then $\operatorname{sat}(S)$ is also open in $\mathcal{C}$. 
        \item Let $A$ be a saturated subset of $\mathcal{C}$. Then  $\partial^* A:=\partial A\setminus (\partial \mathcal{C}\cap A)$ is also saturated.
    \end{enumerate}
\end{lemma}
\begin{proof}
    (1) Let $w$ be a point in $\operatorname{sat} (S)$. Then there exists a leaf $\mathfrak{L}$ containing $w$ and another point $w_0\in S$. Let $\mathcal{L}$ be a connected open subset in $\mathfrak{L}$ containing $w$ and $w_0$, such that its closure $\overline{\mathcal{L}}$ is compact. Since $S$ is open,  flowing $\mathcal{L}$ by $X$ for a sufficiently short time produces an open neighborhood $V_0$ of $\mathfrak{L}$ such that each leaf in $V_0$ intersects $S$. Therefore, $V_0\subset \operatorname{sat}(S)$, which implies that $\operatorname{S}$ is open.  

   (2) Let $\mathring{A}$ be the largest open subset in $\mathcal{C}$ contained in $A$. Note that $\mathring{A}$ might not open in $M^n$, although it is open in $\mathcal{C}$. According to (1), we have $\operatorname{sat}(\mathring{A})$ is open in $\mathcal{C}$, and since $A$ is saturated, we have $\operatorname{sat}(\mathring{A})\subset A$. Thus, $\operatorname{sat}(\mathring{A})=\mathring{A}$, meaning that $\mathring{A}$ is saturated. 

Let $B=\mathcal{C}\setminus A$. By a similar argument, we can also show that $\mathring{B}$ is  saturated. Therefore, $\partial^*A=\mathcal{C}\setminus(\mathring{A}\cup \mathring{B})$ is also saturated.
\end{proof}

Let $U$ be the connected component of $\tilde{U}$ containing $\mathcal{B}$. Then $U$ is an open subset of $\mathcal{C}$. We will show that $\partial U=\partial\mathcal{C}$, which will imply $U=\mathcal{C}$.

\begin{lemma}\label{LA5}
  The leaves in $\mathcal{C}$ are compact and diffeomorphic to $\mathcal{B}$, i.e., $\partial^*U=\partial U\setminus (\partial \mathcal{C}\cap U)=\emptyset$.
\end{lemma}

\begin{figure}
    \begin{picture}(0,0)
     \put(50,20){$\mathcal{B}$} 
     \put(50,220){$\mathcal{T}$}
     \put(36,142){$\tilde{\mathfrak{L}}$}
     \put(78,137){\textcolor{blue}{$I$}}
     \put(95,117){\textcolor{red}{$p$}}
    \end{picture}
\includegraphics[width=0.35\textwidth]{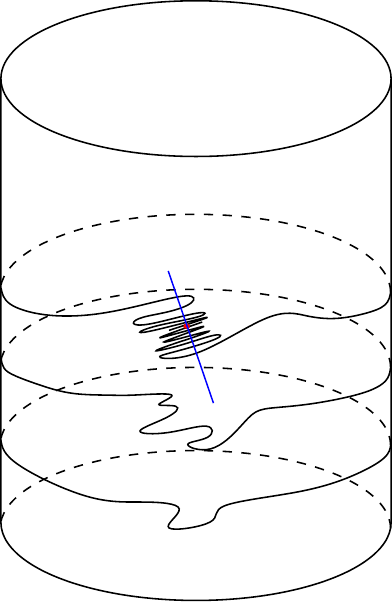}
    \caption{
       $\mathcal{T}$ and $\mathcal{B}$ are the top and base of the cylinder. 
    $I$ is the blue line segment passing through the red accumulation point $p$ of the noncompact leaf $\tilde{\mathfrak{L}}$.}
    \label{F:topology}
\end{figure}
\begin{proof}
Suppose  $\partial^*U\neq\emptyset$.
    According to \Cref{sat op}, $\partial^*U$ is saturated by $\mathfrak{F}$. We first show that all leaves contained in $\partial^* U$ are compact. 

    Suppose that $\tilde{\mathfrak{L}}\subset \partial^*U$ is a noncompact leaf. Since $\mathcal{C}$ is compact, there exists an accumulation point $p\in \mathcal{C}\setminus\tilde{\mathfrak{L}}$ of $\tilde{\mathfrak{L}}$.  
    Let $I$ be a compact line segment passing through $p$ which is an integral curve of $X$. 
    Notice that when restricting $\tilde{\mathfrak{L}}$ to a small neighborhood $V_p$ of $p$, we find that $\tilde{\mathfrak{L}}$ contains infinite many leaves in $V_p$ as $p$ is a limit point of $\tilde{\mathfrak{L}}$. Thus, $\tilde{\mathfrak{L}}\cap I$ is infinite. Considering the local neighborhood of each point in $\tilde{\mathfrak{L}}\cap I$, we have $U\cap I=\cup_{j\in \mathbb{N}}I_j$, where $I_j$ is a connected open segment in $I$. 

    Next, fixing $j$, we will show that for each $j$, $\operatorname{sat}(I_j)=U$. 
    \Cref{open} implies that $\operatorname{sat}(I_j)$ is an open subset in $U$. 
    Thus, it suffices to show that $\operatorname{sat}(I_j)$ is closed in $U$.  Let $\{p_i\}_{i=1}^\infty$ be a sequence of points in $ \operatorname{sat}(I_j)$ and $\lim_{i\to\infty}p_i=p_\infty\in U$. 
    Denote with $\mathfrak{L}_{p_\infty}\subset U$  as the leaf containing $p_\infty$. 
    Since  $\mathfrak{L}_{p_\infty}$ is contained in $U$, $\mathfrak{L}_{p_\infty}$ is compact by definition of $U$. Therefore, for sufficiently small $\delta>0$, there exists a saturated neighborhood $V_{p_\infty}^\delta$ of $\mathfrak{L}_{p_\infty}$ and we have $V_{p_\infty}^\delta\cong \mathfrak{L}_{p_\infty}\times (-\delta,\delta)$. 
    Denote with $\pi_\delta$ the retraction from $V_{p_\infty}^\delta$ to $\mathfrak{L}_{p_\infty}$.   
    Note that $p_i\in V_{p_\infty}^\delta\cap \operatorname{sat}(I_j)$ for large $i$, and there exist leaves contained in  $\operatorname{sat}(I_j)\cap V_{p_\infty}^\delta$.
    Hence, it follows that $I_j\cap V^\delta_{p_\infty}\neq \emptyset$. 
    Let $y\in I_j\cap V^\delta_{p_\infty}$. Then $\pi_\delta^{-1}(\pi_\delta(y))\subset I_j$ since $I_j$ is a connected component of $U\cap I$. 
    Therefore, $I_j$ intersects with all leaves in $V^\delta_{p_\infty}$ which is a contradiction. Thus, $\operatorname{sat}(I_j)=U$. Hence, every leaf in $U$ intersects $I$ infinitely many times. However, it is impossible, since one can choose $I$ sufficiently small so that $\mathcal{B}\cap I=\emptyset$. Thus, all leaves in $\partial^*U$ are compact. 

   Let $\mathfrak{L}'\subset \partial^*U$ be a compact leaf, and let $W$ be a tubular neighborhood of $\mathfrak{L}'$ such that $W\cong \mathfrak{L}'\times (-\delta',\delta')$ and $\partial^*W:=\partial W\setminus(W\cap \partial \mathcal{C})$ is compact.  We will show that there exists a leaf in $U\cap W$.
   If not, every leaf in $U$ which intersects with $W$ would also intersect with $\partial^* W$. Since $\mathfrak{L}'\subset \partial^*U$, there exists an open line  segment $\Gamma$ generated by $-X$ with starting point $q_\infty\in \mathfrak{L}'$ such that $\Gamma\subset U$. 
   Let $q_i\in \Gamma$ such that $q_i$ monotone converges to $q_\infty$ on $\Gamma$. 
   Denote $\mathfrak{L}_{q_i}$ as the leaf in $U$ containing $q_i$. Then $\mathfrak{L}_{q_i}\cap \partial^*W\neq \emptyset$. Thus, we can pick a point $\sigma_i\in\mathfrak{L}_{q_i}\cap \partial^*W$. Using the fact that $\partial^*W$ is compact, we obtain $\sigma_i\to \sigma_\infty$ after passing to a subsequence. 
Set $S_i=\overline{\operatorname{sat}(q_i,q_\infty)}$, where $(q_i,q_\infty)$ is the segment of $\Gamma$ between $q_i$ and $q_\infty$. Then $S_{i+1}\subset S_i$ and $S_i$ is compact and saturated. Note that $[q_i,q]\in S_i$, then $\sigma_j\in \mathfrak{L}_{q_j}\subset S_i$ for $j\ge i$. 
Thus, $\sigma_\infty\in S_i$, we have $\mathfrak{L}_{\sigma_\infty}\in S_i$, where $\mathfrak{L}_{\sigma_\infty}$ is the leaf containing $\sigma_\infty$. Since $S_i\in \overline{U}$,  we know that $\mathfrak{L}_{\sigma_\infty}$ is compact. Sine $\sigma_\infty\in \partial^* W$, we have $\sigma_\infty\notin \mathfrak{L}'$, and thus, $\mathfrak{L}_{\sigma_\infty}\neq \mathfrak{L}'$. Combining this with the fact $\mathfrak{L}_{\sigma_\infty}\subset S_i$, it follows that $\mathfrak{L}_{\sigma_\infty}$ intersects with $(q_i,q_\infty)$ for all $i$. Therefore, $q_\infty\in \mathfrak{L}_{\sigma_\infty}$ as $\mathfrak{L}_{\sigma_\infty}$ is compact which isa contradiction. Consequently, there exists a leaf $\hat{\mathfrak{L}}$ in $U\cap W$. 

Notice that there is a retraction $\pi: W\to \mathfrak{L}'$, and $\pi|_{\hat{\mathfrak{L}}}:\hat{\mathfrak{L}}\to \mathfrak{L}'$ is a covering map. Since $\hat{\mathfrak{L}}\subset U$, $\hat{\mathfrak{L}}$ is diffeomorphic to $\mathcal{B}$ and intersects $\partial \mathcal{C}$ transversely. Thus, $\mathfrak{L}'$ is diffeomorphic to $\mathcal{B}$. Hence, $\mathfrak{L}'$ has nonempty boundary, and $\mathfrak{L}'$ also intersects $\partial \mathcal{C}$ transversely, i.e., $\mathfrak{L}'\in U$. 
It follows that $U=\mathcal{C}$. 
\end{proof}
\begin{lemma}
    $\mathcal{C}$ is topologically trivial, i.e., $\mathcal{C}\cong \mathcal{B}\times [0,1]$.
\end{lemma}
\begin{proof}
    Let $\Upsilon$ be the integral curve of $X$ in $\partial\mathcal{C}$ connecting $\mathcal{B}$ and $\mathcal{T}$. We will show that $\operatorname{sat}(\Upsilon)=\mathcal{C}$.

    First, for any leaf $\mathfrak{L}\subset\operatorname{sat}(\Upsilon)$, $\mathfrak{L}$ is compact, and, by \Cref{open}, there exists a tubular open neighborhood $V_{\mathfrak{L}}$ of $\mathfrak{L}$ and $V_\mathfrak{L}\subset\operatorname{sat}(\Upsilon)$. Thus, $\operatorname{sat}(\Upsilon)$ is open in $\mathcal{C}$. 

    Next, we will show that $\operatorname{sat}(\Upsilon)$ is closed in $\mathcal{C}$. Observe that $\{V_\mathfrak{L}| \mathfrak{L}\in \operatorname{sat}(\Upsilon)\}$ forms an open cover of $\Upsilon$ and we can choose $\delta$ in \Cref{open} small so that $\overline{V_\mathfrak{L}}\subset \operatorname{sat}(\Upsilon)$. Since $\Upsilon$ is compact, there exists a finite cover $\{V_\mathfrak{L}^1,\dots,V_\mathfrak{L}^j\}$ of $\Upsilon$. Moreover, $\operatorname{sat}(\Upsilon)=\cup_{i=1}^j V_\mathfrak{L}^i$. On the other hand, $\overline{V_\mathfrak{L}^i}$ is compact and contained in $\operatorname{sat}(\Upsilon)$. Thus, $\operatorname{sat}(\Upsilon)=\cup_{i=1}^j \overline{V_\mathfrak{L}^i}$ is closed in $\mathcal{C}$.

Now we need to show that $\Upsilon$ intersects each leaf only once. 
Let $ \Upsilon_i $ be the subset of $ \Upsilon $ consisting of points $ x $ for which the leaf $ \mathfrak{L}_x $ containing $ x $ intersects $ \Upsilon $ exactly $ i $ times. 
By \Cref{open}, $ \Upsilon_i $ is open. Thus, $\Upsilon=\cup_i \Upsilon_i$ is a union of disjoint open sets. Since $\Upsilon$ is connected, we have $\Upsilon=\Upsilon_i$ for some $i\in \mathbb{N}$. Note that $\mathcal{B}\cap \Upsilon$ only contains one point, it follows that $\Upsilon=\Upsilon_1$. 
    
    Therefore, $\mathcal{C}\cong \mathcal{B}\times \Upsilon$ is topologically trivial.
\end{proof}
Hence, we conclude that $M^n$ is diffeomorphic to $\mathfrak{L}\times \R\cong\R^n$, thereby proving \Cref{trivial top}. 

As mentioned previously, the trivial topology of $\mathcal{C}$ is also a direct consequence of the Reeb's stability theorem with boundary \cite[Theorem 3.1, p.112]{Godbillon} which, for convenience, we state below:

\begin{theorem} \label{Reeb thm with boundary}
    Let $ \mathfrak{F} $ be a codimension-1 foliation of class $ C^r $ (with $ r \geq 1 $) on a compact connected manifold $ \Omega^n $, which is transverse or tangential to the (possibly empty) boundary of $ \Omega^n $. 
    If $ \mathfrak{F} $ has a compact leaf $ \mathfrak{L} $ whose fundamental group is finite, then all its leaves are compact and have finite fundamental group as well. 
    Furthermore, if $ \mathfrak{F} $ is transversely orientable, this result still holds if we only assume that the leaf $ \mathfrak{L} $ has vanishing first Betti number. 
    In this case, $ \mathfrak{F} $ is a fibration of $\Omega^n$ over the circle $ S^1 $ or the interval $ [0, 1] $.
\end{theorem}

\end{document}